\renewcommand{\cite}[1]{\citep{#1}}
\definecolor{amber}{rgb}{1.0, 0.75, 0.0}
\definecolor{fire}{rgb}{0.99215686, 0.64705882, 0.05882353}
\definecolor{amaranth}{rgb}{0.9, 0.17, 0.31}
\definecolor{azure}{rgb}{0.0, 0.5, 1.0}
\definecolor{codegreen}{rgb}{0,0.6,0}
\definecolor{codegray}{rgb}{0.5,0.5,0.5}
\definecolor{codepurple}{rgb}{0.58,0,0.82}
\definecolor{backcolour}{rgb}{0.94,0.94,0.96}
\lstdefinestyle{mystyle}{
    backgroundcolor=\color{backcolour},   
    stringstyle=\color{codepurple},
    commentstyle=\color{codegreen},
    numberstyle=\tiny\color{codegray},
    keywordstyle=\bfseries\color{red!40!black},
    basicstyle=\ttfamily\footnotesize,
    numbers=left,      
}
\newcommand{\acksection}{\section*{Acknowledgments and Disclosure of Funding}}
\definecolor{GreyBlue1}{rgb}{0.62, 0.73, 0.85} 
\definecolor{GreyBlue2}{rgb}{0.20, 0.35, 0.60}
\newcommand{\bm}{{\mathbf{m}}}
\newcommand{\bM}{{\mathbf{M}}}
\newcommand{\bx}{{\mathbf{x}}}
\newcommand{\bZ}{{\mathbf{Z}}}
\newcommand{\bA}{\mathbf{A}}
\newcommand{\bX}{\mathbf{X}}
\newcommand{\approxP}{\mathrel{\stackrel{{\rm P}}{\mathrel{\scalebox{1.8}[1]{$\simeq$}}}}}
\let\temp\phi
\let\phi\varphi
\let\varphi\temp
\let\temp\epsilon
\let\epsilon\varepsilon
\let\varepsilon\temp
\newcommand{\norm}[1]{\left\lVert#1\right\rVert}
\newcommand{\R}{{\mathbb R}}
\newcommand{\E}{{\mathbb E}}
\newcommand{\Pcl}{\mathcal{P}}
\newcommand{\Scl}{\mathcal{S}}
\DeclareMathOperator{\diag}{diag}
\renewcommand{\epsilon}{\ensuremath{\varepsilon}}
\newtheoremstyle{plain}
  {3pt}   % ABOVESPACE
  {0pt}   % BELOWSPACE
  {}  % BODYFONT
  {0pt}       % INDENT (empty value is the same as 0pt)
  {\bfseries} % HEADFONT
  {.}         % HEADPUNCT
  {3pt} % HEADSPACE
  {}          % CUSTOM-HEAD-SPEC
\newtheoremstyle{claim}% name
 {3pt}%      Space above
 {3pt}%      Space below
 {}%         Body font
 {}%         Indent amount (empty = no indent, \parindent = para indent)
 {\bfseries}% Thm head font
 {.}%        Punctuation after thm head
 {3pt}%     Space after thm head: " " = normal interword space;
 \theoremstyle{plain}
\newtheorem{theorem}{Theorem}[section]
\newtheorem{lemma}[theorem]{Lemma}
\theoremstyle{definition}
\newtheorem{definition}[theorem]{Definition}
\title{Multi-layer State Evolution Under Random Convolutional Design}
\author[1]{Mara Daniels\footnote{The first two authors contributed equally to this work.}}
\author[2]{C\'edric Gerbelot$^*$}
\author[2]{Florent Krzakala}
\author[3]{Lenka Zdeborov\'a}
\affil[1]{College of Computer Science and Department of Mathematics, Northeastern University, 02120 Boston, USA}
\affil[2]{Ecole Polytechnique Fédérale de Lausanne (EPFL), Information, Learning and Physics (IdePHIcs) Laboratory, CH-1015 Lausanne, Switzerland}
\affil[3]{Ecole Polytechnique Fédérale de Lausanne (EPFL), Statistical Physics of Optimization (SPOC) Laboratory, CH-1015 Lausanne, Switzerland}
\date{}
\begin{document}

\maketitle

\begin{abstract}
 Signal recovery under generative neural network priors has emerged as a promising direction in statistical inference and computational imaging. Theoretical analysis of reconstruction algorithms under generative priors is, however, challenging. For generative priors with fully connected layers and Gaussian i.i.d. weights, this was achieved by the multi-layer approximate message (ML-AMP) algorithm via a rigorous state evolution. However, practical generative priors are typically convolutional, allowing for computational benefits and inductive biases, and so the Gaussian i.i.d. weight assumption is very limiting. In this paper, we overcome this limitation and establish the state evolution of ML-AMP for random convolutional layers. We prove in particular that random convolutional layers belong to the same universality class as Gaussian matrices. Our proof technique is of an independent interest as it establishes a mapping between convolutional matrices and spatially coupled sensing matrices used in coding theory.   
\end{abstract}

\section{Introduction}
In a typical signal recovery problem, one seeks to recover a data signal $x_0$ given access to measurements $y_0 = G_\theta(x_0)$, where the parameters $\theta$ of the signal model are known. In many problems, it is natural to view the measurement generation process as a composition of simple forward operators, or `layers.' In this work, we are concerned with multi-layer signal models of the form 
\begin{align}
	\label{eq:gen_prior}
	G_\theta(h) = \phi^{(1)}(W^{(1)} \phi^{(2)} ( W^{(2)} \hdots \phi^{(L)} ( W^{(L)} h) ) )).
\end{align}
where $W^{(l)} \in \mathbb{R}^{n_{l-1} \times n_l}$ are linear sensing matrices and where $\phi^{(l)}(z)$ are separable, possibly non-linear channel functions. In the $L=1$ case, this signal model naturally generalizes problems such as phase retrieval $\phi(z) = |z|$ or compressive sensing $\phi(z) = z$, and for multi-layer models $L > 1$, $G_\theta(h)$ may be viewed as a deep neural network. 

Recently, convolutional Generative Neural Networks (GNNs) have shown promise as generalizations of sparsity priors for a variety of signal processing applications \cite{bora2017compressed}. Motivated by this success, we take interest in a variant of the recovery problem \eqref{eq:gen_prior} in which some of the sensing matrices $W^{(l)}$ may be \textit{multi-channel convolutional} (MCC) matrices, having a certain block-sparse circulant structure which captures the convolutional layers used by many modern generative neural network architectures \cite{karras2018progressive, karras2019style}.

In this work, we develop an asymptotic analysis of the performance of an \textit{Approximate Message Passing} (AMP) algorithm \cite{donoho2009message} for recovery from multichannel convolutional signal models. This family of algorithms originates in statistical physics \cite{mezard2009information, zdeborova2016statistical} and allows to compute the marginals of an elaborate posterior distribution defined by an inference problem involving dense random matrices. A number of AMP iterations have been proposed for various inference problems, such as compressed sensing \cite{donoho2009message}, low-rank matrix recovery \cite{rangan2012iterative} or generalized linear modeling \cite{rangan2011generalized}. More recently, composite AMP iterations (ML-AMP) have been proposed to study multilayer inference problems \cite{manoel2017multi,aubin2019spiked}. Here we consider the ML-AMP proposed in \cite{manoel2017multi} to compute marginals of a multilayer generalized linear model, however the usual dense Gaussian matrices will be replaced by random convolutional ones. A major benefit of AMP lies in the fact that the asymptotic distribution of their iterates can be exactly determined by a low-dimensional recursion: the state evolution equations. This enables to obtain precise theoretical results for the reconstruction performance of the proposed algorithm.
%for a given family of prior and postulated distributions. 
Another benefit of such iterations is their low computational complexity, as they only involve matrix-multiplication and, in the separable case, pointwise non-linearities.

\begin{figure}[t]
    \centering
    \includegraphics[width=0.48\linewidth]{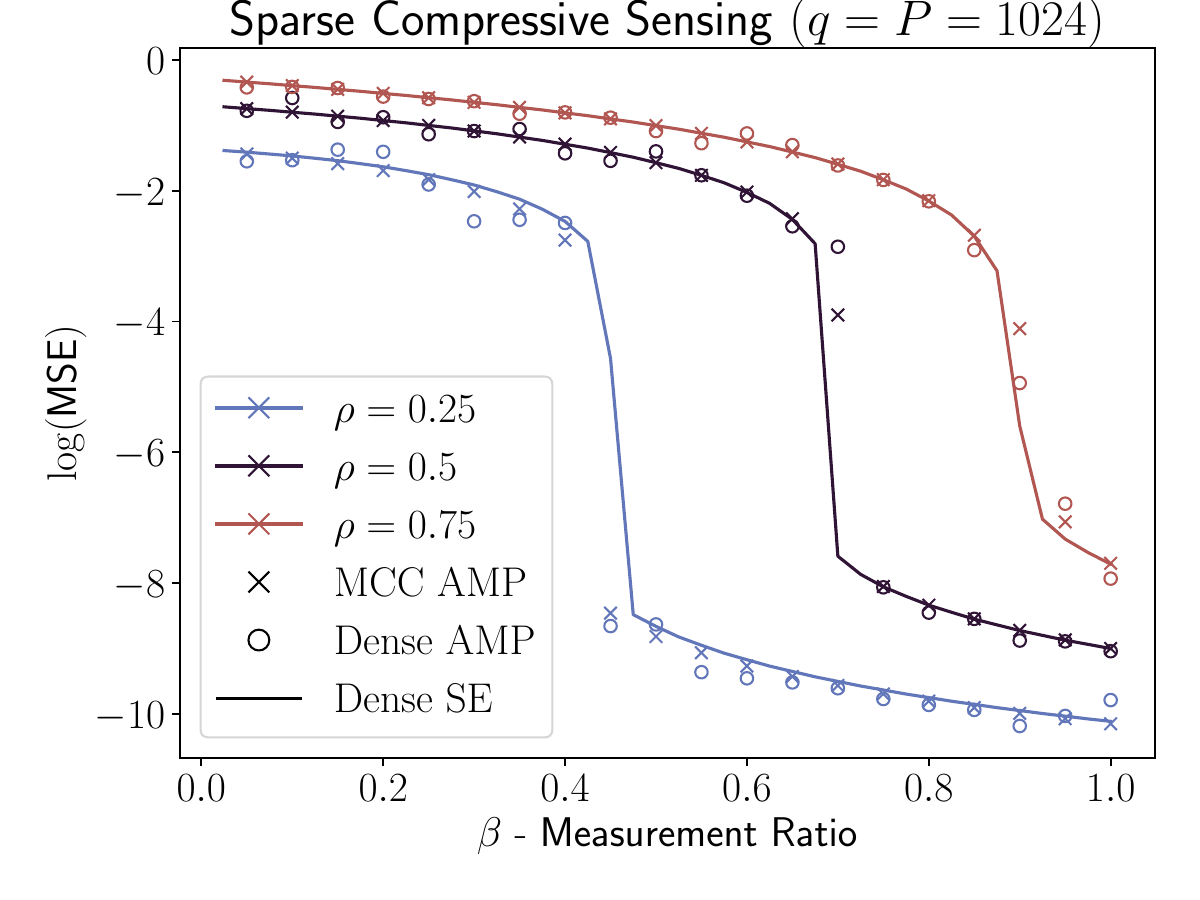}
    \includegraphics[width=0.51\linewidth]{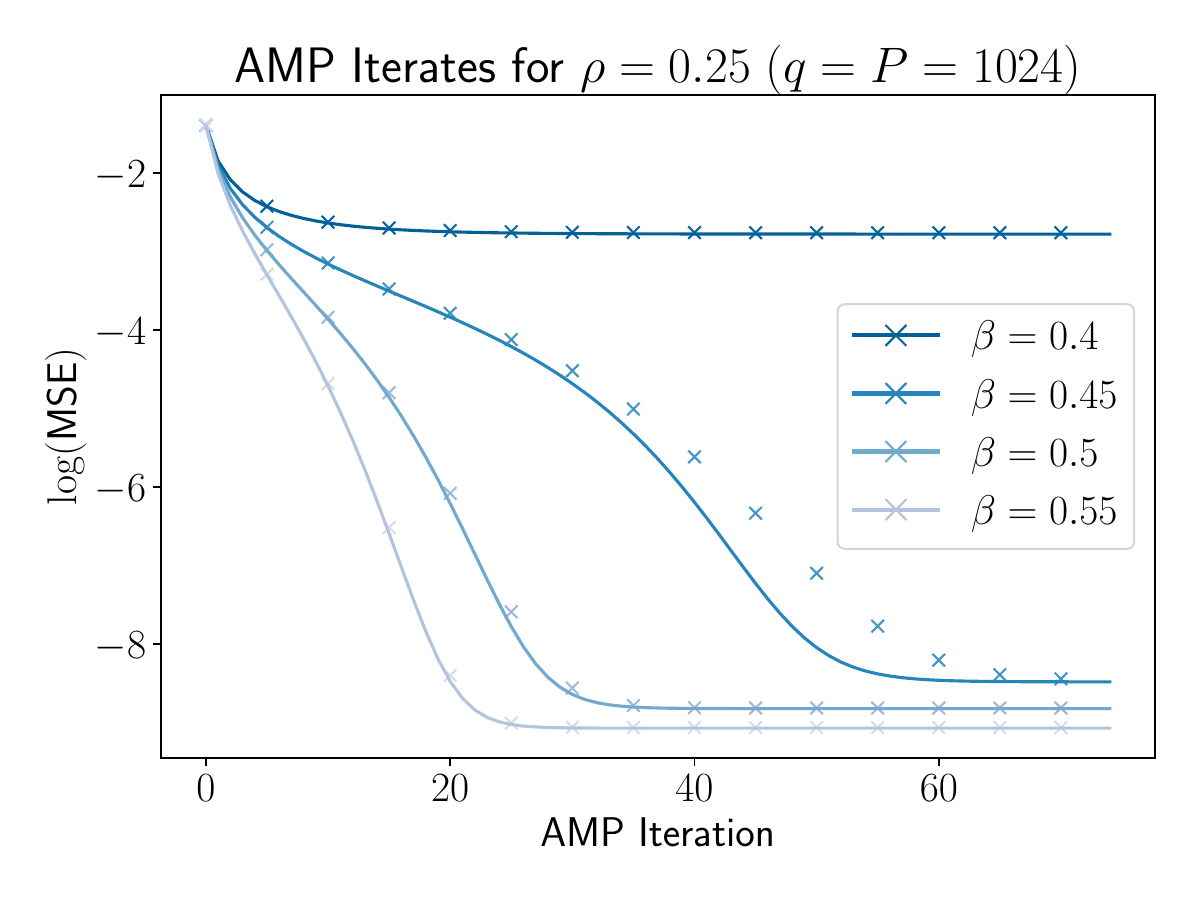}
    \caption{
    Agreement between the performance of the AMP algorithm run with random multichannel convolutional matrices and its state evolution as proven in this paper. 
    (\textbf{left}) Compressive sensing $y_0 = W x_0 + \zeta$ for noise $\zeta_i \sim \mathcal{N}(0, 10^{-4})$ and signal prior $x_0 \sim \rho \mathcal{N}(0, 1) + (1-\rho) \delta(x)$, where $W \in \mathbb{R}^{Dq \times Pq}$ has varying aspect ratio $\beta = D / P$. Crosses correspond to AMP evaluations for $W \sim \text{MCC}(D, P, q, k)$ according to Definition \ref{dfn:mcc}, averaged over 10 independent trials. Dots correspond to AMP evaluations for $W \in \mathbb{R}^{D \times P}$ with i.i.d. Gaussian entries $W_{ij} \sim \mathcal{N}(0, 1/P)$. Lines show the state evolution predictions when $W_{ij} \sim \mathcal{N}(0,1/Pq)$. The system size is $P = 1024$, $q=1024$, $k=3$, where $\beta$ and $D = \beta P$ vary. While our theorem treats the limit $P, D \to \infty$, $q, k = O(1)$, we observe strong empirical agreement even when $q \sim P$. In Appendix \ref{sec:q10-sparse-cs} we give the same figure for $q=10 \ll P$.
    (\textbf{right}) AMP iterates at $\rho = 0.25$ and $\beta$ near the recovery transition. Rather than showing these models have equivalent fixed points, we show a stronger result: the state evolution equations are equivalent \textit{at each iteration}.} \label{fig:sparse-cs}
\end{figure}

Previous works on AMP suggest that the state evolution is not readily applicable to our setting because its derivation requires strong independence assumptions on the coordinates of the $\{W^{(l)}\}$ which are violated by structured multi-channel convolution matrices. Despite this, we use AMP for our setting and rigorously prove its state evolution. Our main contributions are:
%\vspace{-5mm}
\begin{enumerate}
	\item We rigorously prove state evolution equations for models of the form \eqref{eq:gen_prior}, where weights are allowed to be either i.i.d. Gaussian or random structured MCC matrices, as in Definition \ref{dfn:mcc}. 
	\item For separable channel functions $\phi^{(l)}$ and separable signal priors, we show that the original ML-AMP of \cite{manoel2017multi} used with dense Gaussian matrices or random convolutional ones admits the same state evolution equations, up to a rescaling. Multi-layer MCC signal models can therefore simulate dense signal models while making use of fast structured matrix operations for convolutions. 
	
	\item The core of our proof shows how an AMP iteration involving random convolutional matrices may be reduced to another one with dense Gaussian matrices. We first show that random convolutional matrices are equivalent, through permutation matrices, to dense Gaussian ones with a (sparse) block-circulant structure. We then show how the block-circulant structure can be embedded in a new, matrix-valued, multilayer AMP with dense Gaussian matrices, the state evolution equations of which are proven using the results of \cite{gerbelot2021graph}, with techniques involving spatially coupled matrices \cite{krzakala2012statistical,javanmard2013state}.
\item  We validate our theory numerically and observe close agreement between convolutional AMP iterations and its state evolution predictions, as shown in Figure \ref{fig:sparse-cs} and in Section \ref{sec:numerics}. Our code can be used as a general purpose library to build compositional models and evaluate AMP and its state evolution. We make this code available at \url{https://github.com/mdnls/conv-ml-amp}.
\end{enumerate}

\section{Related Work}
AMP-type algorithms arose independently in the contexts of signal recovery and spin-glass theory. In the former case, \cite{donoho2009message} derives AMP for Gaussian compressive sensing. This approach was later generalized by \cite{rangan2011generalized} to recovery problems with componentwise \textit{channel functions} that may be stochastic and/or nonlinear, and generalized further by \cite{manoel2017multi} to multi-layer or compositional models. Due to the versatility of this approach, a wide variety of general purpose frameworks for designing AMP variants have since been popularized \cite{fletcher2018inference,baker2020tramp,gerbelot2021graph}. Proof techniques to show the concentration of AMP iterates to the state evolution prediction developed alongside new variants of the algorithm. In the context of spin-glass theory, Bolthausen's seminal work \cite{Bolthausen_2009} introduces a Gaussian conditioning technique used widely to prove AMP concentration. Following this approach, \cite{bayati2011dynamics,javanmard2013state,berthier2020state} treat signal models with dense couplings and generalized channel functions. More recently, a proof framework adaptable to composite inference problems was proposed in \cite{gerbelot2021graph}, which we use in our proof.

There has also been significant interest in relaxing the strong independence assumptions required by AMP algorithms on sensing matrix coordinates. In one direction, \textit{Vector AMP} (VAMP) algorithms target signal models whose sensing matrices are drawn from \textit{right orthogonally invariant} distributions. The development of VAMP algorithms followed a similar trajectory to that of vanilla AMP \cite{schniter2016vector,fletcher2018inference, rangan2019vector, baker2020tramp}. The MCC ensemble considered in this work is not right orthogonally invariant, but we observe strong empirical evidence that an analogous version of Theorem \ref{th:main} holds for VAMP as well, as described in Appendix \ref{sec:vamp-results}. In a second direction, there has been much interest in \textit{spatial coupling} sensing matrices, which were used to achieve the information-theoretically optimal performance in sparse compressive sensing \cite{donoho2013information, barbier2015approximate, krzakala2012statistical}, with complementary state evolution guarantees \cite{javanmard2013state}. The concept of spatial coupling and proofs of its performance originated in the literature of error correcting codes \cite{kudekar2011threshold,kudekar2013spatially}, where it developed from the so-called convolutional codes \cite{felstrom1999time}. The connection between spatial coupling and convolution layers of neural networks, that we establish in this paper, is as far as we know novel.

Another direction of related work is the design of generative neural network architectures, and correspondingly, the design of signal recovery procedures that can make use of new generative prior models. There is a wide variety of architectures for feedforward convolutional generative priors, which are often hand-crafted to be stably trained on real world datasets. For instance, the DC-GAN architecture, studied by \citet{bora2017compressed} in compressive sensing and superresolution tasks, achieves stable training through the use of multi-channel \textit{strided} convolutional layers and batch normalization \cite{radford2015unsupervised}. Following this, the PG-GAN architecture \cite{karras2018progressive} uses multichannel convolutional layers, upsampling layers, and a parameter free alternative to batch normalization. Recently, Style-GAN has emerged as a popular architecture for generating large, high-resolution images \cite{karras2019style}. The StyleGAN generator uses residual and skip connections to encourage a hierarchical image generation process, contributing to stable training even when generating high resolution images. Style-GAN and PG-GAN, and further domain-specific modifications, have been studied as priors for a variety of signal recovery problems \cite{daras2021ilo, gu2020image}. For simplicity and theoretical tractibility, we do not consider fine-grained practical modifications like batch normalization or strided convolution, focusing instead on the essential elements of simple convolutional networks. Lastly, while our focus is on feedforward convolutional priors such as GAN/VAE networks, there is growing interest in alternative approaches to signal recovery under neural network priors, such as normalizing flows \cite{durk2018glow, asim2020glowip} and score-based generative models \cite{song2020scorebased,jalal2021posteriorsampling}. These approaches fall outside the scope of our work and may be interesting directions for future investigation. 

\section{Definition of the problem}

\subsection{Multi-channel Convolutional Matrices}
We focus our attention on \textit{multichannel convolution matrices} that have \textit{localized convolutional filters}. In this section, we introduce our notation and define the random matrix ensembles which are relevant to our result. 
We consider block structured signal vectors $x \in \mathbb{R}^{P q}$ of the form  $x = [x^{(i)}]_{i=1}^{P}$, and we refer to the blocks $x^{(i)} \in \mathbb{R}^{q}$ as `channels.' For any vector of dimension $d$, we denote by $\Pcl_d \in \mathbb{R}^{d \times d}$ the cyclic coordinate permutation matrix of order $d$, whose coordinates are $\langle e_i , \Pcl_{d} e_j \rangle = \mathbf{1}[i = j+1]$. For a block-structured vector $x \in \mathbb{R}^{Pq}$, we denote by $\Pcl_{P, q} \in \mathbb{R}^{Pq \times Pq}$ the block cyclic permutation matrix satisfying $(\Pcl_{P, q} x)^{(i)} = x^{(i+1)}$ for $1 \leq i < P$, and $(\Pcl_{P, q} x)^{(P)} = x^{(1)}$. Similarly, we denote by $\Scl_{i, j} \in \mathbb{R}^{Pq \times Pq}$ the swap permutation matrix which exchanges blocks $i, j$: $[\Scl_{i, j} x]^{(i)} = x^{(j)}$, $[\Scl_{i, j} x]^{(j)} = x^{(i)}$, and $[\Scl_{i, j} x]^{(k)} = x^{(k)}$ for $k \not = i, j$. Last, given a vector $\omega \in \mathbb{R}^k$ for $k \leq q$, denote by $\texttt{Zero-Pad}_{q, k}(\omega)$ the vector whose first $k$ coordinates are $\omega$, and whose other coordinates are zero. 
\begin{align*}
    \texttt{Zero-Pad}_{q, k}(\omega) = \begin{bmatrix} \omega_1 & \omega_2 & \ldots & \omega_k & 0 & \ldots & 0 \end{bmatrix} \in \mathbb{R}^{q}.
\end{align*}
We define the following ensemble for random multi-channel convolution matrices.
\begin{definition}[Gaussian i.i.d. Convolution] \label{dfn:c-ensemble}
Let $q \geq k$ be integers. The convolutional ensemble $\mathcal{C}(q, k)$ contains random circulant matrices $C \in \mathbb{R}^{q \times q}$ whose first row is given by $C_1 = \texttt{Zero-pad}_{q, k}[\omega]$
where $\omega \in \mathbb{R}^{k}$ has i.i.d. Gaussian coordinates $\omega_{i} \sim \mathcal{N}(0, 1/k)$. The remaining rows $C_i$ are determined by circulant structure, ie. $C_i = \Pcl_q ^{i-1}  \texttt{Zero-pad}_{q, k}[\omega]$. 
\end{definition}
Random multi-channel convolutions are block-dense matrices with independent $\mathcal{C}(q, k)$ blocks.
\begin{definition}[Multi-channel Gaussian i.i.d. Convolution] \label{dfn:mcc}
Let $D, P \geq 1$ and $q \geq k \geq 1$ be integers. The random multi-channel convolution ensemble $\mathcal{M}(D, P, k, q)$ contains random block matrices $M \in \mathbb{R}^{Dq \times Pq}$ of the form
\begin{align*}
    M = \frac{1}{\sqrt{P}} \begin{bmatrix} 
    C_{1,1} & C_{1,2} & \ldots & C_{1, P} \\
    C_{2, 1} & \ddots & &\vdots  \\
    \vdots & & & \\
    C_{D, 1} & \ldots & & C_{D, P}
    \end{bmatrix}
\end{align*}
where each $C_{i, j} \sim \mathcal{C}(q, k)$ is sampled independently. 
\end{definition}

\begin{figure}
    \centering
    \includegraphics[width=0.5\textwidth]{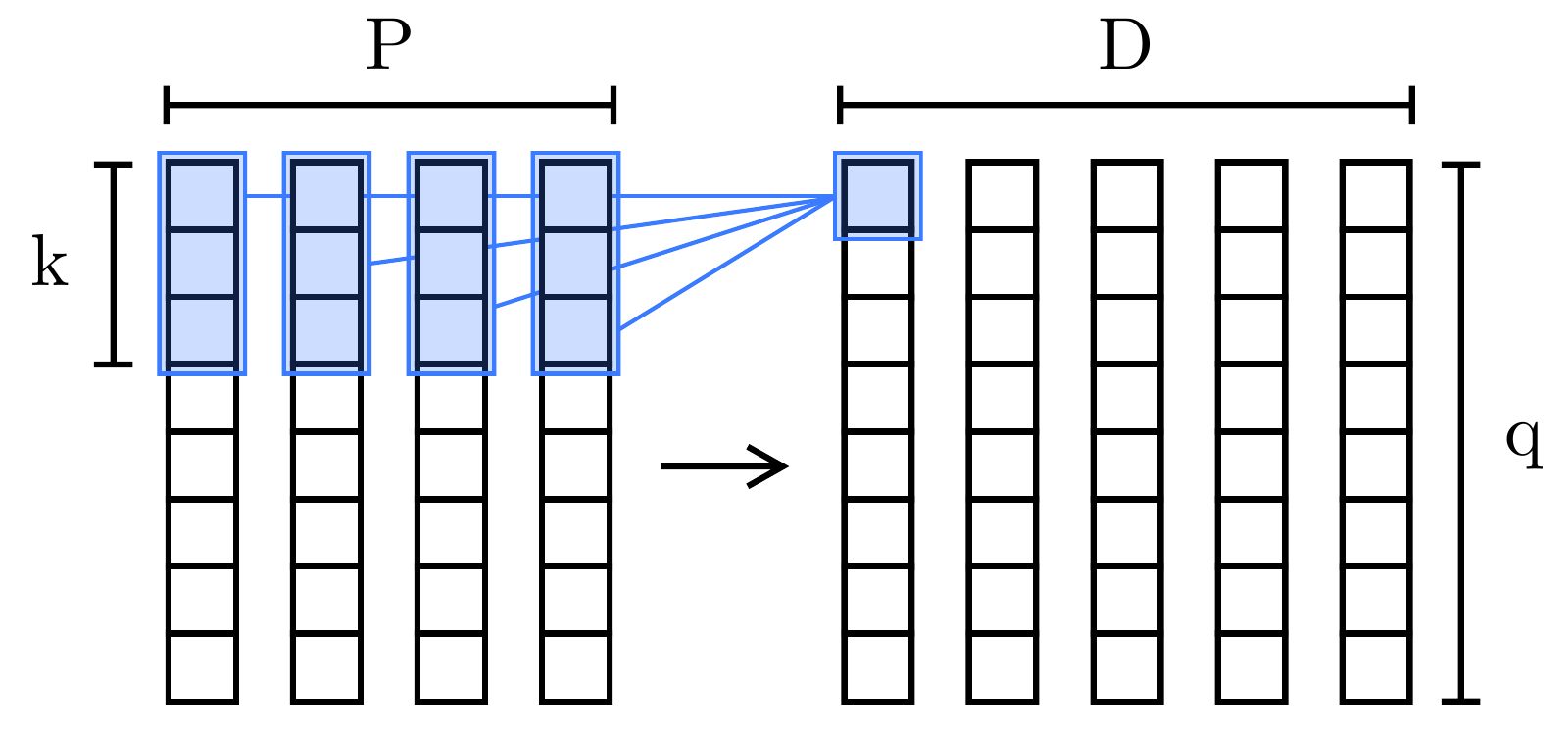}
    \caption{MCC matrices operate on $Pq$ dimensional input data, composed of $q$-dimensional signals for each of $P$ separate channels. The $i$-th output channel is a linear combination of convolutional features extracted from input channels, where $k$ is the convolutional filter size: $y^{(i)} = \sum_{j = 1 \ldots P} C_{ij} x^{(j)}$. Blue boxes show linear dependencies between signal coordinates.}
    \label{fig:conv-demo}
\end{figure}

Fig.~\ref{fig:conv-demo} gives a graphical explanation of the link between these matrices and the convolutional layers. The parameter $P$ ($D$) is the number of input (output) channels, $q$ is the dimension of the input and $k$ the filter size. 

\subsection{Thermodynamic-like Limit and Finite-size Regimes} \label{sec:finite-sizes}
We prove our main result in a thermodynamic-like limit $D, P \to \infty$ while $\beta = D/P$ is fixed and $q, k = O(1)$. From a practical perspective, convolutional layers in deep neural networks often use large channel dimensions ($D, P \gg 1$), large signal dimensions ($q \gg 1$), and a small filter size ($k = O(1)$). As an example, we show in Figure \ref{fig:dcgan-table} the sizes of convolutional layers used by the DC-GAN architecture to generate LSUN images \citep[Figure 1]{radford2015unsupervised}. 

Interestingly, our theoretical predictions do not depend explicitly on the \textit{relative} sizes of $q$ and $(D, P)$. We observe empirically that these predictions become accurate at finite sizes of $(D, P)$ which may seem small relative to $q$, and which are realistic from a practical neural network perspective. For example, in Figure \ref{fig:sparse-cs}, we observe strong empirical agreement with predictions for $q = P = 1024$ as $\beta$ and $D = \beta P$ vary.

\begin{figure}
\begin{center}
\begin{tabular}{lllll}
Layer             & $D$  & $P$ & $q$    & $k$ \\ \hline
$1 \rightarrow 2$ & 1024 & 512 & $4^2$  & 1   \\ \hline
$2 \rightarrow 3$ & 512  & 256 & $8^2$  & 5   \\ \hline
$3 \rightarrow 4$ & 256  & 128 & $16^2$ & 5   \\ \hline
$4 \rightarrow 5$ & 128  & 3   & $64^2$ & 5  
\end{tabular}
\end{center}
    \caption{System sizes for convolutional layers in a DC-GAN architecture used to generate LSUN images \citep[Figure 1]{radford2015unsupervised}. These are \textit{not} directly comparable to MCC matrices, as DCGAN uses \textit{fractionally strided convolutions}, which can be thought of as a composition of an MCC matrix with superresolution. However, they give a reasonable picture of the sizes of typical layers in convolutional neural networks.}
    \label{fig:dcgan-table}
\end{figure}

\subsection{Multi-layer AMP}
\label{sec:MLAMP}
In this section, we define a class of probabilistic graphical models (PGMs) that captures the inference problems of interest, and we state the Multi-layer Approximate Message Passing (ML-AMP) \cite{manoel2017multi} iterations, which can be used for inference on these PGMs. We consider the following signal model.
\begin{definition}[Multi-layer Signal Model]\label{dfn:ml-glm}
Let $\{W^{(l)}\}_{1 \leq l \leq L}$ be matrices of dimension $W^{(l)} \in \mathbb{R}^{n_{l-1} \times n_l}$. Let $\{\phi^{(l)}_\zeta(z) \}_{1 \leq l \leq L}$ be scalar channel functions $\phi^{(l)}_\zeta: \mathbb{R} \to \mathbb{R}$ for which $z$ is the estimation quantity and $\zeta$ represents channel noise. We write $\phi^{(l)}_\zeta(z)$ for vectors $z \in \mathbb{R}^{n_{l-1}}$ to indicate the coordinatewise application of $\phi^{(l)}$. The multi-layer GLM signal model is given by
\begin{align*}
    y = \phi^{(1)}_\zeta(W^{(1)} \phi^{(2)}_\zeta ( W^{(2)} ( \ldots \phi^{(L)}_\zeta W^{(L)} x))).
\end{align*}
We assume $x \in \mathbb{R}^{n_L}$ follows a known separable prior, $x_i \sim P_X(x)$ i.i.d., and that $\zeta \sim \mathcal{N}(0, 1)$.
\end{definition}
The full estimation quantities of the model are the coordinates of the vectors $\{h^{(l)}\}_{1 \leq l \leq L}$, $\{z^{(l)}\}_{1 \leq l \leq L}$, which are related by
\begin{align} \label{eqn:signal-model}
    y_\mu & = \phi^{(1)}_\zeta(z^{(1)}) & \quad z^{(1)}_\mu = \sum_i W^{(1)}_{\mu i} h^{(1)}_i\, , \\ \nonumber
    h^{(1)}_i & = \phi^{(2)}_\zeta(z^{(2)}) & \quad z^{(2)}_\mu = \sum_i W^{(2)}_{\mu i} h^{(2)}_i \, ,\\ \nonumber
    & \vdots & \\ \nonumber
    h^{(L-1)}_i & = \phi^{(L)}_\zeta(z^{(L)} ) & \quad z^{(L)}_\mu = \sum_i W^{(L)}_{\mu i} x _i\, 
\end{align}
and the corresponding conditional probabilities, which define the factor nodes of the underlying PGM, are given by 
\begin{align*}
    P^{(l)}( h \mid z ) & = \int \, d\zeta \, e^{-\frac{1}{2} \zeta^2} \delta(h - \phi_\zeta(z))\, .
\end{align*}
To compute the posterior marginals, ML-AMP iteratively updates the parameters of independent 1D Gaussian approximations to each marginal. Each coordinate $h^{(l)}_i(t)$ has corresponding parameters $\{A^{(l)}_{i}(t), B^{(l)}_{i}(t) \}$ and each $z^{(l)}_\mu(t)$ has corresponding $\{V^{(l)}_{\mu}(t), \omega^{(l)}_{\mu}(t)\}$, where $t \geq 1$ indexes the ML-AMP iterations. The recursive relationship between these parameters is defined in terms of scalar \textit{denoising functions}, $\hat{h}^{(l)}$ and $g^{(l)}$, which compute posterior averages of the estimation quantities given their prior parameters.  

In general, these denoising functions can be chosen (up to regularity assumptions) to adjust ML-AMP's performance in applied settings, such as in \cite{metzler2015bm3d}, and in these cases the denoisers may be nonseparable vector valued functions.  However, in the separable, Bayes-optimal regime where $P_x(x)$ and $P^{(l)}(h \mid z)$ are known, the optimal denoisers are given by,
\begin{align} \label{eqn:bayes-denoiser}
\hat{h}_{i}^{(l)}(t+1) & := \partial_{B} \log \mathcal{Z}^{(l+1)}(A_{i}^{(l)}, B_{i}^{(l)}, V_{i}^{(l+1)}, \omega_{i}^{(l+1)}) \\ \nonumber
\sigma_{i}^{(l)}(t+1) & := \partial_{B} \hat{h}_{i}^{(l)}(t+1) \\ \nonumber
g_{\mu}^{(l)}(t) & := \partial_{\omega} \log \mathcal{Z}^{(l)}(A_{\mu}^{(l-1)}, B_{\mu}^{(l-1)}, V_{\mu}^{(l)}, \omega_{\mu}^{(l)})  \\ \nonumber
\eta_{\mu}^{(l)}(t) & := \partial_{\omega} g_{\mu} ^{(l)} (t) \\ \nonumber 
    \mathcal{Z}^{(l)}(A, B, V, \omega) & := \frac{1}{\sqrt{2 \pi V}}\int P^{(l)}(h \mid z) \exp\left( B h - \frac{1}{2} A h^2 - \frac{(z-\omega)^2}{2 V } \right) \, dh \, dz 
\end{align} 
where $2 \leq L \leq L-1$, $t \geq 2$ and the prior parameters on the right hand side are taken at iteration $t \geq 2$. The corresponding ML-AMP iterations are given by,
\begin{align} \label{eqn:ml-amp-iterate}
    V^{(l)}_{\mu}(t) & = \sum_{i} [W^{(l)}_{\mu i}]^2\,  \sigma^{(l)}_{i} (t) & \qquad 
    \omega^{(l)}_{\mu} (t) & = \sum_i W^{(l)}_{\mu i}\, \hat{h}^{(l)}_{i}(t) - V^{(l)}_{\mu}(t) \, g^{(l)}_{\mu}(t-1) \\ \nonumber
    A^{(l)}_{i} (t) & = -\sum_\mu [W^{(l)}_{\mu i}]^2 \, \eta_\mu^{(l)}(t) & \qquad 
    B^{(l)}_{i} (t) & = \sum_\mu W^{(l)}_{\mu i} g^{(l)}_{\mu}(t) + A^{(l)}_{i}(t) \hat{h}_{i}^{(l)}(t).
\end{align}
For the boundary cases $t=1$, $l = 1$, and $l=L$, the iterations \eqref{eqn:bayes-denoiser}, \eqref{eqn:ml-amp-iterate} are modified as follows. 
\begin{enumerate}
    \item At $t=1$, we initialize $B^{(l)}_i \sim P^{(l)}_{B_0}$ and $\omega^{(l)}_\mu \sim P^{(l)}_{\omega_0}$, where $P^{(l)}_{B_0}$, $P^{(l)}_{\omega_0}$ are the distributions of the signal model parameters \eqref{eqn:signal-model} when $x_i \sim P_X$. We take $(A^{(l)}_i)^{-1} = \text{Var}(B^{(l)}_i)$ and $V^{(l)}_\mu = \text{Var}(\omega^{(l)}_\mu)$. 
    \item At $l=1$, the denoiser $g_\mu^{(1)}(t) = \partial_\omega \log \mathcal{Z}^{(1)}(y, V^{(1)}_\mu, \omega^{(1)}_\mu)$, where
    \begin{align*}
        \mathcal{Z}^{(1)}(y, V^{(1)}_\mu, \omega^{(1)}_\mu) = \frac{1}{\sqrt{2 \pi V}} \int P^{(1)}(y \mid z) \exp\left(-\frac{(z - \omega_\mu^{(1)})^2}{2 V^{(1)}_\mu}\right)  \, dz.
    \end{align*}
    \item At $l=L$, the denoiser $\hat{h}^{(L)}(t) = \partial_B \log \mathcal{Z}^{(L)}(A^{(L)}_i, B^{(L)}_i)$, where
    \begin{align*}
        \mathcal{Z}^{(L)}(A^{(L)}_i, B^{(L)}_i) = \int P_X(h) \exp \left( B^{(L)}_\mu h - \frac{1}{2} A^{(L)}_\mu h^2 \right) \, dh\, .
    \end{align*}
\end{enumerate}

\subsubsection{Computational Savings of MCC Matrices} 
As ML-AMP requires only matrix-vector products, its computational burden can be significantly reduced when using structured and/or sparse sensing matrices. In our setting, multi-channel convolutions $M \sim \text{MCC}(D, P, q, k)$ have $DP k$ independent, nonzero coordinates, compared to $DP q^2$ nonzero coordinates of a Gaussian i.i.d. matrix. Typically, $k$ represents the size of a localized filter applied to a larger image, with $k \ll q$  \citep[Section 3.4]{gonzalez2008digital}, leading to significant space savings by a factor $k/q^2$. This same is true in convolutional neural networks, where the use of localized convolutional filters represents an inductive bias towards localized features that is considered a key aspect of their practical success \cite{krizhevsky2012imagenet, zeiler2014visualizing}.

In addition to space savings, specialized matrix-vector product implementations can reduce the time complexity of ML-AMP with MCC sensing matrices. Simple routines for sparse matrix-vector products run in time proportional to the number of nonzero entries, resulting in the same $k/q^2$ constant factor speed up for MCC matrix-vector products. Alternatively, if $k \gg \log q$, then a simple algorithm using a fast Fourier transform for convolution-vector products yields time complexity $O(DP q \log q)$. Such an algorithm is sketched Appendix \ref{sec:fast-mcc-vec}.

\section{Main result} \label{sec:main-result}
We now state our main technical result, starting with the set of required assumptions.
\begin{itemize}[wide=1pt]
\label{set:assump}
    \item[(A1)] for any $1 \leqslant l \leqslant L$, the function $\phi^{l}$ is continuous and there exists a polynomial $b^{(l)}$ of finite order such that, for any $x \in \mathbb{R}$, $\vert \phi^{(l)}(x)\vert \leqslant \vert b^{(l)}(x)\vert$
    \item[(A2)] for any $1\leqslant l \leqslant L$, the matrix $\mathbf{W}^{(l)}$ is sampled from the ensemble $\mathcal{M}(D^{l},P^{l},k^{l},q^{l})$ where $P^{l}q^{l} = D^{l-1}q^{l-1}$
    \item[(A3)] the iteration \ref{eqn:ml-amp-iterate} is initialized with a random vector independent of the mixing matrices verifying $\frac{1}{N}\norm{\mathbf{h}_{0}}_{2}^{2}<+\infty$ almost surely
    \item[(A4)] for any $1 \leqslant l \leqslant L$, $D_{l},P_{l} \to \infty$ with constant ratio $\beta_{l} = D_{l}/P_{l}$, with finite $q_{l}$.
\end{itemize}   
Under these assumptions, we may define the following \emph{state evolution} recursion
\begin{definition}[State Evolution] \label{dfn:state-evolution}
Consider the following recursion,
\begin{align}
\label{eq:SE1}
    \hat{m}^{(l)}(t) & = - \beta^{(l)} \mathbb{E}^{(l)}[\partial_\omega g(\hat{m}^{(l-1)}, \hat{m}  b, \tau_1 - m^{(l)}, h)] \\
    m^{(l-1)}(t+1) & = \mathbb{E}^{(l)}[h\,  \hat{h}^{(l-1)}(\hat{m}^{(l-1)}, \hat{m}b, \tau_1 - m^{(l)}, h)],
    \label{eq:SE2}
\end{align} 
where $\tau^{(l)}$ is the second moment of $P_{B_0}^{(l)}$, where the right hand side parameters are taken at time $t$, and the expectations $\mathbb{E}^{(l)}$ are taken with respect to
\begin{align*}
    P^{(l)}(w, z, h, b) =  P^{(l)}_{\text{out}}(h \mid z) \, \mathcal{N}(z; w, \tau^{(l)}-m^{(l)}) \, \mathcal{N}(w; 0, m^{(l)}) \, \mathcal{N}(b; \hat{m}^{(l-1)} h, \hat{m}^{(l-1)}). 
\end{align*}
\end{definition}
At $t=1$, the state evolution is initialized at $\kappa^{(l)} = 0$ and $(\hat{\kappa}^{(l)})^{-1} = \tau^{(l)}$. At the boundaries $l = 1, L$, the expectations are modified analogously to the ML-AMP iterations as described by \citet{manoel2017multi}. 
We then have the following asymptotic characterization of the iterates from the convolutional ML-AMP algorithm
\begin{theorem}
\label{th:main}
Under the set of assumptions (A1)-(A4), for any sequences of uniformly pseudo-Lipschitz functions $\psi^{N}_{1},\psi^{N}_{2}$ of order $k$, for any $1 \leqslant l \leqslant L$ and any $t \in \mathbb{N}$, the following holds 
\begin{align} &\frac{1}{D_{l}q_{l}}\sum_{i=1}^{D_{l}q_{l}}\psi_{1}(\omega_{i}^{(l)}(t)) \approxP \mathbb{E}\left[\psi_{1}\left(Z^{l}(t)\right)\right] \\
&\frac{1}{P_{l}q_{l}}\sum_{i=1}^{P_{l}q_{l}}\psi_{2}(B_{i}^{(l)}(t)) \approxP \mathbb{E}\left[\psi_{2}\left(\hat{Z}^{l}(t)\right)\right]
\end{align}
where $Z^{l}(t) \sim \mathcal{N}(0,\kappa^{l}(t))$, $\hat{Z}^{l}(t) \sim \mathcal{N}(0,\hat{\kappa}^{l}(t))$ are independent random variables.
\end{theorem}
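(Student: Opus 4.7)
The strategy is to reduce the convolutional multi-layer AMP iteration to a matrix-valued spatially coupled AMP iteration with dense Gaussian blocks, then invoke the state evolution framework of \cite{gerbelot2021graph} and extract the marginal statistics actually tested by pseudo-Lipschitz functions. I would carry this out in three stages.

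First, I would establish a permutation equivalence between the MCC ensemble and a block-circulant Gaussian ensemble. A matrix $W \sim \mathcal{M}(D, P, k, q)$ admits the decomposition
\begin{align*}
W \;=\; \sum_{s=0}^{k-1} \Omega_s \otimes \Sigma_s,
\end{align*}
where $\Omega_0, \ldots, \Omega_{k-1}$ are independent $D \times P$ Gaussian matrices with i.i.d.\ $\mathcal{N}(0, 1/(Pk))$ entries and $\Sigma_s$ is the deterministic $q \times q$ cyclic shift by $s$. Reordering rows and columns by the shuffle permutations that interchange the roles of within-block and between-block indices, $W$ becomes a $q \times q$ block matrix with blocks of size $D \times P$, arranged in a banded circulant pattern using $k$ distinct, independent Gaussian blocks. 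Because the ML-AMP iterates transform covariantly under such permutations and the coordinatewise denoisers and pseudo-Lipschitz test functions are invariant under them, the convergence claim for $W$ is equivalent to the same claim for the permuted block-circulant matrix with appropriately permuted iterates.

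Second, I would reinterpret the permuted iteration as a matrix-valued multilayer AMP. Viewing the iterate at layer $l$ as a matrix in $\mathbb{R}^{P_{l} \times q_{l}}$ (channels by spatial positions), the linear step becomes a sum of Gaussian mixings between nearby spatial positions in the cyclic sense, which is precisely the structure of a block spatially coupled sensing matrix as in \cite{krzakala2012statistical, javanmard2013state}. The separable denoisers and Onsager corrections act coordinatewise and therefore preserve this reshaping. The composite structure induced by the multi-layer model fits into the graph-valued AMP framework of \cite{gerbelot2021graph}, which delivers a matrix-valued state evolution indexed by $q_{l} \times q_{l}$ overlap matrices $\kappa^{(l)}(t), \hat{\kappa}^{(l)}(t)$ in the limit $D_{l}, P_{l} \to \infty$ with $q_{l}$ fixed.

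Third, I would reduce the matrix-valued state evolution to the scalar recursion of Definition \ref{dfn:state-evolution}. Exploiting the cyclic symmetry of the block-circulant structure and the exchangeability of the initialization across spatial positions, I would show by induction on $t$ and $l$ that all overlap matrices are circulant, so in particular their diagonal entries are constant along the diagonal. Since the pseudo-Lipschitz test functions in Theorem \ref{th:main} depend only on the marginal distribution of a single coordinate, only these diagonal entries contribute to the limit. Tracing the $q_{l} \times q_{l}$ recursion and bookkeeping the $1/(Pk)$ versus $1/(Pq)$ normalizations confirms that the resulting scalar recursion is exactly \eqref{eq:SE1}--\eqref{eq:SE2}. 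The main obstacle is this last step: controlling the interaction between the cyclic-symmetric block structure and the multilayer coupling to certify that the diagonal entries of the overlap matrices decouple from the off-diagonal ones, and verifying that the modified boundary layers $l=1, L$ reduce correctly through the reshaping.
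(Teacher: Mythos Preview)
Your three-stage strategy matches the paper's: permutation to block-circulant form, embedding into a matrix-valued multilayer AMP amenable to \cite{gerbelot2021graph}, then reduction to the scalar state evolution. The main tactical difference lies in the embedding. You propose to keep the block-circulant (spatially coupled) structure and track $q_l \times q_l$ overlap matrices, arguing by cyclic symmetry that they stay circulant so that only diagonal entries matter. The paper instead exploits that, because the filter coordinates are i.i.d., all $k$ nonzero Gaussian blocks of the permuted matrix have \emph{identical} statistics; it rewrites the block-circulant product $\tilde{A}\tilde{f}(\cdot)$ as the product of a single dense $D_l q_l \times P_l q_l$ Gaussian matrix with a matrix-valued nonlinearity whose columns are $k$ zero-padded copies (up to cyclic shifts, which drop out under separability) of the original vector output. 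This sidesteps the decoupling obstacle you flag: by separability and orthogonality of the permutation matrices, the Onsager matrix of this embedded iteration is the original scalar Onsager times $I_{q_l}$, so the overlap matrices are automatically scalar multiples of the identity rather than merely circulant. Your route via cyclic symmetry should also succeed but requires the inductive argument you identify as the hard step; the paper's route avoids it at the price of an explicit Onsager computation, which your sentence ``Onsager corrections act coordinatewise and therefore preserve this reshaping'' understates. Finally, to arrive at the Bayes-optimal recursion of Definition~\ref{dfn:state-evolution}, the paper verifies Lipschitzness of the denoisers and invokes the Nishimori identities to collapse the generic non-Bayes-optimal state evolution parameters $(\nu,\hat\nu,\kappa)$ down to $(m,\hat m)$; this reduction is absent from your outline.
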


\subsection{Proof Sketch}
The proof of Theorem \ref{th:main}, which is given in Appendix \ref{sec:AppA}, has two key steps. First, we construct permutation matrices $U, \tilde{U}$ such that for $W \sim \text{MCC}(D, P, q, k)$, the matrix $\tilde{W} = U W \tilde{U}^T$ is a block matrix whose blocks either have i.i.d. Gaussian elements or are zero valued, and has a block-circulant structure. The effect of the permutation is that entries of $\tilde{W}$ which are correlated due to circulant structure of $W$ are relocated to different blocks. Once these permutation matrices are defined, we define a new, matrix-valued AMP iteration involving the dense Gaussian matrices obtained from the permutations, and whose non-linearities account for the block-circulant structures and the permutation matrices. The state evolution of this new iteration is proven using the results of \cite{gerbelot2021graph}. This provides an explicit example of how the aforementioned results can be used to obtain rigorous, non Bayes-optimal SE equations on a composite 
AMP iteration. The separability assumption is key in showing that the AMP iterates obtained with the convolutional matrices can be \emph{exactly}
embedded in a larger one. Note that this is a stronger result than proving SE equations for an algorithm that computes marginals of a random convolutional posterior: we show the SE equations are the same as in the dense case. We finally invoke the Nishimori conditions, see e.g. \cite{krzakala2012statistical}, to simplify the generic, non Bayes-optimal SE equations to the Bayes-optimal ones. 

The idea of embedding a non-separable effect such as a block-circulant structure or different variances in a mixing matrix is the core idea in the proofs of SE equations for spatially coupled systems, notably as done in \cite{javanmard2013state,donoho2013information}.
We note that in the numerical experiments shown at Figure \ref{fig:sparse-cs}, the parameter $q$, considered finite in the proof, is actually comparable to the number of channel, considered to be extensive. Empirically we observe that this does not hinder the validity of the result, something that was also observed in the spatial coupling literature, e.g. \cite{krzakala2012statistical}, where large number of different blocks in spatially coupled matrices were considered, with convincing numerical agreement. \\
\noindent
The existence of permutations matrices verifying the property described above is formalized in the following lemma:
\begin{lemma}[Permutation Lemma] \label{lem:permutation}
Let $W \sim \mathcal{M}(D, P, k, q)$ be a multi-channel convolution matrix. There exist row and column permutation matrices $U \in \mathbb{R}^{Dq \times Dq}$, $\tilde{U} \in \mathbb{R}^{Pq \times Pq}$ such that $\tilde{W} = U W \tilde{U}^T$ is a block-convolutional matrix with dense, Gaussian i.i.d. blocks. That is,
\begin{align*} 
    \tilde{W} = \frac{1}{\sqrt{k}} \begin{bmatrix}
    A^{(1)} & A^{(2)} & \ldots & A^{(k)} & & & & &  \\
     & A^{(1)} & A^{(2)} & \ldots & A^{(k)} & & & & \\
    \vdots & & A^{(2)} & \ldots & A^{(k)} & & &&  \\
    & & & & \ddots & & & \vdots \\
    A^{(2)} & A^{(3)} & \ldots & A^{(k)} & & & & & A^{(1)}
    \end{bmatrix} 
\end{align*}
where each $A^{(s)} \in \mathbb{R}^{D, P}$, $1 \leq s \leq k$ has i.i.d. $\mathcal{N}(0, 1/P)$ coordinates.
\end{lemma}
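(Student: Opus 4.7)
}

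The plan is to exhibit explicit permutations that swap the roles of the channel index and the within-channel index, thereby moving the ``cyclic'' structure from inside each block $C_{i,j}$ to between the $D\times P$ blocks of the permuted matrix. First, I would fix index conventions: label rows of $W$ by pairs $(i,\alpha)$ with $i\in[D]$, $\alpha\in[q]$, and columns by $(j,\beta)$ with $j\in[P]$, $\beta\in[q]$, so that
\begin{equation*}
W_{(i,\alpha),(j,\beta)} \;=\; \frac{1}{\sqrt{P}}\,(C_{i,j})_{\alpha,\beta} \;=\; \frac{1}{\sqrt{P}}\,\omega^{(i,j)}_{s(\alpha,\beta)}\,\mathbf{1}\!\left[s(\alpha,\beta)\le k\right],
\end{equation*}
where $s(\alpha,\beta):=((\beta-\alpha)\bmod q)+1$ records the circulant shift and $\omega^{(i,j)}\in\mathbb{R}^k$ has i.i.d.\ $\mathcal{N}(0,1/k)$ entries, independent across $(i,j)$.

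Next, I would define $U\in\mathbb{R}^{Dq\times Dq}$ to be the unique permutation sending the basis vector $e_{(i,\alpha)}$ (channel-first ordering) to the basis vector $e_{(\alpha,i)}$ (position-first ordering), and define $\tilde U\in\mathbb{R}^{Pq\times Pq}$ analogously on the column side. A direct computation then gives
\begin{equation*}
\tilde W_{(\alpha,i),(\beta,j)} \;=\; W_{(i,\alpha),(j,\beta)} \;=\; \frac{1}{\sqrt{P}}\,\omega^{(i,j)}_{s(\alpha,\beta)}\,\mathbf{1}\!\left[s(\alpha,\beta)\le k\right].
\end{equation*}
Reading $\tilde W$ as a $q\times q$ array of $D\times P$ blocks indexed by $(\alpha,\beta)$, the $(\alpha,\beta)$ block is zero unless $s:=s(\alpha,\beta)\le k$, in which case its $(i,j)$ entry is $P^{-1/2}\omega^{(i,j)}_{s}$. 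Setting $A^{(s)}_{i,j}:=\sqrt{k}\,\omega^{(i,j)}_s$, one has $A^{(s)}_{i,j}\sim\mathcal{N}(0,1)$ and $P^{-1/2}\omega^{(i,j)}_s = (kP)^{-1/2}A^{(s)}_{i,j}$, which matches (up to the trivial rescaling $A^{(s)}\mapsto A^{(s)}/\sqrt{P}$ absorbed in the stated normalization $A^{(s)}_{i,j}\sim\mathcal{N}(0,1/P)$) the block entries claimed in the lemma.

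It then remains to check two things. First, that the nonzero block pattern is the block-circulant one displayed in the statement: this follows because $s(\alpha,\beta)\le k$ exactly when $(\beta-\alpha)\bmod q\in\{0,1,\dots,k-1\}$, so block row $\alpha$ has exactly $k$ consecutive (cyclically) nonzero blocks, labelled $A^{(1)},\dots,A^{(k)}$, starting at block column $\alpha$; writing this out for $\alpha=1,\dots,q$ reproduces the displayed matrix. Second, that the blocks $A^{(1)},\dots,A^{(k)}$ are independent with i.i.d.\ Gaussian entries: distinct entries of distinct $A^{(s)}$'s correspond either to different $(i,j)$ pairs, in which case independence is inherited from the independence of the $\omega^{(i,j)}$ across $(i,j)$, or to the same $(i,j)$ but different shift coordinates $s\ne s'$, in which case independence is inherited from the i.i.d.\ structure of the coordinates of $\omega^{(i,j)}$.

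The proof is essentially combinatorial, and I do not anticipate a serious obstacle; the only delicate point is the index bookkeeping needed to align the cyclic shift $s(\alpha,\beta)$ inside each $C_{i,j}$ with the cyclic shift in the block pattern of $\tilde W$, and to verify that the scalar factors $1/\sqrt{P}$ and $1/\sqrt{k}$ distribute correctly between the prefactor of $\tilde W$ and the variance of the $A^{(s)}$.
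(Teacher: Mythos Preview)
Your proposal is correct and follows essentially the same idea as the paper's proof: both exhibit the permutation that swaps the channel index with the within-channel (signal) index, turning the circulant structure inside each $q\times q$ block into a block-circulant structure over $D\times P$ blocks. Your version is in fact more explicit than the paper's---you write down the full permutation on all entries and verify the block pattern, independence, and scaling directly, whereas the paper only tracks where the ``first-row'' independent entries land and then asserts that the rest is determined by the permutation matrices.
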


\begin{proof}
Consider the elements of the matrix $M$ which are non-zero and sampled i.i.d. as opposed to exact copies of other variables. They are positioned on the first line of each block of size $q\times q$, and thus the indexing for their lines and columns can be written as $M_{aq+1,bq+c}$ where $a,b,c$ are integers such that $0\leqslant a \leqslant D-1$, $0\leqslant b \leqslant P-1$ and $1\leqslant c \leqslant k$. The integers $a,b$ describe the position of the $q\times q$ block the variable is in, and $c$ describes, for each block, the position in the initial random Gaussian vector of size $k$ that is zero-padded and circulated to generate the block. The goal is to find the mapping that groups these variables into $k$ dense blocks of extensive size $D \times P$. To do so, one can use the following bijection $\tilde{M}_{\gamma,\alpha P+\beta} = M_{aq+1,bq+c}$ where $\gamma = a+1$, $\alpha = c-1$ and $\beta=b+1$. By doing this, $c$ becomes the block index and $a,b$ become the position in the dense block. This mapping can be represented by left and right permutation matrices which also prescribe the permutation for the rest of the elements of $M$. A graphical sketch of this coordinate permutation is shown in Figure \ref{fig:permutation-lemma-diagram}.
\end{proof}

\begin{figure}
    \begin{align*}
        %M = 
        \begin{bmatrix}
        \resizebox{0.45\columnwidth}{!}{
        $\begin{array}{ccc|ccc|ccc}
        z_{11} & w_{11} &        & z_{12} & w_{12} &        & z_{13} & w_{13} &        \\
       & z_{11} & w_{11} &        & z_{12} & w_{12} &        & z_{13} & w_{13} \\
w_{11} &        & z_{11} & w_{12} &        & z_{12} & w_{13} &        & z_{13} \\ \hline
z_{21} & w_{21} &        & z_{22} & w_{22} &        & z_{23} & w_{23} &        \\
       & z_{21} & w_{21} &        & z_{22} & w_{22} &        & z_{23} & w_{23} \\
w_{21} &        & z_{21} & w_{22} &        & z_{22} & w_{23} &        & z_{23} \\ \hline
z_{31} & w_{31} &        & z_{32} & w_{32} &        & z_{33} & w_{33} &        \\
       & z_{31} & w_{31} &        & z_{32} & w_{32} &        & z_{33} & w_{33} \\
w_{31} &        & z_{31} & w_{32} &        & z_{32} & w_{33} &        & z_{33} \\ \hline
z_{41} & w_{41} &        & z_{42} & w_{42} &        & z_{43} & w_{43} &        \\
       & z_{41} & w_{41} &        & z_{42} & w_{42} &        & z_{43} & w_{43} \\
w_{41} &        & z_{41} & w_{42} &        & z_{42} & w_{43} &        & z_{43}
        \end{array}$
        }
        \end{bmatrix} 
        \quad
        %M = 
        \begin{bmatrix}
        \resizebox{0.45\columnwidth}{!}{
        $\begin{array}{ccc|ccc|ccc}
z_{11} & z_{12} & z_{13} & w_{11} & w_{12} & w_{13} &        &        &        \\
z_{21} & z_{22} & z_{23} & w_{21} & w_{22} & w_{23} &        &        &        \\
z_{31} & z_{32} & z_{33} & w_{31} & w_{32} & w_{33} &        &        &        \\ 
z_{41} & z_{42} & z_{43} & w_{41} & w_{42} & w_{43} &        &        &        \\ \hline
       &        &        & z_{11} & z_{12} & z_{13} & w_{11} & w_{12} & w_{13} \\
       &        &        & z_{21} & z_{22} & z_{23} & w_{21} & w_{22} & w_{23} \\ 
       &        &        & z_{31} & z_{32} & z_{33} & w_{31} & w_{32} & w_{33} \\
       &        &        & z_{41} & z_{42} & z_{43} & w_{41} & w_{42} & w_{43} \\ \hline
w_{11} & w_{12} & w_{13} &        &        &        & z_{11} & z_{12} & z_{13} \\ 
w_{21} & w_{22} & w_{23} &        &        &        & z_{21} & z_{22} & z_{23} \\
w_{31} & w_{32} & w_{33} &        &        &        & z_{31} & z_{32} & z_{33} \\
w_{41} & w_{42} & w_{43} &        &        &        & z_{41} & z_{42} & z_{43}
        \end{array}$
        }
        \end{bmatrix} 
    \end{align*}
    \caption{A sketch of the permutation lemma applied to matrix $W \sim \text{MCC}(4, 3, 3, 2)$. Left: $W$ before permutation. Right: after permutation, $U W \tilde{U}^T$.}
    \label{fig:permutation-lemma-diagram}.
\end{figure}

\section{Numerical Experiments} \label{sec:numerics}
In this section, we compare state evolution predictions from Theorem \ref{th:main} with a numerical implementation of the ML-AMP algorithm described in Section \ref{sec:MLAMP}. 

Our first experiment, shown in Figure \ref{fig:sparse-cs}, is a noisy compressive sensing task under a sparsity prior $P_X(x) = \rho \mathcal{N}(x; 0, 1) + (1 - \rho) \delta(x)$, where $\rho$ is the expected fraction of nonzero components of $x_0$. Measuremements are generated $y_0 = W x_0 + \eta$ for noise $\eta \sim \mathcal{N}(0, 10^{-4})$, where $W \sim \text{MCC}(D, P, q, k)$. We show recovery performance at sparsity levels $\rho \in \{0.25, 0.5, 0.75\}$ as the measurement ratio $\beta = D/P$ varies, averaged over 10 independent AMP iterates. Additionally, we show convergence of the (averaged) AMP iterates for sparsity $\rho = 0.25$ at a range of $\beta$ near the recovery threshold. We observe strong agreement between AMP empirical performance and the state evolution prediction. The system sizes are $P = 1024$, $q=1024$, with $D = \beta P$ varying.

In Figure \ref{fig:multi-layer}, we show two examples of $L = 2, 3, 4$ layer models following Equation \eqref{eqn:signal-model}. In both, the output channel $l=1$ generates noisy, compressive linear measurements $y = z^{(1)} + \zeta$ for $\zeta_i \sim \mathcal{N}(0, \sigma^2)$ and for dense couplings $W^{(1)}_{ij} \sim \mathcal{N}(0, 1/n^{(1)})$. Layers $2 \leq l \leq 4$ use MCC couplings $W^{(l)} \sim \text{MCC}(D_l, P_l, q, k)$, where $q P_l = n_l$ and $D_l = \beta P_l = q n_{l-1}$. Channel functions $\{\phi^{(l)}\}$ vary across the two experiments. The input prior is $P_X(x) = \mathcal{N}(x; 0, 1)$ and model has $q=10$ channels, filter size $k=3$, noise level $\sigma^2 = 10^{-4}$, input dimension $n^{(L)} = 5000$, layerwise aspect ratios $\beta^{(L)} = 2$ and $\beta^{(l)} = 1$ for $2 \leq l < L$. The channel aspect ratio $\beta^{(1)}$ varies in each experiment. 

\begin{figure}[h]
    \centering
    \includegraphics[width=0.51\linewidth]{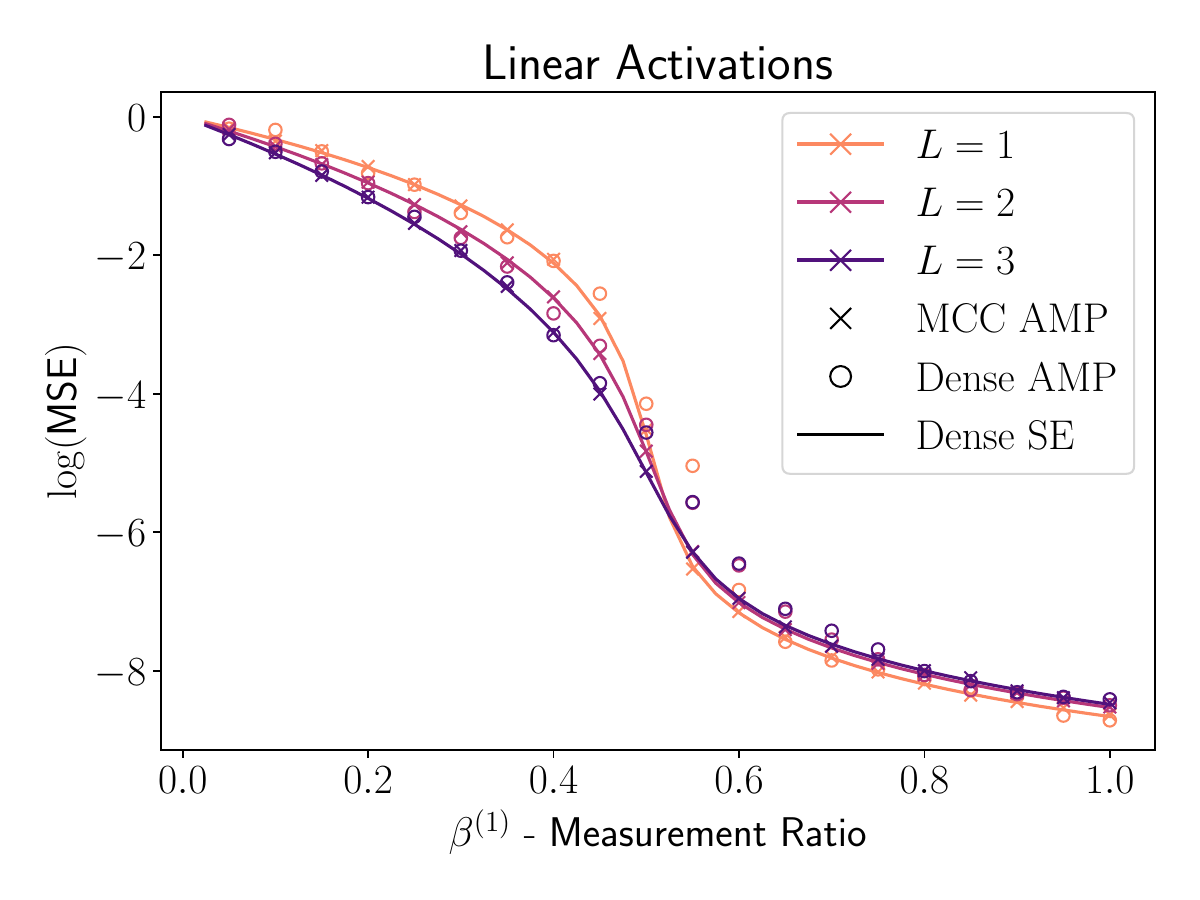}
    \includegraphics[width=0.48\linewidth]{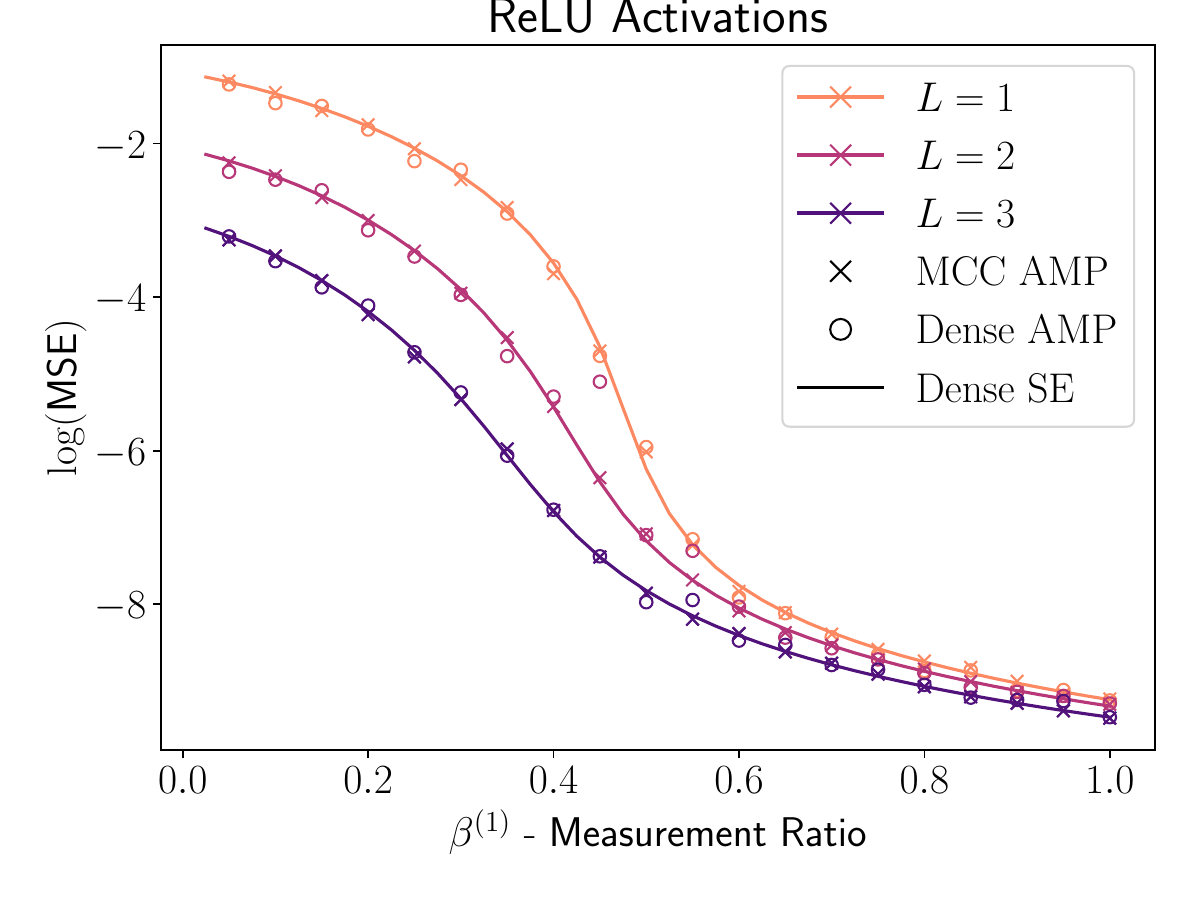}
    \caption{ML-AMP compressive sensing recovery under multichannel convolutional designs (crossed) and the state evolution for the corresponding fully connected model (lined). For comparison, we also plot the corresponding fully connected AMP iterations (dotted), in which $W^{(l)} \in \mathbb{R}^{D_l \times P_l}$ with $W_{ij} \sim \mathcal{N}(0, 1/P_l)$, with the dimensions of the prior and output channel adjusted appropriately.
    Left: For $2 \leq l \leq L$, the channel functions are $\phi^{(l)}(z; \zeta) = z + \zeta$ where $\zeta_i \sim \mathcal{N}(0, \sigma^2)$. 
    Right: 
    %ML-AMP compressive sensing recovery under multichannel convolutional designs (crossed) and the corresponding state evolution for the corresponding fully connected model (lined). 
    For $2 \leq l \leq L$, the channel functions are $\phi^{(l)}(z; \zeta) = \max(z, 0)$ where the maximum is applied coordinatewise. This channel function is the popular ReLU activation function used by generative convolutional neural networks such as in \cite{radford2015unsupervised,bora2017compressed}. }
    \label{fig:multi-layer}
\end{figure}

We compare the state evolution equations to empirical AMP results in two cases. In the left panel, we show multilayer models with identity channel functions, and in the right panel, we show models with ReLU channel functions. The latter model captures a simple but accurate example of a convolutional generative neural network.

%\begin{figure}[h!]
%    \centering
%    \includegraphics[width=0.7\linewidth]{figures/Relu.pdf}
%    \caption{}
%    \label{fig:nonlinear}
%\end{figure}
%\clearpage

\section{Discussion and Future Work}\label{sec:discussion}

We have proven state evolution recursions for the ML-AMP algorithm for signal recovery from multi-layer convolutional networks. We consider networks whose weight matrices are drawn either i.i.d. Gaussian or from an ensemble of random multi-channel convolution matrices. Interestingly, under a separable prior and separable channel functions, these two matrix ensembles yield the same state evolution (up to a rescaling). These predictions closely match empirical observations in compressive sensing under a sparsity prior (Figure \ref{fig:sparse-cs}) and under multi-layer priors (Figure \ref{fig:multi-layer}). 

Lemma \eqref{lem:permutation} allows to rewrite an MCC matrix $M$ as a block circulant matrix $\tilde{M}$ with random extensive blocks, reminiscent of the block structure of spatially coupled sensing matrices. As a consequence of separability, the nonzero blocks of $\tilde{M}$ have identical statistics, which is key to our equivalence theorem between MCC matrices and their dense i.i.d. counterparts. This is in contrast to spatial coupling, where extensive blocks may have different variances, or equivalently when denoising functions may be non-separable. We prove in Appendix \ref{sec:AppA} a more general result for non-separable channel functions, of which Theorem \ref{th:main} is a specialization to the separable case. In Appendix \ref{app:nonseparable}, we discuss a potential application to signal recovery with non-i.i.d. convolutional filters, in which the dynamics of ML-AMP is expected to differ from the analogous fully connected model. Ultimately, studying non-separable models is an interesting and potentially fruitful avenue for future work. 

Another important direction for future work is to go beyond random convolutional layers and study how to account for trained layers in the ML-AMP algorithm and its state evolution.

\begin{ack}
M.D. acknowledges funding from Northeastern University's Undergraduate Research \& Fellowships office and the Goldwater Award. We acknowledge funding from the ERC under the European Union’s Horizon 2020 Research and Innovation Program Grant Agreement 714608-SMiLe.
\end{ack}

\clearpage

\bibliographystyle{plainnat}
\bibliography{citations}

\clearpage
\appendix

\section{Proof of the main theorem} \label{sec:AppA}

The proof of the main theorem is presented in this section. We start with a generic result on a family of AMP iterations including the (non Bayes-optimal) MLAMP one, using the framework of \cite{gerbelot2021graph}, from which we remind the required notions.
\subsection{Notations and definitions}
If $f: \R^{N \times q} \to \R^{N \times q}$ is an function and $i \in \{1, \dots N\}$, we write $f_{i}:\mathbb{R}^{N \times q} \to \mathbb{R}^{q}$  the component of $f$ generating the $i$-th line of its image, i.e., if $\bX \in \mathbb{R}^{N \times q}$, 
\begin{equation*}
    f(\bX) = \begin{bmatrix}
    f_{1}(\bX) \\
    \vdots \\
    f_{N}(\bX)\end{bmatrix} \in \mathbb{R}^{N \times q} \, .
\end{equation*}
We write $\frac{\partial f_{i}}{\partial \bX_{i}}$ the $q\times q$ Jacobian containing the derivatives of $f_{i}$ with respect to (w.r.t.) the $i$-th line $\bX_{i}\in \mathbb{R}^{q}$:
\begin{equation}
\label{eq:Ons_jacob}
    \frac{\partial f_{i}}{\partial \bX_{i}} = \begin{bmatrix}\frac{\partial (f_{i}(\bX))_{1}}{\partial \bX_{i1}} & \dots & \frac{\partial (f_{i}(\bX))_{1}}{\partial \bX_{iq}} \\
    \vdots& &\vdots \\
    \frac{\partial (f_{i}(\bX))_{q}}{\partial \bX_{i1}} & \dots & \frac{\partial (f_{i}(\bX))_{q}}{\partial \bX_{iq}}
    \end{bmatrix} \in \mathbb{R}^{q \times q} \, .
\end{equation}
For two sequences of random variables $X_{n},Y_{n}$, we write $X_{n} \approxP Y_{n}$ when their difference converges in probability to $0$, i.e., $X_{n}-Y_{n} \xrightarrow[]{P}0$.
Oriented graphs with a set of vertices $V$ and edges $\overrightarrow{E}$ are denoted $G = (V,\overrightarrow{E})$. The set of edges may be split into right-pointing and left-pointing edges, i.e., $\overrightarrow{E} = \left\{\overrightarrow{e}_{1},...,\overrightarrow{e}_{L}\right\}, \overleftarrow{E} = \left\{\overleftarrow{e}_{1},...,\overleftarrow{e}_{L}\right\}$. 
\begin{definition}[pseudo-Lipschitz function]
\label{def:pseudo-lip}
For $k \in \mathbb{N}^{*}$ and any $N,m \in \mathbb{N}^{*}$, a function $\Phi : \mathbb{R}^{N \times q} \to \mathbb{R}^{m \times q}$ is said to be \emph{pseudo-Lipschitz of order k} if there exists a constant L such that for any $\mathbf{x},\mathbf{y} \in \mathbb{R}^{N \times q}$, 
\begin{equation}
    \frac{\norm{\Phi(\mathbf{x})-\Phi(\mathbf{y})}_{F}}{\sqrt{m}} \leqslant L \left(1+\left(\frac{\norm{\mathbf{x}}_{F}}{\sqrt{N}}\right)^{k-1}+\left(\frac{\norm{\mathbf{y}}_{F}}{\sqrt{N}}\right)^{k-1}\right)\frac{\norm{\mathbf{x}-\mathbf{y}}_{F}}{\sqrt{N}}
\end{equation}
For a function $\phi : \mathbb{R}\to \mathbb{R}$, the property becomes 
\begin{align}
    \forall \thickspace (x,y) \in \mathbb{R}^{2}, \thickspace \vert \phi(x)-\phi(y) \vert \leqslant L(1+\vert x \vert^{k-1} + \vert y \vert^{k-1}) \vert x-y \vert
\end{align}
and a straightforward calculation shows that for any scalar pseudo-Lipshitz function of order $2$, the function
\begin{align}
    \varphi : \mathbb{R}^{d} \to \mathbb{R}, \\
    \mathbf{x} \mapsto \frac{1}{d}\sum_{i=1}^{d}\phi(x_{i})
\end{align}
is pseudo-Lipschitz of order 2 according to the definition above. This definition is handy for proofs involving non-separable functions and leads to Gaussian
concentration using the Gauss-Poincaré inequality (see Lemma C.8. from  \cite{berthier2020state}), while in the separable case, a strong law of large number is proven for a class 
of distributions including sub-Gaussian ones in Lemma 5 of \cite{bayati2011dynamics}.
\end{definition}
\subsection{State evolution for generic multilayer AMP iterations with matrix valued variables and dense Gaussian matrices}
In the notations of \cite{gerbelot2021graph}, consider the AMP iteration indexed by the following directed graph $G = (V, \overrightarrow{E})$, where the set of vertices is denoted $V = \left\{v_{0},v_{1},...,v_{L}\right\}$, and the set of edges $\overrightarrow{E} = \left\{\overrightarrow{e}_{1},...,\overrightarrow{e}_{l},\overleftarrow{e}_{1},...,\overleftarrow{e}_{L}\right\}$. For any edge $\overrightarrow{e}_{l}$, the corresponding matrix $\mathbf{A}_{\overrightarrow{e}_{l}}$ has dimensions $\mathbb{R}^{n_{l}\times n_{l-1}}$ with $\mathbf{A}_{\overleftarrow{e}_{l}} = \mathbf{A}_{\overrightarrow{e}_{l}}^{\top}$, and the variables $\mathbf{x}_{\overrightarrow{e}_{l}} \in \mathbb{R}^{n_{l}\times q},\mathbf{x}_{\overleftarrow{e}_{l}} \in \mathbb{R}^{n_{l-1}\times q}$ for some finite $q \in \mathbb{N}$, with $N = \sum_{l=1}^{L}n_{l}$. Finally, we define the non-linearities of the iteration by specifying the variables they are acting on as follows:
\begin{itemize}
\item $f^{t}_{\overrightarrow{e}_{1}}: \mathbb{R}^{n_{0}\times q}\to \mathbb{R}^{n_{0}\times q}, \mathbf{x}^{t}_{\overleftarrow{e}_{1}} \mapsto f^t_{\overrightarrow{e}_1}\left(\bx^t_{\overleftarrow{e_1}}\right)$,
\item for any $2\leqslant l\leqslant L$, $f^{t}_{\overrightarrow{e}_{l}}:(\mathbb{R}^{n_{l-1}\times q})^{2} \to \mathbb{R}^{n_{l-1} \times q}$, $(\mathbf{x}^{t}_{\overrightarrow{e}_{l-1}},\mathbf{x}^{t}_{\overleftarrow{e}_{l}}) \mapsto f^{t}_{\overrightarrow{e}_{l}}(\mathbf{x}^{t}_{\overrightarrow{e}_{l-1}},\mathbf{x}^{t}_{\overleftarrow{e}_{l}})$, 
\item for any $1\leqslant l\leqslant L-1$, $f^{t}_{\overleftarrow{e}_{l}}:(\mathbb{R}^{n_{l}\times q})^{3} \to \mathbb{R}^{n_{l} \times q}$, $(\mathbf{x}^{t}_{\overrightarrow{e}_{l}},\mathbf{x}^{t}_{\overleftarrow{e}_{l+1}}) \mapsto f^{t}_{\overleftarrow{e}_{l}}(\mathbf{A}_{\overrightarrow{e}_{l}}\mathbf{w}_{\overrightarrow{e}_{l}}, \mathbf{x}^{t}_{\overrightarrow{e}_{l}},\mathbf{x}^{t}_{\overleftarrow{e}_{l+1}})$
\item $f^{t}_{\overleftarrow{e}_{L}}: (\mathbb{R}^{n_{L}\times q})^{2}\to \mathbb{R}^{n_{L}\times q}, \mathbf{x}^{t}_{\overleftarrow{e}_{L}} \to f^t_{\overleftarrow{e}_L}\left(\mathbf{A}_{\overrightarrow{e}_{L}}\mathbf{w}_{\overrightarrow{e}_{L}}, \bx^t_{\overleftarrow{e_L}}\right)$
\end{itemize}
where $\mathbf{w}_{\overrightarrow{e}_{1}},...,\mathbf{w}_{\overrightarrow{e}_{L}}$ are low-rank matrices respectively in $\mathbb{R}^{n_{0}\times q}, ..., \mathbb{R}^{n_{L-1}\times q}$, whose rows are sampled i.i.d. from subgaussian probability distributions in $\mathbb{R}^{q}$. The graph indexing the iteration then reads:
\begin{center}
	\begin{tikzpicture}[scale = 1]
    \tikzstyle{point}=[draw,circle,minimum width=3em];
    \tikzstyle{fleche}=[->];
    \node[point] (v0) at (0,0) {$v_0$};
    \node[point] (v1) at (4,0) {$v_1$};
    \node[point] (v2) at (8,0) {$v_2$};
    \node[] (v3) at (11,0) {\Huge$\cdots$};
    \node[point] (vL) at (14,0) {$v_L$};
    \draw[fleche] (v0) to[bend left]
    node[above,very near start]{$f^t_{\overrightarrow{e_1}}$}
    node[above,midway]{$\bA_{\overrightarrow{e_1}}$} node[below,midway]{$\overrightarrow{e_1}$} 
    node[above,very near end]{$\bx^t_{\overrightarrow{e_1}}$}
    (v1);
    \draw[fleche] (v1) to[bend left]
    node[below,very near start]{$f^t_{\overleftarrow{e_1}}$}
    node[below,midway]{$\bA_{\overrightarrow{e_1}}^\top$} node[above,midway]{$\overleftarrow{e_1}$} 
    node[below,very near end]{$\bx^t_{\overleftarrow{e_1}}$}
    (v0);
    \draw[fleche] (v1) to[bend left]
    node[above,very near start]{$f^t_{\overrightarrow{e_2}}$}
    node[above,midway]{$\bA_{\overrightarrow{e_2}}$} node[below,midway]{$\overrightarrow{e_2}$} 
    node[above,very near end]{$\bx^t_{\overrightarrow{e_2}}$}
    (v2);
    \draw[fleche] (v2) to[bend left]
    node[below,very near start]{$f^t_{\overleftarrow{e_2}}$}
    node[below,midway]{$\bA_{\overrightarrow{e_2}}^\top$} node[above,midway]{$\overleftarrow{e_2}$} 
    node[below,very near end]{$\bx^t_{\overleftarrow{e_2}}$}
    (v1);
	\end{tikzpicture}
\end{center}
with the corresponding iteration:
	\begin{align}
	\begin{split}
	\label{eq:mlamp}
    \bx^{t+1}_{\overrightarrow{e_1}} &= \bA_{\overrightarrow{e_1}} \bm^t_{\overrightarrow{e_1}} -  \bm^{t-1}_{\overleftarrow{e_1}}\left(\mathbf{b}^t_{\overrightarrow{e_1}}\right)^{\top} \, ,  \\
    &\bm^t_{\overrightarrow{e_1}} = f^t_{\overrightarrow{e}_1}\left(\bx^t_{\overleftarrow{e_1}}\right) \, , \\
        \bx^{t+1}_{\overleftarrow{e_1}} &= \bA_{\overrightarrow{e_1}}^\top\bm^t_{\overleftarrow{e_1}} -  \bm^{t-1}_{\overrightarrow{e_1}}\left( \mathbf{b}^t_{\overleftarrow{e_1}}\right)^{\top} \, ,  \\
    &\bm^t_{\overleftarrow{e_1}} = f^t_{\overleftarrow{e_1}}\left(\mathbf{A}_{\overrightarrow{e}_{1}}\mathbf{w}_{\overrightarrow{e}_{1}},\bx^t_{\overrightarrow{e_1}},\bx^t_{\overleftarrow{e_2}}\right) \, , \\
    &\qquad \\
        \bx^{t+1}_{\overrightarrow{e_2}} &= \bA_{\overrightarrow{e_2}} \bm^t_{\overrightarrow{e_2}} - \bm^{t-1}_{\overleftarrow{e_2}}\left(\mathbf{b}^t_{\overrightarrow{e_2}}\right)^{\top} \, ,  \\
    &\bm^t_{\overrightarrow{e_2}} = f^t_{\overrightarrow{e}_2}\left(\bx^t_{\overrightarrow{e_1}},\bx^t_{\overleftarrow{e_2}}\right) \, , \\
        \bx^{t+1}_{\overleftarrow{e_2}} &= \bA_{\overrightarrow{e_2}}^\top\bm^t_{\overleftarrow{e_2}} - \bm^{t-1}_{\overrightarrow{e_2}}(\mathbf{b}^t_{\overleftarrow{e_2}})^{\top} \, ,  \\
    &\bm^t_{\overleftarrow{e_2}} = f^t_{\overleftarrow{e_2}}\left(\mathbf{A}_{\overrightarrow{e}_{2}}\mathbf{w}_{\overrightarrow{e}_{2}},\bx^t_{\overrightarrow{e_2}},\bx^t_{\overleftarrow{e_3}}\right) \, , \\
    &\qquad\\
    &\qquad\quad\vdots \\
     &\qquad \\
        \bx^{t+1}_{\overrightarrow{e_L}} &= \bA_{\overrightarrow{e_L}} \bm^t_{\overrightarrow{e_L}} - \bm^{t-1}_{\overleftarrow{e_L}}\left(\mathbf{b}^t_{\overrightarrow{e_L}}\right)^{\top} \, ,  \\
    &\bm^t_{\overrightarrow{e_L}} = f^t_{\overrightarrow{e}_L}\left(\bx^t_{\overrightarrow{e}_{L-1}},\bx^t_{\overleftarrow{e_L}}\right) \, , \\
        \bx^{t+1}_{\overleftarrow{e_L}} &= \bA_{\overrightarrow{e_L}}^\top\bm^t_{\overleftarrow{e_L}} - \bm^{t-1}_{\overrightarrow{e_L}}(\mathbf{b}^t_{\overleftarrow{e_L}})^{\top} \, ,  \\
    &\bm^t_{\overleftarrow{e_L}} = f^t_{\overleftarrow{e_L}}\left(\mathbf{A}_{\overrightarrow{e}_{L}}\mathbf{w}_{\overrightarrow{e}_{L}},\bx^t_{\overrightarrow{e_L}}\right) \,
    \end{split}
\end{align}
and Onsager terms, for the right oriented edges 
    \begin{equation*} 
    \mathbf{b}^t_{\overrightarrow{e}_{l}} = \frac{1}{N} \sum_{i=1}^{n_{l-1}} \frac{\partial f^t_{\overrightarrow{e}_{l},i}}{\partial \mathbf{x}_{\overleftarrow{e}_{l},i}} \left(\left(\mathbf{x}^t_{\overrightarrow{e}_{l}'}\right)_{\overrightarrow{e}_{l}':\overrightarrow{e}_{l}' \to \overrightarrow{e}_{l}}\right) \qquad \in \mathbb{R}^{q \times q} \, .
\end{equation*}
and left oriented edges
    \begin{equation*} 
    \mathbf{b}^t_{\overleftarrow{e}_{l}} = \frac{1}{N} \sum_{i=1}^{n_l} \frac{\partial f^t_{\overleftarrow{e}_{l},i}}{\partial \mathbf{x}_{\overrightarrow{e}_{l},i}} \left(\mathbf{A}_{\overrightarrow{e}_{l}}\mathbf{w}_{\overrightarrow{e}_{l}},\left(\mathbf{x}^t_{\overleftarrow{e}_{l}'}\right)_{\overleftarrow{e}_{l}':\overleftarrow{e}_{l}' \to \overleftarrow{e}_{l}}\right) \qquad \in \mathbb{R}^{q \times q} \, .
\end{equation*}
We now make the following assumptions
\begin{enumerate}[font={\bfseries},label={(A\arabic*)}]
\item \label{ass:main1} The matrices $(\bA_{\overrightarrow{e}})_{\overrightarrow{e} \in \overrightarrow{E}}$ are random and independent, up to the symmetry condition $\bA_{\overleftarrow{e}} = \bA_{\overrightarrow{e}}^\top$. Moreover $\bA_{\overrightarrow{e}}$ has independent centered Gaussian entries with variance $1/N$.
\item For all $1 \leqslant l \leqslant L$, $n_l \to \infty$ and $n_l/N$ converges to a well-defined limit $\delta_l \in [0,1]$. We denote by $n \to \infty$ the limit under this scaling.
\item For all $t \in \mathbb{N}$ and $\overrightarrow{e} \in \overrightarrow{E}$, the non-linearity $f^{t}_{\overrightarrow{e}}$ is pseudo-Lipschitz of finite order, uniformly with respect to the problem dimensions $(n_{l})_{0 \leqslant l \leqslant L}$
\item For all $\overrightarrow{e}\in E$, the lines of  $\mathbf{x}^{0}_{\overrightarrow{e}}, \mathbf{w}_{\overrightarrow{e}}$ are sampled from subgaussian probability distributions in $\mathbf{R}^{q}$.
\item For all $\overrightarrow{e} \in E$, the following limit exists and is finite:
\begin{equation*}
    \lim_{n\to\infty} \frac{1}{N}  \left\langle f^0_{\overrightarrow{e}} \left(\left(\bx^0_{\overrightarrow{e}'}\right)_{ {\overrightarrow{e}'}:{\overrightarrow{e}'} \to {\overrightarrow{e}}}\right), f^0_{\overrightarrow{e}} \left(\left(\bx^0_{\overrightarrow{e}'}\right)_{ {\overrightarrow{e}'}:{\overrightarrow{e}'} \to {\overrightarrow{e}}}\right) \right\rangle
\end{equation*}
\item Let $(\kappa_{\overrightarrow{e}})_{\overrightarrow{e}\in E}$ be an array of bounded non-negative reals and $\bZ_{\overrightarrow{e}} \sim \mathbf{N}(0,\kappa_{\overrightarrow{e}}\mathbf{I}_{n_{w}})$ independent random variables for all $\overrightarrow{e}$. For all $\overrightarrow{e}\in E$, for any $t \in \mathbb{N}_{>0}$, the following limit exists and is finite:
\begin{equation*}
    \lim_{n \to \infty} \frac{1}{N}  \mathbb{E}\left[\left \langle f^0_{\overrightarrow{e}}  \left(\left(\bx^0_{\overrightarrow{e}'}\right)_{ {\overrightarrow{e}'}:{\overrightarrow{e}'} \to {\overrightarrow{e}}}\right), f^{t}_{\overrightarrow{e}}\left(\left(\bZ^{t}_{\overrightarrow{e}'}\right)_{ {\overrightarrow{e}'}:{\overrightarrow{e}'} \to {\overrightarrow{e}}}\right)\right \rangle\right].
\end{equation*}

\item \label{ass:main7} Consider any array of $2\times 2$ positive definite matrices $(\boldsymbol{S}_{\overrightarrow{e}})_{\overrightarrow{e}\in E}$ and the collection of random variables $(\bZ_{\overrightarrow{e}},\bZ^{'}_{\overrightarrow{e}}) \sim \mathbf{N}(0,\boldsymbol{S}_{\overrightarrow{e}}\otimes\mathbf{I}_{n_{w}}))$ defined independently for each edge $\overrightarrow{e}$. Then for any $\overrightarrow{e}\in E$ and $s,t >0$, the following limit exists and is finite:
\begin{equation*}
    \lim_{n \to \infty} \frac{1}{N}  \mathbb{E}\left[\left \langle f^{s}_{\overrightarrow{e}}\left(\left(\bZ^{s}_{\overrightarrow{e}'}\right)_{ {\overrightarrow{e}'}:{\overrightarrow{e}'} \to {\overrightarrow{e}}}\right), f^{t}_{\overrightarrow{e}}\left(\left(\tilde{\bZ}^{t}_{\overrightarrow{e}'}\right)_{ {\overrightarrow{e}'}:{\overrightarrow{e}'} \to {\overrightarrow{e}}}\right)\right \rangle \right].
\end{equation*}
\end{enumerate}

Under these assumptions, we define the following state evolution recursion: \\
\begin{itemize}
\item for $l=1$ :
\begin{align}
    &\boldsymbol{\nu}_{\overrightarrow{e}_{1}}^{0} = \lim_{N \to \infty} \frac{1}{N}\mathbf{w}_{\overrightarrow{e}_{1}}^{\top}f^{0}_{\overrightarrow{e}_{1}}(\mathbf{x}^{0}_{\overleftarrow{e}_{1}}), \thickspace \boldsymbol{\kappa}^{1,1}_{\overrightarrow{e}_{1}} = \lim_{N \to \infty} \frac{1}{N}f^{0}_{\overrightarrow{e}_{1}}(\mathbf{x}^{0}_{\overleftarrow{e}_{1}})^{\top}f^{0}_{\overrightarrow{e}_{1}}(\mathbf{x}^{0}_{\overleftarrow{e}_{1}}) \\
    &\boldsymbol{\nu}^{t+1}_{\overrightarrow{e}_{1}} = \lim_{N \to +\infty} \frac{1}{N}\mathbb{E}\left[\mathbf{w}_{\overrightarrow{e}_{1}}^{\top}f^{t}_{\overrightarrow{e}_{1}}\left(\mathbf{w}_{\overrightarrow{e}_{1}}\hat{\boldsymbol{\nu}}^{t}_{\overleftarrow{e}_{1}}+\mathbf{Z}^{t}_{\overleftarrow{e}_{1}}\right)\right] \\
    &\boldsymbol{\kappa}_{\overrightarrow{e}_{1}}^{s+1,t+1} = \boldsymbol{\kappa}_{\overrightarrow{e}_{1}}^{t+1,s+1} = \lim_{N \to +\infty} \frac{1}{N}\mathbb{E}\bigg[\left(f^{s}_{\overrightarrow{e}_{1}}\left(\mathbf{w}_{\overrightarrow{e}_{1}}\hat{\boldsymbol{\nu}}^{s}_{\overleftarrow{e}_{1}}+\mathbf{Z}^{s}_{\overleftarrow{e}_{1}}\right)-\mathbf{w}_{\overrightarrow{e}_{1}}\rho_{\mathbf{w}_{\overrightarrow{e}_{1}}}^{-1}\boldsymbol{\nu}_{\overrightarrow{e}_{1}}^{s+1}\right)^{\top} \notag \\
    &\left(f^{t}_{\overrightarrow{e}_{1}}\left(\mathbf{w}_{\overrightarrow{e}_{1}}\hat{\boldsymbol{\nu}}^{t}_{\overleftarrow{e}_{1}}+\mathbf{Z}^{t}_{\overleftarrow{e}_{1}}\right)-\mathbf{w}_{\overrightarrow{e}_{1}}\rho_{\mathbf{w}_{\overrightarrow{e}_{1}}}^{-1}\boldsymbol{\nu}_{\overrightarrow{e}_{1}}^{t+1}\right)\bigg] \\
    &\hat{\boldsymbol{\nu}}_{\overleftarrow{e}_{1}}^{0}, \boldsymbol{\kappa}_{\overleftarrow{e}_{1}}^{1,1} = \lim_{n \to \infty} \frac{1}{N}
f^{0}_{\overleftarrow{e}_{1}}\left(\mathbf{z}_{\mathbf{w}_{\overrightarrow{e}_{1}}},\mathbf{x}^{0}_{\overrightarrow{e}_{1}},\mathbf{x}^{0}_{\overleftarrow{e}_{2}}\right)^{\top}f^{0}_{\overleftarrow{e}_{1}}\left(\mathbf{z}_{\mathbf{w}_{\overrightarrow{e}_{1}}},\mathbf{x}^{0}_{\overrightarrow{e}_{1}},\mathbf{x}^{0}_{\overleftarrow{e}_{2}}\right) \\
&\hat{\boldsymbol{\nu}}^{t+1}_{\overleftarrow{e}_{1}} = \lim_{N \to \infty} \frac{1}{N}\mathbb{E}\bigg[\sum_{i=1}^{N}\frac{\partial f_{\overleftarrow{e}_{1},i}^{t}}{\partial \mathbf{z}_{\mathbf{w}_{\overleftarrow{e}_{1}},i},\varphi_{\overleftarrow{e}_{1}}}\left(\mathbf{z}_{\mathbf{w}_{\overrightarrow{e}_{1}}},\mathbf{z}_{\mathbf{w}_{\overrightarrow{e}_{1}}}\rho^{-1}_{\mathbf{w}_{\overrightarrow{e}_{1}}}\boldsymbol{\nu}_{\overrightarrow{e}_{1}}^{t}+\mathbf{Z}^{t}_{\overrightarrow{e}_{1}},\mathbf{w}_{\overrightarrow{e}_{2}}\hat{\boldsymbol{\nu}}_{\overleftarrow{e}_{2}}^{t}+\mathbf{Z}^{t}_{\overleftarrow{e}_{2}}\right)\bigg] \\
&\boldsymbol{\kappa}_{\overleftarrow{e}_{1}}^{s+1,t+1} = \lim_{n \to \infty} \frac{1}{N}
\mathbb{E}\bigg [f^{s}_{\overleftarrow{e}_{1}}\left(\mathbf{z}_{\mathbf{w}_{\overrightarrow{e}_{1}}},\mathbf{z}_{\mathbf{w}_{\overrightarrow{e}_{1}}}\rho^{-1}_{\mathbf{w}_{\overrightarrow{e}_{1}}}\boldsymbol{\nu}_{\overrightarrow{e}_{1}}^{s}+\mathbf{Z}^{s}_{\overrightarrow{e}_{1}},\mathbf{w}_{\overrightarrow{e}_{2}}\hat{\boldsymbol{\nu}}_{\overleftarrow{e}_{2}}^{s}+\mathbf{Z}^{s}_{\overleftarrow{e}_{2}}\right)^{\top} \notag \\
&f^{t}_{\overleftarrow{e}_{1}}\left(\mathbf{z}_{\mathbf{w}_{\overrightarrow{e}_{1}}},\mathbf{z}_{\mathbf{w}_{\overrightarrow{e}_{1}}}\rho^{-1}_{\mathbf{w}_{\overrightarrow{e}_{1}}}\boldsymbol{\nu}_{\overrightarrow{e}_{1}}^{t}+\mathbf{Z}^{t}_{\overrightarrow{e}_{1}},\mathbf{w}_{\overrightarrow{e}_{2}}\hat{\boldsymbol{\nu}}_{\overleftarrow{e}_{2}}^{t}+\mathbf{Z}^{t}_{\overleftarrow{e}_{2}}\right)\bigg ]\end{align}
\item for any $2\leqslant l \leqslant L-1$
\begin{align}
    &\boldsymbol{\nu}_{\overrightarrow{e}_{l}}^{0} = \lim_{N \to \infty} \frac{1}{N}\mathbf{w}_{\overrightarrow{e}_{l}}^{\top}f^{0}_{\overrightarrow{e}_{l}}(\mathbf{x}^{0}_{\overleftarrow{e}_{l}}), \thickspace \boldsymbol{\kappa}^{1,1}_{\overrightarrow{e}_{l}} = \lim_{N \to \infty} \frac{1}{N}f^{0}_{\overrightarrow{e}_{l}}(\mathbf{x}^{0}_{\overleftarrow{e}_{l}})^{\top}f^{0}_{\overrightarrow{e}_{l}}(\mathbf{x}^{0}_{\overleftarrow{e}_{l}}) \\
    &\boldsymbol{\nu}^{t+1}_{\overrightarrow{e}_{l}} = \lim_{N \to +\infty} \frac{1}{N}\mathbb{E}\left[\mathbf{w}_{\overrightarrow{e}_{l}}^{\top}f^{t}_{\overrightarrow{e}_{l}}\left(\mathbf{z}_{\mathbf{w}_{\overrightarrow{e}_{l-1}}}\rho^{-1}_{\mathbf{w}_{\overrightarrow{e}_{l-1}}}\boldsymbol{\nu}_{\overrightarrow{e}_{l-1}}^{t}+\mathbf{Z}^{t}_{\overrightarrow{e}_{l-1}}, \mathbf{w}_{\overrightarrow{e}_{l}}\hat{\boldsymbol{\nu}}^{t}_{\overleftarrow{e}_{l}}+\mathbf{Z}^{t}_{\overleftarrow{e}_{l}}\right)\right] \\
    &\boldsymbol{\kappa}_{\overrightarrow{e}_{l}}^{s+1,t+1} = \boldsymbol{\kappa}_{\overrightarrow{e}_{l}}^{t+1,s+1} = \lim_{N \to +\infty} \\
    &\frac{1}{N}\mathbb{E}\bigg[\left(f^{s}_{\overrightarrow{e}_{l}}\left(\mathbf{z}_{\mathbf{w}_{\overrightarrow{e}_{l-1}}}\rho^{-1}_{\mathbf{w}_{\overrightarrow{e}_{l-1}}}\boldsymbol{\nu}_{\overrightarrow{e}_{l-1}}^{s}+\mathbf{Z}^{s}_{\overrightarrow{e}_{l-1}},\mathbf{w}_{\overrightarrow{e}_{l}}\hat{\boldsymbol{\nu}}^{s}_{\overleftarrow{e}_{l}}+\mathbf{Z}^{s}_{\overleftarrow{e}_{l}}\right)-\mathbf{w}_{\overrightarrow{e}_{l}}\rho_{\mathbf{w}_{\overrightarrow{e}_{l}}}^{-1}\boldsymbol{\nu}_{\overrightarrow{e}_{l}}^{s+1}\right)^{\top} \notag  \\
    &\left(f^{t}_{\overrightarrow{e}_{l}}\left(\mathbf{z}_{\mathbf{w}_{\overrightarrow{e}_{l-1}}}\rho^{-1}_{\mathbf{w}_{\overrightarrow{e}_{l-1}}}\boldsymbol{\nu}_{\overrightarrow{e}_{l-1}}^{t}+\mathbf{Z}^{t}_{\overrightarrow{e}_{l-1}},\mathbf{w}_{\overrightarrow{e}_{l}}\hat{\boldsymbol{\nu}}^{t}_{\overleftarrow{e}_{l}}+\mathbf{Z}^{t}_{\overleftarrow{e}_{l}}\right)-\mathbf{w}_{\overrightarrow{e}_{l}}\rho_{\mathbf{w}_{\overrightarrow{e}_{l}}}^{-1}\boldsymbol{\nu}_{\overrightarrow{e}_{l}}^{t+1}\right)\bigg] \\
    &\hat{\boldsymbol{\nu}}_{\overleftarrow{e}_{l}}^{0}, \boldsymbol{\kappa}_{\overleftarrow{e}_{l}}^{1,1} = \lim_{n \to \infty} \frac{1}{N}
f^{0}_{\overleftarrow{e}_{l}}\left(\mathbf{z}_{\mathbf{w}_{\overrightarrow{e}_{l}}},\mathbf{x}^{0}_{\overrightarrow{e}_{l}},\mathbf{x}^{0}_{\overleftarrow{e}_{l+1}}\right)^{\top}f^{0}_{\overleftarrow{e}_{l}}\left(\mathbf{z}_{\mathbf{w}_{\overrightarrow{e}_{l}}},\mathbf{x}^{0}_{\overrightarrow{e}_{l}},\mathbf{x}^{0}_{\overleftarrow{e}_{l+1}}\right) \\
&\hat{\boldsymbol{\nu}}^{t+1}_{\overleftarrow{e}_{l}} = \lim_{N \to \infty} \frac{1}{N}\mathbb{E}\bigg[\sum_{i=1}^{N}\frac{\partial f_{\overleftarrow{e}_{l},i}^{t}}{\partial \mathbf{z}_{\mathbf{w}_{\overleftarrow{e}_{l}},i},\varphi_{\overleftarrow{e}_{l}}}\left(\mathbf{z}_{\mathbf{w}_{\overrightarrow{e}_{l}}},\mathbf{z}_{\mathbf{w}_{\overrightarrow{e}_{l}}}\rho^{-1}_{\mathbf{w}_{\overrightarrow{e}_{l}}}\boldsymbol{\nu}_{\overrightarrow{e}_{l}}^{t}+\mathbf{Z}^{t}_{\overrightarrow{e}_{l}},\mathbf{w}_{\overrightarrow{e}_{l+1}}\hat{\boldsymbol{\nu}}_{\overrightarrow{e}_{l+1}}^{t}\mathbf{Z}^{t}_{\overleftarrow{e}_{l+1}}\right)\bigg] \\
&\boldsymbol{\kappa}_{\overleftarrow{e}_{l}}^{s+1,t+1} = \lim_{n \to \infty} \frac{1}{N}
\mathbb{E}\bigg [f^{s}_{\overleftarrow{e}_{l}}\left(\mathbf{z}_{\mathbf{w}_{\overrightarrow{e}_{l}}},\mathbf{z}_{\mathbf{w}_{\overrightarrow{e}_{l}}}\rho^{-1}_{\mathbf{w}_{\overrightarrow{e}_{l}}}\boldsymbol{\nu}_{\overrightarrow{e}_{l}}^{s}+\mathbf{Z}^{s}_{\overrightarrow{e}_{l}},\mathbf{w}_{\overrightarrow{e}_{l+1}}\hat{\boldsymbol{\nu}}_{\overrightarrow{e}_{l+1}}^{s}\mathbf{Z}^{s}_{\overleftarrow{e}_{l+1}}\right)^{\top} \notag \\
&f^{t}_{\overleftarrow{e}_{l}}\left(\mathbf{z}_{\mathbf{w}_{\overrightarrow{e}_{l}}},\mathbf{z}_{\mathbf{w}_{\overrightarrow{e}_{l}}}\rho^{-1}_{\mathbf{w}_{\overrightarrow{e}_{l}}}\boldsymbol{\nu}_{\overrightarrow{e}_{l}}^{t}+\mathbf{Z}^{t}_{\overrightarrow{e}_{l}},\mathbf{w}_{\overrightarrow{e}_{l+1}}\hat{\boldsymbol{\nu}}_{\overrightarrow{e}_{l+1}}^{t}\mathbf{Z}^{t}_{\overleftarrow{e}_{l+1}}\right)\bigg ]\end{align}
\item for l=L
\begin{align}
    &\boldsymbol{\nu}_{\overrightarrow{e}_{L}}^{0} = \lim_{N \to \infty} \frac{1}{N}\mathbf{w}_{\overrightarrow{e}_{l}}^{\top}f^{0}_{\overrightarrow{e}_{L}}(\mathbf{x}^{0}_{\overleftarrow{e}_{L}}), \thickspace \boldsymbol{\kappa}^{1,1}_{\overrightarrow{e}_{L}} = \lim_{N \to \infty} \frac{1}{N}f^{0}_{\overrightarrow{e}_{L}}(\mathbf{x}^{0}_{\overleftarrow{e}_{L}})^{\top}f^{0}_{\overrightarrow{e}_{L}}(\mathbf{x}^{0}_{\overleftarrow{e}_{L}}) \\
    &\boldsymbol{\nu}^{t+1}_{\overrightarrow{e}_{L}} = \lim_{N \to +\infty} \frac{1}{N}\mathbb{E}\left[\mathbf{w}_{\overrightarrow{e}_{L}}^{\top}f^{t}_{\overrightarrow{e}_{L}}\left(\mathbf{z}_{\mathbf{w}_{\overrightarrow{e}_{L-1}}}\rho^{-1}_{\mathbf{w}_{\overrightarrow{e}_{L-1}}}\boldsymbol{\nu}_{\overrightarrow{e}_{L-1}}^{t}+\mathbf{Z}^{t}_{\overrightarrow{e}_{L-1}}, \mathbf{w}_{\overrightarrow{e}_{L}}\hat{\boldsymbol{\nu}}^{t}_{\overleftarrow{e}_{L}}+\mathbf{Z}^{t}_{\overleftarrow{e}_{L}}\right)\right] \\
    &\boldsymbol{\kappa}_{\overrightarrow{e}_{L}}^{s+1,t+1} = \boldsymbol{\kappa}_{\overrightarrow{e}_{L}}^{t+1,s+1} = \lim_{N \to +\infty} \\
    &\frac{1}{N}\mathbb{E}\bigg[\left(f^{s}_{\overrightarrow{e}_{L}}\left(\mathbf{z}_{\mathbf{w}_{\overrightarrow{e}_{L-1}}}\rho^{-1}_{\mathbf{w}_{\overrightarrow{e}_{L-1}}}\boldsymbol{\nu}_{\overrightarrow{e}_{L-1}}^{s}+\mathbf{Z}^{s}_{\overrightarrow{e}_{L-1}},\mathbf{w}_{\overrightarrow{e}_{L}}\hat{\boldsymbol{\nu}}^{s}_{\overleftarrow{e}_{L}}+\mathbf{Z}^{s}_{\overleftarrow{e}_{L}}\right)-\mathbf{w}_{\overrightarrow{e}_{L}}\rho_{\mathbf{w}_{\overrightarrow{e}_{L}}}^{-1}\boldsymbol{\nu}_{\overrightarrow{e}_{L}}^{s+1}\right)^{\top} \notag \\
    &\left(f^{t}_{\overrightarrow{e}_{L}}\left(\mathbf{z}_{\mathbf{w}_{\overrightarrow{e}_{L-1}}}\rho^{-1}_{\mathbf{w}_{\overrightarrow{e}_{L-1}}}\boldsymbol{\nu}_{\overrightarrow{e}_{L-1}}^{t}+\mathbf{Z}^{t}_{\overrightarrow{e}_{L-1}},\mathbf{w}_{\overrightarrow{e}_{L}}\hat{\boldsymbol{\nu}}^{t}_{\overleftarrow{e}_{L}}+\mathbf{Z}^{t}_{\overleftarrow{e}_{L}}\right)-\mathbf{w}_{\overrightarrow{e}_{L}}\rho_{\mathbf{w}_{\overrightarrow{e}_{L}}}^{-1}\boldsymbol{\nu}_{\overrightarrow{e}_{L}}^{t+1}\right)\bigg] \\
    &\hat{\boldsymbol{\nu}}_{\overleftarrow{e}_{L}}^{0}, \boldsymbol{\kappa}_{\overleftarrow{e}_{L}}^{1,1} = \lim_{n \to \infty} \frac{1}{N}
f^{0}_{\overleftarrow{e}_{L}}\left(\mathbf{z}_{\mathbf{w}_{\overrightarrow{e}_{L}}},\mathbf{x}^{0}_{\overrightarrow{e}_{L}}\right)^{\top}f^{0}_{\overleftarrow{e}_{L}}\left(\mathbf{z}_{\mathbf{w}_{\overrightarrow{e}_{L}}}\right) \\
&\hat{\boldsymbol{\nu}}^{t+1}_{\overleftarrow{e}_{L}} = \lim_{N \to \infty} \frac{1}{N}\mathbb{E}\bigg[\sum_{i=1}^{N}\frac{\partial f_{\overleftarrow{e}_{L},i}^{t}}{\partial \mathbf{z}_{\mathbf{w}_{\overleftarrow{e}_{L}},i},\varphi_{\overleftarrow{e}_{L}}}\left(\mathbf{z}_{\mathbf{w}_{\overrightarrow{e}_{L}}},\mathbf{z}_{\mathbf{w}_{\overrightarrow{e}_{L}}}\rho^{-1}_{\mathbf{w}_{\overrightarrow{e}_{L}}}\boldsymbol{\nu}_{\overrightarrow{e}_{L}}^{t}+\mathbf{Z}^{t}_{\overrightarrow{e}_{L}}\right)\bigg] \\
&\boldsymbol{\kappa}_{\overleftarrow{e}_{L}}^{s+1,t+1} = \lim_{n \to \infty} \frac{1}{N}
\mathbb{E}\bigg [f^{s}_{\overleftarrow{e}_{L}}\left(\mathbf{z}_{\mathbf{w}_{\overrightarrow{e}_{L}}},\mathbf{z}_{\mathbf{w}_{\overrightarrow{e}_{L}}}\rho^{-1}_{\mathbf{w}_{\overrightarrow{e}_{L}}}\boldsymbol{\nu}_{\overrightarrow{e}_{L}}^{s}+\mathbf{Z}^{s}_{\overrightarrow{e}_{L}}\right)^{\top} \notag \\
&f^{t}_{\overleftarrow{e}_{L}}\left(\mathbf{z}_{\mathbf{w}_{\overrightarrow{e}_{L}}},\mathbf{z}_{\mathbf{w}_{\overrightarrow{e}_{L}}}\rho^{-1}_{\mathbf{w}_{\overrightarrow{e}_{L}}}\boldsymbol{\nu}_{\overrightarrow{e}_{L}}^{t}+\mathbf{Z}^{t}_{\overrightarrow{e}_{L}}\right)\bigg ]\end{align}
\end{itemize}
where, for any $1 \leqslant l \leqslant L$, the symbol $\partial \mathbf{z}_{\mathbf{w}_{\overrightarrow{e}},i},\varphi_{\overrightarrow{e}}$ denotes the partial derivative w.r.t. the argument of $\varphi_{\overrightarrow{e}}$, $(\bZ^1_{\overrightarrow{e}}, \dots, \bZ^t_{\overrightarrow{e}})$ is a centered Gaussian random vector with covariance $(\boldsymbol{\kappa}^{r,s}_{\overrightarrow{e}})_{r,s \leq t} \otimes \mathbf{I}_{n_{w}}$ (and similarly for left-oriented edges), and 
$z_{\mathbf{w}_{\overrightarrow{e}}}$ is distributed according to $\mathbf{N}(0,\boldsymbol{\rho}_{\mathbf{w}_{\overrightarrow{e}}})$.
\begin{theorem}
\label{thm:graph-AMP}
Assume \ref{ass:main1}-\ref{ass:main7}. Define, as above, independently for each $\overrightarrow{e}_{l}$, $\bZ^0_{\overrightarrow{e}_{l}}= \bx^0_{\overrightarrow{e}_{l}}$ and $(\bZ^1_{\overrightarrow{e}_{l}}, \dots, \bZ^t_{\overrightarrow{e}_{l}})$ a centered Gaussian random vector of covariance $(\boldsymbol{\kappa}^{r,s}_{\overrightarrow{e}_{l}})_{r,s \leq t} \otimes \mathbf{I}_{n_{l-1}}$. Then for any sequence of uniformly (in $n$) pseudo-Lipschitz function $\Phi:(\R^{n_{l-1} \times (t+1)q})^{2} \to \R $,  for any $1\leqslant l \leqslant L$
\begin{align*}
    &\Phi\left(\left(\bx^s_{\overrightarrow{e}_{l}}\right)_{0 \leq s \leq t}, \left(\bx^s_{\overleftarrow{e}_{l-1}}\right)_{0 \leq s \leq t}\right) \approxP \\
   & \hspace{2cm}\E \bigg[ \Phi\bigg(\left(\mathbf{z}_{\mathbf{w}_{\overrightarrow{e}_{l}}}\rho_{\mathbf{w}_{\overrightarrow{e}_{l}}}^{-1}\boldsymbol{\nu}_{\overrightarrow{e}_{l}}^{s}+\bZ^s_{\overrightarrow{e}_{l-1}}\right)_{0 \leq s \leq t}, \left(\mathbf{w}_{\overrightarrow{e}_{l-1}}\hat{\boldsymbol{\nu}}^{s}_{\overleftarrow{e}_{l-1}}+\bZ^s_{\overleftarrow{e}_{l-1}}\right)_{0 \leq s \leq t} \bigg]
\end{align*}
\end{theorem}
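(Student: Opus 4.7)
The plan is to reduce Theorem \ref{thm:graph-AMP} to the graph-based AMP framework of \cite{gerbelot2021graph}, with two extensions: (i) allowing matrix-valued iterates living in $\mathbb{R}^{n_l \times q}$ for finite $q$, and (ii) incorporating the low-rank spike arguments $\mathbf{A}_{\overrightarrow{e}} \mathbf{w}_{\overrightarrow{e}}$ that appear frozen inside the non-linearities on left-oriented edges. Both extensions are individually standard in the AMP literature, so the proof should follow a Bolthausen-style Gaussian conditioning induction with only structural bookkeeping modifications.

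First I would treat the spike terms. Because $\mathbf{A}_{\overrightarrow{e}}$ has i.i.d.\ $\mathcal{N}(0,1/N)$ entries and $\mathbf{w}_{\overrightarrow{e}}$ is an independent matrix whose empirical covariance $\mathbf{w}_{\overrightarrow{e}}^\top \mathbf{w}_{\overrightarrow{e}}/N$ concentrates to $\boldsymbol{\rho}_{\mathbf{w}_{\overrightarrow{e}}}$, the joint distribution of $(\mathbf{A}_{\overrightarrow{e}} \mathbf{w}_{\overrightarrow{e}}, \mathbf{A}_{\overrightarrow{e}} \mathbf{m}^t_{\overrightarrow{e}})$ is asymptotically Gaussian with cross-covariance $\mathbf{w}_{\overrightarrow{e}}^\top \mathbf{m}^t_{\overrightarrow{e}}/N \to \boldsymbol{\nu}^t_{\overrightarrow{e}}$. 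Conditioning on $\mathbf{z}_{\mathbf{w}_{\overrightarrow{e}}} := \mathbf{A}_{\overrightarrow{e}} \mathbf{w}_{\overrightarrow{e}}$, the orthogonal decomposition of Gaussian vectors expresses $\mathbf{A}_{\overrightarrow{e}} \mathbf{m}^t_{\overrightarrow{e}}$ as the deterministic projection $\mathbf{z}_{\mathbf{w}_{\overrightarrow{e}}} \boldsymbol{\rho}_{\mathbf{w}_{\overrightarrow{e}}}^{-1} \boldsymbol{\nu}^t_{\overrightarrow{e}}$ plus an independent Gaussian, which is exactly the form in which $\mathbf{z}_{\mathbf{w}_{\overrightarrow{e}}}$ enters the state evolution recursion. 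This reduces the spike problem to a standard conditioning computation.

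Second I would run the Bolthausen conditioning induction on the iteration index $t$, jointly over all edges, following the graph framework of \cite{gerbelot2021graph}. At step $t$ I condition on the $\sigma$-algebra generated by all past iterates $\{\mathbf{x}^s_{\overrightarrow{e}'}\}_{s \leq t,\,\overrightarrow{e}' \in \overrightarrow{E}}$ together with the spike projections $\{\mathbf{z}_{\mathbf{w}_{\overrightarrow{e}}}\}_{\overrightarrow{e} \in \overrightarrow{E}}$; Gaussian conditioning then decomposes $\mathbf{A}_{\overrightarrow{e}} \mathbf{m}^t_{\overrightarrow{e}}$ into a mean part supported on previous iterates, which cancels exactly against the Onsager correction $\mathbf{m}^{t-1}_{\overleftarrow{e}}(\mathbf{b}^t_{\overrightarrow{e}})^\top$ thanks to definition \eqref{eq:Ons_jacob}, plus a fresh Gaussian noise in $\mathbb{R}^{n_l \times q}$ whose covariance takes the Kronecker form $\boldsymbol{\kappa}^{t,t}_{\overrightarrow{e}} \otimes \mathbf{I}_{n_l}$. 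Uniform pseudo-Lipschitz continuity of the non-linearities then propagates the Gaussian limit through $f^t_{\overrightarrow{e}}$ and $f^t_{\overleftarrow{e}}$ via the Gauss--Poincar\'e inequality (Lemma C.8 of \cite{berthier2020state}), and sub-Gaussian concentration of the empirical covariances closes the induction to produce the $\approxP$ convergence stated in the theorem.

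The main obstacle, and the bulk of the technical labor, is the bookkeeping: each scalar covariance in \cite{gerbelot2021graph} becomes a $q \times q$ matrix $\boldsymbol{\kappa}^{s,t}_{\overrightarrow{e}}$, each Onsager derivative becomes a $q \times q$ Jacobian, and the conditioning must be carried out coherently across every edge of the graph and every iteration index simultaneously while keeping the ordering of revealed information consistent. The matrix-valued extension itself introduces no new analytic difficulty because the $q$ columns of $\mathbf{m}^t$ all see the same random matrix $\mathbf{A}_{\overrightarrow{e}}$, so independence of the noise columns is automatic; what must be checked is that every scalar covariance identity from the scalar framework lifts to its matrix-valued analogue, that the cross-terms with $\mathbf{z}_{\mathbf{w}_{\overrightarrow{e}}}$ are handled consistently across all edges incident to a given vertex, and that no new concentration inequality is required beyond those already established for sub-Gaussian rows.
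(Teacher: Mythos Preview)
Your approach is correct and would work, but it is considerably more laborious than what the paper actually does. The paper's proof is a two-sentence invocation: Theorem~\ref{thm:graph-AMP} is stated as a direct special case of Lemma~2 in \cite{gerbelot2021graph}, with the only modification being that the left-oriented non-linearities carry an extra frozen argument $\mathbf{A}_{\overrightarrow{e}}\mathbf{w}_{\overrightarrow{e}}$. The subgaussian assumption on the rows of $\mathbf{w}_{\overrightarrow{e}}$ and $\mathbf{x}^0_{\overrightarrow{e}}$ is there precisely to ensure the finite-norm hypotheses of that lemma hold with high probability, and nothing else needs to be verified.

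The difference between your route and the paper's is therefore one of packaging rather than substance. You propose to re-run the Bolthausen conditioning induction from scratch, treating the matrix-valued iterates and the spike decomposition as genuine extensions requiring new bookkeeping. The paper instead recognizes that the framework of \cite{gerbelot2021graph} already accommodates both: finite-$q$ matrix-valued variables are part of that paper's setup from the outset, and the low-rank perturbation $\mathbf{A}_{\overrightarrow{e}}\mathbf{w}_{\overrightarrow{e}}$ is handled by the perturbed version of their main lemma. What your outline buys is self-containment and an explicit account of where the $\mathbf{z}_{\mathbf{w}_{\overrightarrow{e}}}\boldsymbol{\rho}_{\mathbf{w}_{\overrightarrow{e}}}^{-1}\boldsymbol{\nu}_{\overrightarrow{e}}^{s}$ term comes from; what the paper's approach buys is brevity, since all of the induction, Onsager cancellation, and concentration work has already been done once in the cited reference.
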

In summary, at each time step, the variables associated with right oriented edges $\mathbf{x}_{\overrightarrow{e}_{l}}$ asymptotically behave as the sum of the ground truth $\mathbf{w}_{\overrightarrow{e}_{l}}$ reweighted by a $q \times q$ matrix coefficient $\hat{\boldsymbol{\nu}}_{\overleftarrow{e}_{l}}$ and a $n_{l-1} \times q $ random matrix with i.i.d. lines $\mathbf{Z}_{\overrightarrow{e}_{l}}$ with $q \times q$ covariance $\boldsymbol{\kappa}_{\overleftarrow{e}_{l}}$ determined by the function associated to the corresponding left-oriented arrow $f^{t}_{\overleftarrow{e}_{l}}$. Similarly, the variables associated with left oriented edges $\mathbf{x}_{\overleftarrow{e}_{l}}$ asymptotically behave as the sum of the linear response to the ground truth $\mathbf{z}_{\mathbf{w}_{\overrightarrow{e}_{l}}}$ (asymptotic equivalent of $\mathbf{A}_{\overrightarrow{e}_{l}}\mathbf{w}_{\overrightarrow{e}_{l}}$) reweighted by a $q \times q$ matrix coefficient $\boldsymbol{\nu}_{\overleftarrow{e}_{l}}$ and a $n_{l} \times q $ random matrix with i.i.d. lines $\mathbf{Z}_{\overleftarrow{e}_{l}}$ with $q \times q$ covariance $\boldsymbol{\kappa}_{\overrightarrow{e}_{l}}$ determined by the function associated to the corresponding right-oriented arrow $f^{t}_{\overrightarrow{e}_{l}}$.
%\vspace{-2cm}
\begin{proof}
This result is a special case of Lemma 2 from \cite{gerbelot2021graph}, with a perturbation where only the left-oriented edges involve an additional dependence on $\mathbf{A}_{\overrightarrow{e}}\mathbf{w}_{\overrightarrow{e}}$. The required conditions are the same as in \cite{gerbelot2021graph}, barring the subgaussian assumption (A3) which ensures the scaled norm of the $\mathbf{x}_{\overrightarrow{e}}^{0}, \mathbf{w}_{\overrightarrow{e}}$ are finite with high-probability as $n \to \infty$.
\end{proof}
\subsection{State evolution for multilayer AMP iterations with random convolutional matrices}
\label{app_sec_sep}
The following lemma proves the state evolution equations for a multilayer AMP iteration where the dense Gaussian matrices are replaced with random convolutional ones (MCC from Def.\ref{dfn:mcc}) with variance $\frac{1}{N}$, 
with a vector valued variables, i.e. q=1, and separables non-linearities. We choose the variance as $\frac{1}{N}$ to follow the notations of \cite{gerbelot2021graph} for more convenience, recovering the variances of iteration Eq.\eqref{eqn:ml-amp-iterate} is a straightforward rescaling as done in \cite{berthier2020state} and will be discussed in the next section.
    Assume $q=1$ and that, for any $t \in \mathbb{N}$ and $1 \leqslant l \leqslant L$, the functions $f^{t}_{\overrightarrow{e}_{l}}, f^{t}_{\overleftarrow{e}_{l}}$ are separable in all their arguments, i.e there exists scalar valued, pseudo-Lipschitz functions $\sigma^{t}_{\overrightarrow{e}_{l}} : \mathbb{R}^{2} \to \mathbb{R},\sigma^{t}_{\overleftarrow{e}_{l}}:\mathbb{R}^{3} \to \mathbb{R}$ (where $\sigma^{t}_{\overrightarrow{e}_{1}} : \mathbb{R} \to \mathbb{R}, \sigma_{\overleftarrow{e}_{L}}^{t} : \mathbb{R}^{2} \to \mathbb{R}$) such that:
    \begin{align*}
    &\mbox{for $l=1$, for any $1 \leqslant i \leqslant n_{0}$}: \\
        &\hspace{1cm}f^{t}_{\overleftarrow{e}_{1}}(\mathbf{x}^{t}_{\overleftarrow{e}_{1}})_{i} = \sigma^{t}_{\overleftarrow{e}_{1}}(x^{t}_{\overleftarrow{e}_{1},i}) \\
        &\mbox{for any $1 \leqslant l \leqslant L-1$, for any $1\leqslant i \leqslant n_{l}$:} \\
        &\hspace{1cm}f^{t}_{\overleftarrow{e}_{l}}\left(\mathbf{A}_{\overrightarrow{e}_{l}}\mathbf{w}_{\overrightarrow{e}_{l}},\mathbf{x}^{t}_{\overrightarrow{e}_{l}}, \mathbf{x}^{t}_{\overleftarrow{e}_{l+1}}\right)_{i} = \sigma^{t}_{\overleftarrow{e}_{l}}\left((\mathbf{A}_{\overrightarrow{e}_{l}}\mathbf{w}_{\overrightarrow{e}_{l}})_{i},x^{t}_{\overrightarrow{e}_{l},i}, x^{t}_{\overleftarrow{e}_{l+1},i}\right) \\
        &\mbox{for any $2 \leqslant l \leqslant L$, $1 \leqslant i \leqslant n_{l-1}$:} \\
        &\hspace{1cm}f^{t}_{\overrightarrow{e}_{l}}\left(\mathbf{x}^{t}_{\overrightarrow{e}_{l-1}}, \mathbf{x}^{t}_{\overleftarrow{e}_{l}}\right)_{i} = \sigma^{t}_{\overrightarrow{e}_{l}}\left(x^{t}_{\overrightarrow{e}_{l-1},i}, x^{t}_{\overleftarrow{e}_{l},i}\right) \\
        &\mbox{for l=L, any $1 \leqslant i \leqslant n_{L}$:} \\
        &\hspace{1cm}f^{t}_{\overleftarrow{e}_{L}}(\mathbf{A}_{\overrightarrow{e}_{L}}\mathbf{w}_{\overleftarrow{e}_{L}},\mathbf{x}^{t}_{\overrightarrow{e}_{L}})_{i} = \sigma^{t}_{\overleftarrow{e}_{L}}((\mathbf{A}_{\overrightarrow{e}_{L}}\mathbf{w}_{\overleftarrow{e}_{L}})_{i},x^{t}_{\overrightarrow{e}_{L},i})
    \end{align*}
    Define the following scalar SE equations
    \begin{itemize}
    \label{eq:scalar_se}
        \item for $l=1$:
        \begin{align}
            &\nu_{\overrightarrow{e}_{1}}^{0} =  \delta_{0}\mathbb{E}\left[w_{\overrightarrow{e}_{1}}\sigma^{0}_{\overrightarrow{e}_{1}}(x^{0}_{\overleftarrow{e}_{1}})\right], \thickspace \kappa^{1,1}_{\overrightarrow{e}_{1}} = \delta_{0}\mathbb{E}\left[\sigma^{0}_{\overrightarrow{e}_{1}}(x^{0}_{\overleftarrow{e}_{1}})\sigma^{0}_{\overrightarrow{e}_{1}}(x^{0}_{\overleftarrow{e}_{1}})\right] \\
            &\nu^{t+1}_{\overrightarrow{e}_{1}} = \delta_{0}\mathbb{E}\left[w_{\overrightarrow{e}_{1}}\sigma^{t}_{\overrightarrow{e}_{1}}\left(w_{\overrightarrow{e}_{1}}\hat{\nu}^{t}_{\overleftarrow{e}_{1}}+Z^{t}_{\overleftarrow{e}_{1}}\right)\right] \\
            &\kappa_{\overrightarrow{e}_{1}}^{s+1,t+1} = \kappa_{\overrightarrow{e}_{1}}^{t+1,s+1} =\delta_{0}\mathbb{E}\bigg[\left(\sigma^{s}_{\overrightarrow{e}_{1}}\left(w_{\overrightarrow{e}_{1}}\hat{\nu}^{s}_{\overleftarrow{e}_{1}}+Z^{s}_{\overleftarrow{e}_{1}}\right)-w_{\overrightarrow{e}_{1}}\rho_{w_{\overrightarrow{e}_{1}}}^{-1}\nu_{\overrightarrow{e}_{1}}^{s+1}\right)\notag  \\
            &\left(\sigma^{t}_{\overrightarrow{e}_{1}}\left(w_{\overrightarrow{e}_{1}}\hat{\nu}^{t}_{\overleftarrow{e}_{1}}+Z^{t}_{\overleftarrow{e}_{1}}\right)-w_{\overrightarrow{e}_{1}}\rho_{w_{\overrightarrow{e}_{1}}}^{-1}\nu_{\overrightarrow{e}_{1}}^{t+1}\right)\bigg] \\
            &\hat{\nu}_{\overleftarrow{e}_{1}}^{0}, \kappa_{\overleftarrow{e}_{1}}^{1,1} = 
        \delta_{1}\mathbb{E}\bigg[\sigma^{0}_{\overleftarrow{e}_{1}}\left(z_{w_{\overrightarrow{e}_{1}}},x^{0}_{\overrightarrow{e}_{1}},x^{0}_{\overleftarrow{e}_{2}}\right)\sigma^{0}_{\overleftarrow{e}_{1}}\left(z_{w_{\overrightarrow{e}_{1}}},x^{0}_{\overrightarrow{e}_{1}},x^{0}_{\overleftarrow{e}_{2}}\right)\bigg] \\
        &\hat{\nu}^{t+1}_{\overleftarrow{e}_{1}} = \delta_{1}\mathbb{E}\bigg[\frac{\partial \sigma_{\overleftarrow{e}_{1},i}^{t}}{\partial z_{w_{\overleftarrow{e}_{1}},i},\varphi_{\overleftarrow{e}_{1}}}\left(z_{w_{\overrightarrow{e}_{1}}},z_{w_{\overrightarrow{e}_{1}}}\rho^{-1}_{w_{\overrightarrow{e}_{1}}}\nu_{\overrightarrow{e}_{1}}^{t}+Z^{t}_{\overrightarrow{e}_{1}},w_{\overrightarrow{e}_{2}}\hat{\nu}_{\overleftarrow{e}_{2}}^{t}+Z^{t}_{\overleftarrow{e}_{2}}\right)\bigg] \\
        &\kappa_{\overleftarrow{e}_{1}}^{s+1,t+1} = 
        \delta_{1}\mathbb{E}\bigg [\sigma^{s}_{\overleftarrow{e}_{1}}\left(z_{w_{\overrightarrow{e}_{1}}},z_{w_{\overrightarrow{e}_{1}}}\rho^{-1}_{w_{\overrightarrow{e}_{1}}}\nu_{\overrightarrow{e}_{1}}^{s}+Z^{s}_{\overrightarrow{e}_{1}},w_{\overrightarrow{e}_{2}}\hat{\nu}_{\overleftarrow{e}_{2}}^{s}+Z^{s}_{\overleftarrow{e}_{2}}\right) \notag \\
        &\sigma^{t}_{\overleftarrow{e}_{1}}\left(z_{w_{\overrightarrow{e}_{1}}},z_{w_{\overrightarrow{e}_{1}}}\rho^{-1}_{w_{\overrightarrow{e}_{1}}}\nu_{\overrightarrow{e}_{1}}^{t}+Z^{t}_{\overrightarrow{e}_{1}},w_{\overrightarrow{e}_{2}}\hat{\nu}_{\overleftarrow{e}_{2}}^{t}+Z^{t}_{\overleftarrow{e}_{2}}\right)\bigg ]\end{align}
        \item for any $2\leqslant l \leqslant L-1$
        \begin{align}
            &\nu_{\overrightarrow{e}_{l}}^{0} = \delta_{n_{l-1}}\mathbb{E}\bigg[ w_{\overrightarrow{e}_{l}}\sigma^{0}_{\overrightarrow{e}_{l}}(x^{0}_{\overleftarrow{e}_{l}})\bigg], \thickspace \kappa^{1,1}_{\overrightarrow{e}_{l}} = \delta_{n_{l-1}} \mathbb{E}\left[\sigma^{0}_{\overrightarrow{e}_{l}}(x^{0}_{\overleftarrow{e}_{l}})\sigma^{0}_{\overrightarrow{e}_{l}}(x^{0}_{\overleftarrow{e}_{l}})\right] \\
            &\nu^{t+1}_{\overrightarrow{e}_{l}} = \delta_{n_{l-1}}\mathbb{E}\left[w_{\overrightarrow{e}_{l}}\sigma^{t}_{\overrightarrow{e}_{l}}\left(z_{w_{\overrightarrow{e}_{l-1}}}\rho^{-1}_{w_{\overrightarrow{e}_{l-1}}}\nu_{\overrightarrow{e}_{l-1}}^{t}+Z^{t}_{\overrightarrow{e}_{l-1}}, w_{\overrightarrow{e}_{l}}\hat{\nu}^{t}_{\overleftarrow{e}_{l}}+Z^{t}_{\overleftarrow{e}_{l}}\right)\right] \\
            &\kappa_{\overrightarrow{e}_{l}}^{s+1,t+1} = \kappa_{\overrightarrow{e}_{l}}^{t+1,s+1} =\\
            &\delta_{n_{l-1}}\mathbb{E}\bigg[\left(\sigma^{s}_{\overrightarrow{e}_{l}}\left(z_{w_{\overrightarrow{e}_{l-1}}}\rho^{-1}_{w_{\overrightarrow{e}_{l-1}}}\nu_{\overrightarrow{e}_{l-1}}^{s}+Z^{s}_{\overrightarrow{e}_{l-1}},w_{\overrightarrow{e}_{l}}\hat{\nu}^{s}_{\overleftarrow{e}_{l}}+Z^{s}_{\overleftarrow{e}_{l}}\right)-w_{\overrightarrow{e}_{l}}\rho_{w_{\overrightarrow{e}_{l}}}^{-1}\nu_{\overrightarrow{e}_{l}}^{s+1}\right) \notag\\
            &\left(\sigma^{t}_{\overrightarrow{e}_{l}}\left(z_{w_{\overrightarrow{e}_{l-1}}}\rho^{-1}_{w_{\overrightarrow{e}_{l-1}}}\nu_{\overrightarrow{e}_{l-1}}^{t}+Z^{t}_{\overrightarrow{e}_{l-1}},w_{\overrightarrow{e}_{l}}\hat{\nu}^{t}_{\overleftarrow{e}_{l}}+Z^{t}_{\overleftarrow{e}_{l}}\right)-w_{\overrightarrow{e}_{l}}\rho_{w_{\overrightarrow{e}_{l}}}^{-1}\nu_{\overrightarrow{e}_{l}}^{t+1}\right)\bigg]  \\
            &\hat{\nu}_{\overleftarrow{e}_{l}}^{0}, \kappa_{\overleftarrow{e}_{l}}^{1,1} = 
        \delta_{n_{l}}\mathbb{E}\bigg[\sigma^{0}_{\overleftarrow{e}_{l}}\left(z_{w_{\overrightarrow{e}_{l}}},x^{0}_{\overrightarrow{e}_{l}},x^{0}_{\overleftarrow{e}_{l+1}}\right)\sigma^{0}_{\overleftarrow{e}_{l}}\left(z_{w_{\overrightarrow{e}_{l}}},x^{0}_{\overrightarrow{e}_{l}},x^{0}_{\overleftarrow{e}_{l+1}}\right)\bigg] \\
        &\hat{\nu}^{t+1}_{\overleftarrow{e}_{l}} = \delta_{n_{l}}\mathbb{E}\bigg[\frac{\partial \sigma_{\overleftarrow{e}_{l},i}^{t}}{\partial z_{w_{\overleftarrow{e}_{l}},i},\varphi_{\overleftarrow{e}_{l}}}\left(z_{w_{\overrightarrow{e}_{l}}},z_{w_{\overrightarrow{e}_{l}}}\rho^{-1}_{w_{\overrightarrow{e}_{l}}}\nu_{\overrightarrow{e}_{l}}^{t}+Z^{t}_{\overrightarrow{e}_{l}},w_{\overrightarrow{e}_{l+1}}\hat{\nu}_{\overrightarrow{e}_{l+1}}^{t}Z^{t}_{\overleftarrow{e}_{l+1}}\right)\bigg] \\
        &\kappa_{\overleftarrow{e}_{l}}^{s+1,t+1} = 
        \delta_{n_{l}}\mathbb{E}\bigg [\sigma^{s}_{\overleftarrow{e}_{l}}\left(z_{w_{\overrightarrow{e}_{l}}},z_{w_{\overrightarrow{e}_{l}}}\rho^{-1}_{w_{\overrightarrow{e}_{l}}}\nu_{\overrightarrow{e}_{l}}^{s}+Z^{s}_{\overrightarrow{e}_{l}},w_{\overrightarrow{e}_{l+1}}\hat{\nu}_{\overrightarrow{e}_{l+1}}^{s}Z^{s}_{\overleftarrow{e}_{l+1}}\right) \notag \\
        &\sigma^{t}_{\overleftarrow{e}_{l}}\left(z_{w_{\overrightarrow{e}_{l}}},z_{w_{\overrightarrow{e}_{l}}}\rho^{-1}_{w_{\overrightarrow{e}_{l}}}\nu_{\overrightarrow{e}_{l}}^{t}+Z^{t}_{\overrightarrow{e}_{l}},w_{\overrightarrow{e}_{l+1}}\hat{\nu}_{\overrightarrow{e}_{l+1}}^{t}Z^{t}_{\overleftarrow{e}_{l+1}}\right)\bigg ]\end{align}
        \item for l=L
        \begin{align}
            &\nu_{\overrightarrow{e}_{L}}^{0} = \delta_{n_{L-1}}\mathbb{E}\bigg[ w_{\overrightarrow{e}_{l}}\sigma^{0}_{\overrightarrow{e}_{L}}(x^{0}_{\overleftarrow{e}_{L}})\bigg], \thickspace \kappa^{1,1}_{\overrightarrow{e}_{L}} = \delta_{n_{L-1}}\mathbb{E}\bigg[\sigma^{0}_{\overrightarrow{e}_{L}}(x^{0}_{\overleftarrow{e}_{L}})\sigma^{0}_{\overrightarrow{e}_{L}}(x^{0}_{\overleftarrow{e}_{L}})\bigg] \\
            &\nu^{t+1}_{\overrightarrow{e}_{L}} = \delta_{n_{L-1}}\mathbb{E}\left[w_{\overrightarrow{e}_{L}}\sigma^{t}_{\overrightarrow{e}_{L}}\left(z_{w_{\overrightarrow{e}_{L-1}}}\rho^{-1}_{w_{\overrightarrow{e}_{L-1}}}\nu_{\overrightarrow{e}_{L-1}}^{t}+Z^{t}_{\overrightarrow{e}_{L-1}}, w_{\overrightarrow{e}_{L}}\hat{\nu}^{t}_{\overleftarrow{e}_{L}}+Z^{t}_{\overleftarrow{e}_{L}}\right)\right] \\
            &\kappa_{\overrightarrow{e}_{L}}^{s+1,t+1} = \kappa_{\overrightarrow{e}_{L}}^{t+1,s+1} =  \\
            &\delta_{n_{L-1}}\mathbb{E}\bigg[\left(\sigma^{s}_{\overrightarrow{e}_{L}}\left(z_{w_{\overrightarrow{e}_{L-1}}}\rho^{-1}_{w_{\overrightarrow{e}_{L-1}}}\nu_{\overrightarrow{e}_{L-1}}^{s}+Z^{s}_{\overrightarrow{e}_{L-1}},w_{\overrightarrow{e}_{L}}\hat{\nu}^{s}_{\overleftarrow{e}_{L}}+Z^{s}_{\overleftarrow{e}_{L}}\right)-w_{\overrightarrow{e}_{L}}\rho_{w_{\overrightarrow{e}_{L}}}^{-1}\nu_{\overrightarrow{e}_{L}}^{s+1}\right) \notag \\
            &\left(\sigma^{t}_{\overrightarrow{e}_{L}}\left(z_{w_{\overrightarrow{e}_{L-1}}}\rho^{-1}_{w_{\overrightarrow{e}_{L-1}}}\nu_{\overrightarrow{e}_{L-1}}^{t}+Z^{t}_{\overrightarrow{e}_{L-1}},w_{\overrightarrow{e}_{L}}\hat{\nu}^{t}_{\overleftarrow{e}_{L}}+Z^{t}_{\overleftarrow{e}_{L}}\right)-w_{\overrightarrow{e}_{L}}\rho_{w_{\overrightarrow{e}_{L}}}^{-1}\nu_{\overrightarrow{e}_{L}}^{t+1}\right)\bigg] \\
            &\hat{\nu}_{\overleftarrow{e}_{L}}^{0}, \kappa_{\overleftarrow{e}_{L}}^{1,1} = 
        \delta_{n_{L}}\mathbb{E}\bigg[\sigma^{0}_{\overleftarrow{e}_{L}}\left(z_{w_{\overrightarrow{e}_{L}}},x^{0}_{\overrightarrow{e}_{L}}\right)\sigma^{0}_{\overleftarrow{e}_{L}}\left(z_{w_{\overrightarrow{e}_{L}}}\right)\bigg] \\
        &\hat{\nu}^{t+1}_{\overleftarrow{e}_{L}} = \delta_{n_{L}}\mathbb{E}\bigg[\frac{\partial \sigma_{\overleftarrow{e}_{L},i}^{t}}{\partial z_{w_{\overleftarrow{e}_{L}},i},\varphi_{\overleftarrow{e}_{L}}}\left(z_{w_{\overrightarrow{e}_{L}}},z_{w_{\overrightarrow{e}_{L}}}\rho^{-1}_{w_{\overrightarrow{e}_{L}}}\nu_{\overrightarrow{e}_{L}}^{t}+Z^{t}_{\overrightarrow{e}_{L}}\right)\bigg] \\
        &\kappa_{\overleftarrow{e}_{L}}^{s+1,t+1} = 
        \delta_{n_{L}}\mathbb{E}\bigg [\sigma^{s}_{\overleftarrow{e}_{L}}\left(z_{w_{\overrightarrow{e}_{L}}},z_{w_{\overrightarrow{e}_{L}}}\rho^{-1}_{w_{\overrightarrow{e}_{L}}}\nu_{\overrightarrow{e}_{L}}^{s}+Z^{s}_{\overrightarrow{e}_{L}}\right) \notag\\
        &\sigma^{t}_{\overleftarrow{e}_{L}}\left(z_{w_{\overrightarrow{e}_{L}}},z_{w_{\overrightarrow{e}_{L}}}\rho^{-1}_{w_{\overrightarrow{e}_{L}}}\nu_{\overrightarrow{e}_{L}}^{t}+Z^{t}_{\overrightarrow{e}_{L}}\right)\bigg ]\end{align}
        \end{itemize}
        
\begin{lemma}  
\label{lemma:conv_SE_scalar}
        Under the assumptions of section \ref{app_sec_sep}, define, as above, independently for each $\overrightarrow{e}_{l}$, $Z^0_{\overrightarrow{e}_{l}}= x^0_{\overrightarrow{e}}$ and $(Z^1_{\overrightarrow{e}_{l}}, \dots, Z^t_{\overrightarrow{e}_{l}})$ a centered Gaussian random vector of covariance $(\boldsymbol{\kappa}^{r,s}_{\overrightarrow{e}_{l}})_{r,s \leq t}$ (and similarly for left-oriented edges). Then for any $1\leqslant l \leqslant L$, for any sequence of uniformly (in $n$) pseudo-Lipschitz function $\Phi_{l}:(\R^{ n_{l-1}\times (t+1)})^{2} \to \R $
\begin{align*}
    &\Phi\left(\left(\bx^s_{\overrightarrow{e}_{l}}\right)_{0 \leq s \leq t}, \left(\bx^s_{\overleftarrow{e}_{l}}\right)_{0 \leq s \leq t, \overleftarrow{e}_{l-1} \in \overleftarrow{E}}\right) \approxP \\
   & \hspace{2cm}\E \bigg[ \Phi\bigg(\left(z_{w_{\overrightarrow{e}_{l}}}\rho_{w_{\overrightarrow{e}_{l}}}^{-1}\nu_{\overrightarrow{e}_{l}}^{s}+Z^s_{\overrightarrow{e}_{l}}\right)_{0 \leq s \leq t, \overleftarrow{e}_{l} \in \overleftarrow{E}}, \left(w_{\overrightarrow{e}_{l-1}}\hat{\nu}^{s}_{\overleftarrow{e}_{l-1}}+Z^s_{\overleftarrow{e}_{l-1}}\right)_{0 \leq s \leq t}\bigg) \bigg]
\end{align*}
\end{lemma}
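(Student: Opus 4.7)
The plan is to reduce the scalar MLAMP iteration driven by MCC matrices to a matrix-valued MLAMP iteration driven by independent dense Gaussian matrices, apply Theorem \ref{thm:graph-AMP} to obtain a matrix-valued state evolution, and then use the cyclic symmetries of the construction to project the matrix SE down to the scalar recursion claimed in the lemma. The separability assumption is essential throughout, as it guarantees that the scalar denoisers commute with coordinate permutations, which is what allows us to carry the coordinate permutation of Lemma \ref{lem:permutation} inside the AMP iteration without modifying the non-linearities.

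First I would apply Lemma \ref{lem:permutation} at each layer to obtain $\tilde{W}^{(l)} = U^{(l)} W^{(l)} (\tilde{U}^{(l)})^\top$, a block-Toeplitz-circulant matrix with $k$ independent dense Gaussian blocks $A^{(l,1)}, \ldots, A^{(l,k)} \in \mathbb{R}^{D_l \times P_l}$. Permuting all iterates $\omega^{(l)}, B^{(l)}, \hat{h}^{(l)}, g^{(l)}$ accordingly and using separability, I would check that the permuted iteration is exactly the original MLAMP iteration with $W^{(l)}$ replaced by $\tilde{W}^{(l)}$ and its separable non-linearities applied to the permuted arguments. Next I would reshape each $P_l q$- and $D_l q$-dimensional iterate as a matrix of width $q$. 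Under this reshaping, the block-circulant action of $\tilde{W}^{(l)}$ on such a matrix $M$ reads $(\tilde{W}^{(l)} M)_{:,i} = \tfrac{1}{\sqrt{k}} \sum_{s=1}^{k} A^{(l,s)} M_{:, \sigma_s(i)}$, where $\sigma_s$ is a cyclic shift of $\{1, \ldots, q\}$. This expresses each convolutional matrix-vector product as a sum of $k$ dense matrix-vector products acting on shifted rearrangements of the input.

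I would then define an extended graph in which each original edge $\overrightarrow{e}_l$ is replaced by $k$ parallel edges carrying the independent matrices $A^{(l,s)}$, with matrix-valued variables of width $q$. The edge non-linearities are chosen so that, on each forward edge indexed by $s$, the incoming matrix is first shifted by $\sigma_s$, and on each backward edge the $k$ parallel incoming matrices (which together reproduce the action of $\tilde{W}^{(l)}$) are summed before the scalar denoiser is applied row by row. Assumption (A1), together with separability, ensures that these extended non-linearities are uniformly pseudo-Lipschitz, so that Theorem \ref{thm:graph-AMP} applies. It yields a matrix-valued SE recursion with $q \times q$ mean matrices $\boldsymbol{\nu}$ and covariance matrices $\boldsymbol{\kappa}$ on each parallel edge. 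By the cyclic invariance of the shifts $\sigma_s$, the independence of the $A^{(l,s)}$, and the separability of the denoisers, these matrices must be invariant under simultaneous cyclic permutation of their rows and columns as well as under row exchangeability; together this forces them to be scalar multiples of the identity, and projecting the matrix SE onto that scalar eigenvalue gives exactly the recursion stated in the lemma, with the $\delta_{n_{l-1}}$ factors arising from the $1/N$ normalizations of Theorem \ref{thm:graph-AMP}.

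The main obstacle is the construction of the extended graph so that its iteration reproduces the original one \emph{exactly} at finite $n$, while still satisfying the independence assumption (A1) of Theorem \ref{thm:graph-AMP} edge by edge. In particular, the Onsager corrections on the $k$ parallel edges must add up in just the right way to cancel those of the original scalar iteration, and this bookkeeping is what forces the precise pattern of shifts $\sigma_s$ inside the non-linearities. This is the technical content of the proof sketch's informal statement that the non-linearities of the matrix-valued iteration must account for both the block-circulant structure and the permutation matrices provided by Lemma \ref{lem:permutation}.
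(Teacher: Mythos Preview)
Your overall plan — permute via Lemma \ref{lem:permutation}, embed into a matrix-valued iteration with dense Gaussian couplings, apply Theorem \ref{thm:graph-AMP}, then reduce the matrix SE to scalars by symmetry — is the same architecture as the paper's proof. The substantive difference is the embedding itself. You split each layer into $k$ \emph{parallel edges}, one per independent block $A^{(l,s)}$, and push the circulant structure into cyclic-shift maps $\sigma_s$ inside the non-linearities on those edges. The paper instead keeps the graph unchanged (one edge per layer) and replaces the block-circulant $\tilde{\bA}_{\overrightarrow{e}_l}$ by a single dense i.i.d.\ matrix $\tilde{\bQ}_{\overrightarrow{e}_l}\in\mathbb{R}^{D_lq_l\times P_lq_l}$, encoding the circulant structure entirely in a zero-padded, copied non-linearity $\tilde{F}^t_{\overrightarrow{e}_l}$: the zero-padding kills the extra Gaussian columns of $\tilde{\bQ}$ beyond the first $k$ block-columns, and the $k$ copies in $\tilde{F}$ reproduce the $k$ shifted appearances of each $\mathbf{Q}^{(j)}$ in the block-circulant product. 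In this way the original scalar iterates are recovered as a designated $D_l\times q_l$ (resp.\ $P_l\times q_l$) block of the matrix iterates.

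What the paper's route buys is precisely the step you flag as the main obstacle: because the graph topology is unchanged, there is a single matrix Onsager term per edge, and separability lets one compute it directly as a trace of a Jacobian conjugated by a permutation matrix, yielding $\tilde{\mathbf{b}}^t_{\overrightarrow{e}_l}=b^t_{\overrightarrow{e}_l}\mathbf{I}_{q_l}$ exactly (and then Corollary~2 of \cite{berthier2020state} is invoked to drop the permutations inside the estimator). So the Onsager of the embedded iteration coincides with the original scalar Onsager at finite $n$, without any cross-edge bookkeeping. In your parallel-edge construction, by contrast, each of the $k$ edges carries its own Onsager term of the form $\partial f_{\overrightarrow{e}_l^{(s)}}/\partial \bx_{\overleftarrow{e}_l^{(s)}}$, and you must show that the $k$ corrections, each involving shifts $\sigma_s$ and the summed-then-denoised backward map, combine to reproduce the single scalar correction of the original iteration; you state this is what forces the pattern of shifts but do not carry it out. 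Your route should also work, and the symmetry argument that the $q_l\times q_l$ SE covariances are scalar multiples of the identity is essentially the same as the paper's, but the single-edge embedding makes the Onsager step a one-line computation rather than a case analysis over $k$ edges.
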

%\vspace{-4cm}
\begin{proof}
Consider the following iteration, corresponding to the algorithm presented in the previous section Eq.\eqref{eq:mlamp} with $q=1$ indexed on the same graph as above, but where the matrices $\mathbf{A}_{\overrightarrow{e}_{l}}$ are replaced with random convolutional ones, denoted $\hat{\mathbf{A}}_{\overrightarrow{e}_{l}}$ such that 
\begin{equation}
    \forall \thickspace \overrightarrow{e} \in \overrightarrow{E} \thickspace \hat{\mathbf{A}}_{\overrightarrow{e}_{l}} \sim \mathcal{M}(D_{\overrightarrow{e}_{l}},P_{\overrightarrow{e}_{l}},k_{\overrightarrow{e}_{l}},q_{\overrightarrow{e}_{l}})
\end{equation}
where $\mathbf{A}_{\overrightarrow{e}_{l}} \in \mathbb{R}^{D_{\overrightarrow{e}_{l}}q_{\overrightarrow{e}_{l}} \times P_{\overrightarrow{e}_{l}}q_{\overrightarrow{e}_{l}}}$, and we remind that we chose variances of $1/N$. Since we assume that $q=1$, thus the Onsager terms are scalars, which we denote with lowercase letters $b_{\overrightarrow{e}}^{t}$.
The corresponding iteration then reads:
	\begin{align}
	\begin{split}
	\label{eq:mlamp}
    \bx^{t+1}_{\overrightarrow{e_1}} &= \hat{\bA}_{\overrightarrow{e_1}} \bm^t_{\overrightarrow{e_1}} - b^t_{\overrightarrow{e_1}}\bm^{t-1}_{\overleftarrow{e_1}} \, ,  \\
    &\bm^t_{\overrightarrow{e_1}} = f^t_{\overrightarrow{e}_1}\left(\bx^t_{\overleftarrow{e_1}}\right) \, , \\
        \bx^{t+1}_{\overleftarrow{e_1}} &= \hat{\bA}_{\overrightarrow{e_1}}^\top\bm^t_{\overleftarrow{e_1}} -  b^t_{\overleftarrow{e_1}}\bm^{t-1}_{\overrightarrow{e_1}} \, ,  \\
    &\bm^t_{\overleftarrow{e_1}} = f^t_{\overleftarrow{e_1}}\left(\hat{\mathbf{A}}_{\overrightarrow{e}_{1}}\mathbf{w}_{\overrightarrow{e}_{1}},\bx^t_{\overrightarrow{e_1}},\bx^t_{\overleftarrow{e_2}}\right) \, , \\
    &\qquad \\
        \bx^{t+1}_{\overrightarrow{e_2}} &= \hat{\bA}_{\overrightarrow{e_2}} \bm^t_{\overrightarrow{e_2}} - b^t_{\overrightarrow{e_2}}\bm^{t-1}_{\overleftarrow{e_2}} \, ,  \\
    &\bm^t_{\overrightarrow{e_2}} = f^t_{\overrightarrow{e}_2}\left(\bx^t_{\overrightarrow{e_1}},\bx^t_{\overleftarrow{e_2}}\right) \, , \\
        \bx^{t+1}_{\overleftarrow{e_2}} &= \hat{\bA}_{\overrightarrow{e_2}}^\top\bm^t_{\overleftarrow{e_2}} - b^t_{\overleftarrow{e_2}}\bm^{t-1}_{\overrightarrow{e_2}} \, ,  \\
    &\bm^t_{\overleftarrow{e_2}} = f^t_{\overleftarrow{e_2}}\left(\hat{\mathbf{A}}_{\overrightarrow{e}_{2}}\mathbf{w}_{\overrightarrow{e}_{2}},\bx^t_{\overrightarrow{e_2}},\bx^t_{\overleftarrow{e_3}}\right) \, , \\
    &\qquad\\
    &\qquad\quad\vdots \\
    &\qquad \\
     \bx^{t+1}_{\overrightarrow{e_L}} &= \hat{\bA}_{\overrightarrow{e_L}} \bm^t_{\overrightarrow{e_L}} - b^t_{\overrightarrow{e_L}}\bm^{t-1}_{\overleftarrow{e_L}} \, ,  \\
    &\bm^t_{\overrightarrow{e_L}} = f^t_{\overrightarrow{e}_L}\left(\bx^t_{\overrightarrow{e}_{L-1}},\bx^t_{\overleftarrow{e_L}}\right) \, , \\
    \bx^{t+1}_{\overleftarrow{e_L}} &= \hat{\bA}_{\overrightarrow{e_L}}^\top\bm^t_{\overleftarrow{e_L}} - b^t_{\overleftarrow{e_L}}\bm^{t-1}_{\overrightarrow{e_L}} \, ,  \\
    &\bm^t_{\overleftarrow{e_L}} = f^t_{\overleftarrow{e_L}}\left(\hat{\mathbf{A}}_{\overrightarrow{e}_{L}}\mathbf{w}_{\overrightarrow{e}_{L}},\bx^t_{\overrightarrow{e_L}}\right) \,
    \end{split}
\end{align}
Then, according to Lemma \ref{lem:permutation}, for any $1\leqslant l \leqslant L$, there exists a pair of orthogonal matrices $\mathbf{U}_{\overrightarrow{e}_{l}} \in \mathbb{R}^{D_{\overrightarrow{e}_{l}}q_{\overrightarrow{e}_{l}} \times D_{\overrightarrow{e}_{l}}q_{\overrightarrow{e}_{l}}}, \mathbf{V}_{\overrightarrow{e}_{l}} \in \mathbb{R}^{P_{\overrightarrow{e}_{l}}q_{\overrightarrow{e}_{l}} \times P_{\overrightarrow{e}_{l}}q_{\overrightarrow{e}_{l}}}$ such that $\hat{\mathbf{A}}_{\overrightarrow{e}_{l}} = \mathbf{U}_{\overrightarrow{e}_{l}}\tilde{\mathbf{A}}_{\overrightarrow{e}_{l}}\mathbf{V}_{\overrightarrow{e}_{l}}^{\top}$ and $\tilde{\mathbf{A}}_{\overrightarrow{e}_{l}} = \left[\left(\Pcl_{P_{\overrightarrow{e}_{l}},q_{\overrightarrow{e}_{l}}}\right)^{i-1}\mathbf{Q}_{\overrightarrow{e}_{l}}\right]_{i=1}^{q_{\overrightarrow{e}_{l}}}$, where $\mathbf{Q}_{\overrightarrow{e}_{l}} \in \mathbb{R}^{D_{\overrightarrow{e}_{l}}\times P_{\overrightarrow{e}_{l}}q_{\overrightarrow{e}_{l}}}$ is composed of $q_{\overrightarrow{e}_{l}}$ blocks of size $D_{\overrightarrow{e}_{l}}\times P_{\overrightarrow{e}_{l}}$, denoted $\mathbf{Q}^{j}_{\overrightarrow{e}_{l}}$, verifying 
\begin{itemize}
    \item for any $1\leqslant j \leqslant k_{\overrightarrow{e}}$,  $\mathbf{Q}^{j}_{\overrightarrow{e}}$ has i.i.d. $\mathcal{N}(0,\frac{1}{N})$ elements
    \item for any $k_{\overrightarrow{e}} < j \leqslant q_{\overrightarrow{e}}$, all elements of $\mathbf{Q}^{j}_{\overrightarrow{e}}$ are zero.
\end{itemize}
In the preceding definition of $\tilde{\mathbf{A}}_{\overrightarrow{e}_{l}}$, $\mathbf{Q}_{\overrightarrow{e}_{l}}$ is understood as a vector of size $\mathbb{R}^{P_{\overrightarrow{e}}q_{\overrightarrow{e}}}$ with elements in $\mathbb{R}^{D_{\overrightarrow{e}}}$, such that the permutation matrix $\Pcl_{P_{\overrightarrow{e}},q_{\overrightarrow{e}}}$ shifts blocks of size $D_{\overrightarrow{e}} \times P_{\overrightarrow{e}}$, yielding
\begin{equation}
    \tilde{\mathbf{A}}_{\overrightarrow{e}} = \begin{bmatrix}
    \mathbf{Q}_{\overrightarrow{e}_{l}}^{(1)} & \mathbf{Q}_{\overrightarrow{e}_{l}}^{(2)} & \ldots & \mathbf{Q}_{\overrightarrow{e}_{l}}^{(k_{\overrightarrow{e}})} & & & & &  \\
     & \mathbf{Q}_{\overrightarrow{e}_{l}}^{(1)} & \mathbf{Q}_{\overrightarrow{e}_{l}}^{(2)} & \ldots & \mathbf{Q}_{\overrightarrow{e}_{l}}^{(k_{\overrightarrow{e}})} & & & & \\
    & & \mathbf{Q}_{\overrightarrow{e}_{l}}^{(1)} & \mathbf{Q}_{\overrightarrow{e}_{l}}^{(2)} & \ldots & \mathbf{Q}_{\overrightarrow{e}_{l}}^{(k_{\overrightarrow{e}})} & & & \vdots  \\
    \vdots &  \vdots & \ddots & & & & \\
    \mathbf{Q}_{\overrightarrow{e}_{l}}^{(2)} & \mathbf{Q}_{\overrightarrow{e}_{l}}^{(3)} & \ldots & \mathbf{Q}_{\overrightarrow{e}_{l}}^{(k_{\overrightarrow{e}})} & & & & & \mathbf{Q}_{\overrightarrow{e}_{l}}^{(1)}
    \end{bmatrix}
\end{equation}
The iteration then reads
\begin{align}
	\begin{split}
	\label{eq:mlamp1}
    \bx^{t+1}_{\overrightarrow{e_1}} &= \mathbf{U}_{\overrightarrow{e}_{1}}\tilde{\mathbf{A}}_{\overrightarrow{e}_{1}}\mathbf{V}_{\overrightarrow{e}_{1}}^{\top} \bm^t_{\overrightarrow{e_1}} - b^t_{\overrightarrow{e_1}}\bm^{t-1}_{\overleftarrow{e_1}} \, ,  \\
    &\bm^t_{\overrightarrow{e_1}} = f^t_{\overrightarrow{e}_1}\left(\bx^t_{\overleftarrow{e_1}}\right) \, , \\
        \bx^{t+1}_{\overleftarrow{e_1}} &= \mathbf{V}_{\overrightarrow{e}_{1}}\tilde{\mathbf{A}}_{\overrightarrow{e}_{1}}^{\top}\mathbf{U}_{\overrightarrow{e}_{1}}^{\top}\bm^t_{\overleftarrow{e_1}} - b^t_{\overleftarrow{e_1}}\bm^{t-1}_{\overrightarrow{e_1}} \, ,  \\
    &\bm^t_{\overleftarrow{e_1}} = f^t_{\overleftarrow{e_1}}\left(\mathbf{U}_{\overrightarrow{e}_{1}}\tilde{\mathbf{A}}_{\overrightarrow{e}_{1}}\mathbf{V}_{\overrightarrow{e}_{1}}^{\top}\mathbf{w}_{\overrightarrow{e}_{1}},\bx^t_{\overrightarrow{e_1}},\bx^t_{\overleftarrow{e_2}}\right) \, , \\
    &\qquad \\
        \bx^{t+1}_{\overrightarrow{e_2}} &= \mathbf{U}_{\overrightarrow{e}_{2}}\tilde{\mathbf{A}}_{\overrightarrow{e}_{2}}\mathbf{V}_{\overrightarrow{e}_{2}}^{\top} \bm^t_{\overrightarrow{e_2}} - b^t_{\overrightarrow{e_2}}\bm^{t-1}_{\overleftarrow{e_2}} \, ,  \\
    &\bm^t_{\overrightarrow{e_2}} = f^t_{\overrightarrow{e}_2}\left(\bx^t_{\overrightarrow{e_1}},\bx^t_{\overleftarrow{e_2}}\right) \, , \\
        \bx^{t+1}_{\overleftarrow{e_2}} &= \mathbf{V}_{\overrightarrow{e}_{2}}\tilde{\mathbf{A}}_{\overrightarrow{e}}^{\top}\mathbf{U}_{\overrightarrow{e}_{2}}^{\top}\bm^t_{\overleftarrow{e_2}} - b^t_{\overleftarrow{e_2}}\bm^{t-1}_{\overrightarrow{e_2}} \, ,  \\
    &\bm^t_{\overleftarrow{e_2}} = f^t_{\overleftarrow{e_2}}\left(\mathbf{U}_{\overrightarrow{e}_{2}}\tilde{\mathbf{A}}_{\overrightarrow{e}_{2}}\mathbf{V}_{\overrightarrow{e}_{2}}^{\top}\mathbf{w}_{\overrightarrow{e}_{2}},\bx^t_{\overrightarrow{e_2}},\bx^t_{\overleftarrow{e_3}}\right) \, , \\
    &\qquad\\
    &\qquad\quad\vdots \\
    &\qquad \\
     \bx^{t+1}_{\overrightarrow{e_L}} &= \mathbf{U}_{\overrightarrow{e}_{L}}\tilde{\mathbf{A}}_{\overrightarrow{e}_{L}}\mathbf{V}_{\overrightarrow{e}_{L}}^{\top} \bm^t_{\overrightarrow{e_L}} - b^t_{\overrightarrow{e_L}}\bm^{t-1}_{\overleftarrow{e_L}} \, ,  \\
    &\bm^t_{\overrightarrow{e_L}} = f^t_{\overrightarrow{e}_L}\left(\bx^t_{\overrightarrow{e}_{L-1}},\bx^t_{\overleftarrow{e_L}}\right) \, , \\
    \bx^{t+1}_{\overleftarrow{e_L}} &= \mathbf{V}_{\overrightarrow{e}_{L}}\tilde{\mathbf{A}}_{\overrightarrow{e}_{L}}^{\top}\mathbf{U}_{\overrightarrow{e}_{L}}^{\top}\bm^t_{\overleftarrow{e_L}} - b^t_{\overleftarrow{e_L}}\bm^{t-1}_{\overrightarrow{e_L}} \, ,  \\
    &\bm^t_{\overleftarrow{e_L}} = f^t_{\overleftarrow{e_L}}\left(\mathbf{U}_{\overrightarrow{e}_{L}}\tilde{\mathbf{A}}_{\overrightarrow{e}_{L}}\mathbf{V}_{\overrightarrow{e}_{L}}^{\top}\mathbf{w}_{\overrightarrow{e}_{L}},\bx^t_{\overrightarrow{e_L}}\right) \,
    \end{split}
\end{align}
Since we will not be making any change of variable on the $\mathbf{w}_{\overrightarrow{e}_{l}}$, we will keep the $\hat{\mathbf{A}}_{\overrightarrow{e}_{l}}$ notation for the quantities related to the planted model.
Define, for any $1\leqslant l \leqslant L$ and any $t \in \mathbb{N}$:
\begin{align*}
    &\tilde{\mathbf{x}}_{\overrightarrow{e}_{l}} = \mathbf{U}_{\overrightarrow{e}_{l}}^{\top}\mathbf{x}_{\overrightarrow{e}_{l}} \qquad \tilde{\mathbf{x}}_{\overleftarrow{e}_{l}} = \mathbf{V}_{\overrightarrow{e}_{l}}^{\top}\mathbf{x}_{\overleftarrow{e}_{l}} \\
    &\tilde{\mathbf{m}}^{t}_{\overrightarrow{e}_{l}} = \mathbf{V}_{\overrightarrow{e}_{l}}^{\top}\mathbf{m}^{t}_{\overrightarrow{e}_{l}} \qquad \tilde{\mathbf{m}}^{t}_{\overleftarrow{e}_{l}} = \mathbf{U}^{\top}_{\overrightarrow{e}_{l}}\mathbf{m}^{t}_{\overleftarrow{e}_{l}} \\
   &\tilde{f}^t_{\overrightarrow{e}_1}(\tilde{\bx}^t_{\overleftarrow{e_1}}) =  \mathbf{V}_{\overrightarrow{e}_{1}}^{\top}f^t_{\overrightarrow{e}_1}\left(\mathbf{V}_{\overrightarrow{e}_{1}}\tilde{\bx}^t_{\overleftarrow{e_1}}\right) \\
   &\tilde{f}^t_{\overleftarrow{e_1}}\left(\hat{\mathbf{A}}_{\overrightarrow{e}_{1}}\mathbf{w}_{\overrightarrow{e}_{1}},\tilde{\bx}^t_{\overrightarrow{e_1}},\tilde{\bx}^t_{\overleftarrow{e_2}}\right) = \mathbf{U}^{\top}_{\overrightarrow{e}_{1}}f^t_{\overleftarrow{e_1}}\left(\hat{\mathbf{A}}_{\overrightarrow{e}_{1}}\mathbf{w}_{\overrightarrow{e}_{1}},\mathbf{U}_{\overrightarrow{e}_{1}}\tilde{\bx}^t_{\overrightarrow{e_1}},\mathbf{V}_{\overrightarrow{e}_{2}}\tilde{\bx}^t_{\overleftarrow{e_2}}\right) \\
    &\tilde{f}^t_{\overrightarrow{e}_2}\left(\tilde{\bx}^t_{\overrightarrow{e_1}},\tilde{\bx}^t_{\overleftarrow{e_2}}\right)=\mathbf{V}_{\overrightarrow{e}_{2}}^{\top}f^t_{\overrightarrow{e}_2}\left(\mathbf{U}_{\overrightarrow{e}_{1}}\tilde{\bx}^t_{\overrightarrow{e_1}},\mathbf{V}_{\overrightarrow{e}_{2}}\tilde{\bx}^t_{\overleftarrow{e_2}}\right) \\
    &\tilde{f}^t_{\overleftarrow{e_2}}\left(\hat{\mathbf{A}}_{\overrightarrow{e}_{2}}\mathbf{w}_{\overrightarrow{e}_{2}},\tilde{\bx}^t_{\overrightarrow{e_2}},\tilde{\bx}^t_{\overleftarrow{e_3}}\right)= \mathbf{U}_{\overrightarrow{e}_{2}}^{\top}f^t_{\overleftarrow{e_2}}\left(\hat{\mathbf{A}}_{\overrightarrow{e}_{2}}\mathbf{w}_{\overrightarrow{e}_{2}},\mathbf{U}_{\overrightarrow{e}_{2}}\tilde{\bx}^t_{\overrightarrow{e_2}},\mathbf{V}_{\overrightarrow{e}_{3}}\tilde{\bx}^t_{\overleftarrow{e_3}}\right)\\
       &\qquad\\
    &\qquad\quad\vdots \\
    &\qquad \\
    & \tilde{f}^t_{\overrightarrow{e}_L}\left(\tilde{\bx}^t_{\overrightarrow{e}_{L-1}},\tilde{\bx}^t_{\overleftarrow{e_L}}\right) = \mathbf{V}_{\overrightarrow{e}_{L}}^{\top}f^t_{\overrightarrow{e}_L}\left(\mathbf{U}_{\overrightarrow{e}_{L-1}}\tilde{\bx}^t_{\overrightarrow{e}_{L-1}},\mathbf{V}_{\overrightarrow{e}_{L}}\tilde{\bx}^t_{\overleftarrow{e_L}}\right) \\
    &\tilde{f}^t_{\overleftarrow{e_L}}\left(\hat{\mathbf{A}}_{\overrightarrow{e}_{L}}\mathbf{w}_{\overrightarrow{e}_{L}},\tilde{\bx}^t_{\overrightarrow{e_L}}\right) = \mathbf{U}_{\overrightarrow{e}_{L}}^{\top}f^t_{\overleftarrow{e_L}}\left(\mathbf{U}_{\overrightarrow{e}_{L}}\tilde{\mathbf{A}}_{\overrightarrow{e}_{L}}\mathbf{V}_{\overrightarrow{e}_{L}}\mathbf{w}_{\overrightarrow{e}_{L}},\mathbf{U}_{\overrightarrow{e}_{L}}\tilde{\bx}^t_{\overrightarrow{e_L}}\right)
\end{align*}
Using the orthogonality of the permutation matrices $\mathbf{U}_{\overrightarrow{e}},\mathbf{V}_{\overrightarrow{e}}$, the iteration may be rewritten
\begin{align}
\label{eq:mlamp2}
	\begin{split}
    \tilde{\bx}^{t+1}_{\overrightarrow{e_1}} &= \tilde{\mathbf{A}}_{\overrightarrow{e}_{1}} \tilde{\bm}^t_{\overrightarrow{e_1}} -  b^t_{\overrightarrow{e_1}}\tilde{\bm}^{t-1}_{\overleftarrow{e_1}} \, ,  \\
    &\tilde{\bm}^t_{\overrightarrow{e_1}} = \tilde{f}^t_{\overrightarrow{e}_1}(\tilde{\bx}^t_{\overleftarrow{e_1}})\, , \\
        \tilde{\bx}^{t+1}_{\overleftarrow{e_1}} &= \tilde{\mathbf{A}}_{\overrightarrow{e}_{1}}^{\top}\tilde{\bm}^t_{\overleftarrow{e_1}} -  b^t_{\overleftarrow{e_1}}\tilde{\bm}^{t-1}_{\overrightarrow{e_1}} \, ,  \\
    &\tilde{\bm}^t_{\overleftarrow{e_1}} = \tilde{f}^t_{\overleftarrow{e_1}}\left(\hat{\mathbf{A}}_{\overrightarrow{e}_{1}}\mathbf{w}_{\overrightarrow{e}_{1}},\tilde{\bx}^t_{\overrightarrow{e_1}},\tilde{\bx}^t_{\overleftarrow{e_2}}\right) \, , \\
    &\qquad \\
        \tilde{\bx}^{t+1}_{\overrightarrow{e_2}} &= \tilde{\mathbf{A}}_{\overrightarrow{e}_{2}}\tilde{\bm}^t_{\overrightarrow{e_2}} - b^t_{\overrightarrow{e_2}}\tilde{\bm}^{t-1}_{\overleftarrow{e_2}} \, ,  \\
    &\tilde{\bm}^t_{\overrightarrow{e_2}} = \tilde{f}^t_{\overrightarrow{e}_2}\left(\tilde{\bx}^t_{\overrightarrow{e_1}},\tilde{\bx}^t_{\overleftarrow{e_2}}\right) \, , \\
        \tilde{\bx}^{t+1}_{\overleftarrow{e_2}} &= \tilde{\mathbf{A}}_{\overrightarrow{e}}^{\top}\tilde{\bm}^t_{\overleftarrow{e_2}} - b^t_{\overleftarrow{e_2}}\tilde{\bm}^{t-1}_{\overrightarrow{e_2}} \, ,  \\
    &\tilde{\bm}^t_{\overleftarrow{e_2}} = \tilde{f}^t_{\overleftarrow{e_2}}\left(\hat{\mathbf{A}}_{\overrightarrow{e}_{2}}\mathbf{w}_{\overrightarrow{e}_{2}},\tilde{\bx}^t_{\overrightarrow{e_2}},\tilde{\bx}^t_{\overleftarrow{e_3}}\right) \, , \\
    &\qquad\\
    &\qquad\quad\vdots \\
    &\qquad \\
     \tilde{\bx}^{t+1}_{\overrightarrow{e_L}} &= \tilde{\mathbf{A}}_{\overrightarrow{e}_{L}} \tilde{\bm}^t_{\overrightarrow{e_L}} - b^t_{\overrightarrow{e_L}}\tilde{\bm}^{t-1}_{\overleftarrow{e_L}} \, ,  \\
    &\tilde{\bm}^t_{\overrightarrow{e_L}} = \tilde{f}^t_{\overrightarrow{e}_L}\left(\tilde{\bx}^t_{\overrightarrow{e}_{L-1}},\tilde{\bx}^t_{\overleftarrow{e_L}}\right) \, , \\
    \tilde{\bx}^{t+1}_{\overleftarrow{e_L}} &= \tilde{\mathbf{A}}_{\overrightarrow{e}_{L}}^{\top}\tilde{\bm}^t_{\overleftarrow{e_L}} - b^t_{\overleftarrow{e_L}}\tilde{\bm}^{t-1}_{\overrightarrow{e_L}}\, ,  \\
    &\tilde{\bm}^t_{\overleftarrow{e_L}} = \tilde{f}^t_{\overleftarrow{e_L}}\left(\hat{\mathbf{A}}_{\overrightarrow{e}_{L}}\mathbf{w}_{\overrightarrow{e}_{L}},\tilde{\bx}^t_{\overrightarrow{e_L}}\right) \,
    \end{split}
\end{align}
Recall, for any $1 \leqslant l \leqslant L$, the dimensions  $\tilde{\mathbf{A}}_{\overrightarrow{e}_{l}} \in \mathbb{R}^{D_{\overrightarrow{e}_{l}}q_{\overrightarrow{e}_{l}}\times P_{\overrightarrow{e}_{l}}q_{\overrightarrow{e}_{l}}}$ and $\tilde{f}^{t}_{\overrightarrow{e}_{l}}(...)\in \mathbb{R}^{P_{\overrightarrow{e}_{l}}q_{\overrightarrow{e}_{l}}}$. Consider then 
\begin{align}
    \tilde{f}^{t}_{\overrightarrow{e}_{l}}(...) = \begin{bmatrix}
    \left(\tilde{f}^{t}_{\overrightarrow{e}_{l}}\right)^{(1)}(...) \\
    \vdots \\
    \left(\tilde{f}^{t}_{\overrightarrow{e}_{l}}\right)^{(q_{\overrightarrow{e}_{l}})}(...)
    \end{bmatrix}
\end{align}
where, for any $1 \leqslant k \leqslant q_{\overrightarrow{e}_{l}}$, $(\tilde{f}^{t}_{\overrightarrow{e}_{l}})^{(k)}(...) \in \mathbb{R}^{ P_{\overrightarrow{e}}}$. The product $\tilde{\mathbf{A}}_{\overrightarrow{e}_{l}}\tilde{f}^{t}_{\overrightarrow{e}_{l}}(...) \in \mathbb{R}^{D_{\overrightarrow{e}_{l}}q_{\overrightarrow{e}_{l}}}$ then reads, using the circulant structure of $\tilde{\mathbf{A}}_{\overrightarrow{e}_{l}}$
\begin{align}
    &\begin{bmatrix}
    \mathbf{Q}_{\overrightarrow{e}_{l}}^{(1)} & \mathbf{Q}_{\overrightarrow{e}_{l}}^{(2)} & \ldots & \mathbf{Q}_{\overrightarrow{e}_{l}}^{(k_{\overrightarrow{e}})} & & & & &  \\
     & \mathbf{Q}_{\overrightarrow{e}_{l}}^{(1)} & \mathbf{Q}_{\overrightarrow{e}_{l}}^{(2)} & \ldots & \mathbf{Q}_{\overrightarrow{e}_{l}}^{(k_{\overrightarrow{e}})} & & & & \\
    & & \mathbf{Q}_{\overrightarrow{e}_{l}}^{(1)} & \mathbf{Q}_{\overrightarrow{e}_{l}}^{(2)} & \ldots & \mathbf{Q}_{\overrightarrow{e}_{l}}^{(k_{\overrightarrow{e}})} & & & \vdots  \\
    \vdots &  \vdots & \ddots & & & & \\
    \mathbf{Q}_{\overrightarrow{e}_{l}}^{(2)} & \mathbf{Q}_{\overrightarrow{e}_{l}}^{(3)} & \ldots & \mathbf{Q}_{\overrightarrow{e}_{l}}^{(k_{\overrightarrow{e}})} & & & & & \mathbf{Q}_{\overrightarrow{e}_{l}}^{(1)}
    \end{bmatrix} \begin{bmatrix}
    \left(\tilde{f}^{t}_{\overrightarrow{e}_{l}}\right)^{(1)}(...) \\
    \vdots \\
    \left(\tilde{f}^{t}_{\overrightarrow{e}_{l}}\right)^{(q_{\overrightarrow{e}_{l}})}(...)
    \end{bmatrix} \\
   &= \left[\left(\left(\Pcl_{P_{\overrightarrow{e}_{l}},q_{\overrightarrow{e}_{l}}}\right)^{i-1}\mathbf{Q}_{\overrightarrow{e}_{l}}\right)\tilde{f}^{t}_{\overrightarrow{e}_{l}}(...)\right]_{i=1}^{q_{\overrightarrow{e}_{l}}} \\
    &=\left[\sum_{j=1}^{k_{\overrightarrow{e}_{l}}}\mathbf{Q}^{(j)}_{\overrightarrow{e}_{l}}(\tilde{f}^{t}_{\overrightarrow{e}_{l}})^{(\lfloor j+n-2\rfloor_{q_{\overrightarrow{e}_{l}}}+1)}(...)\right]_{n=1}^{q_{\overrightarrow{e}_{l}}}
\end{align}
where the notation $\lfloor.\rfloor_{q_{\overrightarrow{e}_{l}}}$ denotes the modulo $q_{\overrightarrow{e}_{l}}$, i.e. the remainder of the euclidian division by $q_{\overrightarrow{e}_{l}}$. Now define 
\begin{align}
    \tilde{F}_{\overrightarrow{e}_{l}}^{t}(...) =\begin{bmatrix}\left[\left(\Pcl_{P_{\overrightarrow{e}_{l}},q_{\overrightarrow{e}_{l}}}\right)^{1-i} \left[(\tilde{f}_{\overrightarrow{e}_{l}}^{t})^{(1)} \ldots (\tilde{f}_{\overrightarrow{e}_{l}}^{t})^{(q_{\overrightarrow{e}_{l}})}\right] \right]_{i=1}^{k_{\overrightarrow{e}_{l}}} \in \mathbb{R}^{P_{\overrightarrow{e}_{l}}k_{\overrightarrow{e}_{l}}\times q_{\overrightarrow{e}_{l}}} \\
    \left[0_{P_{\overrightarrow{e}_{l}}} \ldots0_{P_{\overrightarrow{e}_{l}}}\right]_{j=1}^{q_{\overrightarrow{e}_{l}}-k_{\overrightarrow{e}_{l}}}\end{bmatrix} \in \mathbb{R}^{P_{\overrightarrow{e}_{l}}q_{\overrightarrow{e}_{l}}\times q_{\overrightarrow{e}_{l}}}
\end{align}
and the matrix $\tilde{\mathbf{Q}}_{\overrightarrow{e}_{l}} \in \mathbb{R}^{D_{\overrightarrow{e}_{l}}q_{\overrightarrow{e}_{l}} \times P_{\overrightarrow{e}_{l}}q_{\overrightarrow{e}_{l}}}$ is a dense Gaussian matrix with i.i.d. elements. Then
\begin{align*}
    &\tilde{\mathbf{Q}}_{\overrightarrow{e}_{l}}\tilde{F}_{\overrightarrow{e}_{l}}^{t}(...) = \begin{bmatrix}\sum_{j=1}^{k_{\overrightarrow{e}_{l}}}\mathbf{Q}_{\overrightarrow{e}_{l}}^{(j)}(\tilde{f}^{t}_{\overrightarrow{e}_{l}})^{\lfloor j-1\rfloor_{q_{\overrightarrow{e}_{l}}}+1}(...) \quad  \ldots \quad  \sum_{j=1}^{k_{\overrightarrow{e}_{l}}}(\mathbf{Q}^{(j)}_{\overrightarrow{e}_{l}})(\tilde{f}^{t}_{\overrightarrow{e}_{l}})^{\lfloor j+q_{\overrightarrow{e}_{l}}-2\rfloor_{q_{\overrightarrow{e}_{l}}}+1}(...) \\
    \ldots \\
    \ldots \end{bmatrix}\\
     &\hspace{5cm}\in \mathbb{R}^{D_{\overrightarrow{e}_{l}}q_{\overrightarrow{e}_{l}} \times q_{\overrightarrow{e}_{l}}}
\end{align*}
where each $\ldots$ is an identical copy of the first $D_{\overrightarrow{e}_{l}} \times q_{\overrightarrow{e}_{l}}$ block, for a total of $k_{\overrightarrow{e}_{l}}$ blocks.
This means the $D_{\overrightarrow{e}_{l}}q_{\overrightarrow{e}_{l}}$ output of the product $\tilde{\mathbf{A}}_{\overrightarrow{e}_{l}}f^{t}_{\overrightarrow{e}_{l}}(...)$ may be rewritten as a $D_{\overrightarrow{e}_{l}}\times q_{\overrightarrow{e}_{l}}$ matrix (copied $k_{\overrightarrow{e}_{l}}$ times) resulting from the product of a dense Gaussian matrix with i.i.d. elements and a matrix valued function $\tilde{F}_{\overrightarrow{e}_{l}}^{t}$ which verifies the same regularity conditions as $f^{t}_{\overrightarrow{e}_{l}}$. Note that, owing to the separability assumption, we may use any permutation of the $(\tilde{f}^{t}_{\overrightarrow{e}_{l}})^{(i)}, 1\leqslant i\leqslant q_{\overrightarrow{e}_{l}}$ and will thus drop the permutations to write 
\begin{align}
    \tilde{F}_{\overrightarrow{e}_{l}}^{t}(...) =\begin{bmatrix}\left[ (\tilde{f}_{\overrightarrow{e}_{l}}^{t})^{(1)} \ldots (\tilde{f}_{\overrightarrow{e}_{l}}^{t})^{(q_{\overrightarrow{e}_{l}})} \right]_{i=1}^{k_{\overrightarrow{e}_{l}}} \in \mathbb{R}^{P_{\overrightarrow{e}_{l}}k_{\overrightarrow{e}_{l}}\times q_{\overrightarrow{e}_{l}}} \\
    \left[0_{P_{\overrightarrow{e}_{l}}} \ldots0_{P_{\overrightarrow{e}_{l}}}\right]_{j=1}^{q_{\overrightarrow{e}_{l}}-k_{\overrightarrow{e}_{l}}}\end{bmatrix}\in \mathbb{R}^{P_{\overrightarrow{e}_{l}}q_{\overrightarrow{e}_{l}}\times q_{\overrightarrow{e}_{l}}}
\end{align}
Similarly, for products of the form $\left(\tilde{\mathbf{A}}_{\overrightarrow{e}_{l}}\right)^{\top}\tilde{f}^{t}_{\overleftarrow{e}_{l}}(...) \in \mathbb{R}^{P_{\overrightarrow{e}_{l}}q_{\overrightarrow{e}_{l}}}$, we may write:
\begin{align}
    &\begin{bmatrix}
    \mathbf{Q}_{\overrightarrow{e}_{l}}^{(1)} & \mathbf{Q}_{\overrightarrow{e}_{l}}^{(2)} & \ldots & \mathbf{Q}_{\overrightarrow{e}_{l}}^{(k_{\overrightarrow{e}_{l}})} & & & & &  \\
     & \mathbf{Q}_{\overrightarrow{e}_{l}}^{(1)} & \mathbf{Q}_{\overrightarrow{e}_{l}}^{(2)} & \ldots & \mathbf{Q}_{\overrightarrow{e}_{l}}^{(k_{\overrightarrow{e}_{l}})} & & & & \\
    & & \mathbf{Q}_{\overrightarrow{e}_{l}}^{(1)} & \mathbf{Q}_{\overrightarrow{e}_{l}}^{(2)} & \ldots & \mathbf{Q}_{\overrightarrow{e}_{l}}^{(k_{\overrightarrow{e}_{l}})} & & & \vdots  \\
    \vdots &  \vdots & \ddots & & & & \\
    \mathbf{Q}_{\overrightarrow{e}_{l}}^{(2)} & \mathbf{Q}_{\overrightarrow{e}_{l}}^{(3)} & \ldots & \mathbf{Q}_{\overrightarrow{e}_{l}}^{(k_{\overrightarrow{e}_{l}})} & & & & & \mathbf{Q}_{\overrightarrow{e}_{l}}^{(1)}
    \end{bmatrix}^{\top} \begin{bmatrix}
    \left(\tilde{f}^{t}_{\overleftarrow{e}_{l}}\right)^{(1)}(...) \\
    \vdots \\
    \left(\tilde{f}^{t}_{\overleftarrow{e}_{l}}\right)^{(q_{\overrightarrow{e}_{l}})}(...)
    \end{bmatrix} \\
   &= \left[\left(\left(\Pcl_{P_{\overrightarrow{e}_{l}},q_{\overrightarrow{e}_{l}}}\right)^{i-1}\left[(\mathbf{Q}^{(1)}_{\overrightarrow{e}_{l}})^{\top} (0 \ldots 0)(\mathbf{Q}^{(k_{\overrightarrow{e}_{l}})}_{\overrightarrow{e}_{l}})^{\top} \ldots (\mathbf{Q}^{(2)}_{\overrightarrow{e}_{l}})^{\top}\right]\right)\tilde{f}^{t}_{\overleftarrow{e}_{l}}(...)\right]_{i=1}^{q_{\overrightarrow{e}_{l}}}
\end{align}
Then, using once again the separability assumption, we may define:
\begin{align}
    \tilde{F}_{\overleftarrow{e}_{l}}^{t}(...) =\begin{bmatrix}\left[ (\tilde{f}_{\overleftarrow{e}_{l}}^{t})^{(1)} \ldots (\tilde{f}_{\overleftarrow{e}_{l}}^{t})^{(q_{\overleftarrow{e}_{l}})} \right]_{i=1}^{k_{\overrightarrow{e}_{l}}} \in \mathbb{R}^{D_{\overrightarrow{e}_{l}}k_{\overrightarrow{e}_{l}}\times q_{\overrightarrow{e}_{l}}} \\
    \left[0_{D_{\overrightarrow{e}_{l}}} \ldots0_{D_{\overrightarrow{e}_{l}}}\right]\end{bmatrix} \in \mathbb{R}^{D_{\overrightarrow{e}_{l}}q_{\overrightarrow{e}_{l}}\times q_{\overrightarrow{e}_{l}}}
\end{align}
such that the term $
    \tilde{\mathbf{Q}}_{\overrightarrow{e}_{l}}^{\top}\tilde{F}_{\overleftarrow{e}_{l}}^{t}(...)$ also contains $k_{\overrightarrow{e}_{l}}$ copies of a $P_{\overrightarrow{e}_{l}}\times q_{\overrightarrow{e}_{l}}$ block containing the $q_{\overrightarrow{e}_{l}}$ blocks of size $P_{\overrightarrow{e}_{l}}$ of the original $P_{\overrightarrow{e}_{l}}q_{\overrightarrow{e}_{l}}$ vector $\tilde{\mathbf{A}}_{\overrightarrow{e}_{l}}^{\top}\tilde{f}_{\overleftarrow{e}_{l}}(...)$.
The iterates of the sequences defined by Eq.\eqref{eq:mlamp2} may then be rewritten as a subset of the lines of the following matrix valued iteration, i.e.: 
%\vspace{-3cm}
\begin{align}
\label{eq:mlamp3}
	\begin{split}
    \tilde{\bX}^{t+1}_{\overrightarrow{e_1}} &= \tilde{\mathbf{Q}}_{\overrightarrow{e}_{1}} \tilde{\bM}^t_{\overrightarrow{e_1}} -  b^t_{\overrightarrow{e_1}}\tilde{\bM}^{t-1}_{\overleftarrow{e_1}} \, ,  \\
    &\tilde{\bM}^t_{\overrightarrow{e_1}} = \tilde{F}^t_{\overrightarrow{e}_1}(\tilde{\bX}^t_{\overleftarrow{e_1}})\, , \\
        \tilde{\bX}^{t+1}_{\overleftarrow{e_1}} &= \tilde{\mathbf{Q}}_{\overrightarrow{e}_{1}}^{\top}\tilde{\bM}^t_{\overleftarrow{e_1}} -  b^t_{\overleftarrow{e_1}}\tilde{\bM}^{t-1}_{\overrightarrow{e_1}} \, ,  \\
    &\tilde{\bM}^t_{\overleftarrow{e_1}} = \tilde{F}^t_{\overleftarrow{e_1}}\left(\tilde{\mathbf{Q}}_{\overrightarrow{e}_{1}}\mathbf{W}_{\overrightarrow{e}_{1}},\tilde{\bX}^t_{\overrightarrow{e_1}},\tilde{\bX}^t_{\overleftarrow{e_2}}\right) \, , \\
    &\qquad \\
        \tilde{\bX}^{t+1}_{\overrightarrow{e_2}} &= \tilde{\mathbf{Q}}_{\overrightarrow{e}_{2}}\tilde{\bM}^t_{\overrightarrow{e_2}} - b^t_{\overrightarrow{e_2}}\tilde{\bM}^{t-1}_{\overleftarrow{e_2}} \, ,  \\
    &\tilde{\bM}^t_{\overrightarrow{e_2}} = \tilde{F}^t_{\overrightarrow{e}_2}\left(\tilde{\bX}^t_{\overrightarrow{e_1}},\tilde{\bX}^t_{\overleftarrow{e_2}}\right) \, , \\
        \tilde{\bX}^{t+1}_{\overleftarrow{e_2}} &= \tilde{\mathbf{Q}}_{\overrightarrow{e}}^{\top}\tilde{\bM}^t_{\overleftarrow{e_2}} - b^t_{\overleftarrow{e_2}}\tilde{\bM}^{t-1}_{\overrightarrow{e_2}} \, ,  \\
    &\tilde{\bM}^t_{\overleftarrow{e_2}} = \tilde{F}^t_{\overleftarrow{e_2}}\left(\tilde{\mathbf{Q}}_{\overrightarrow{e}_{2}}\mathbf{W}_{\overrightarrow{e}_{2}},\tilde{\bX}^t_{\overrightarrow{e_2}},\tilde{\bX}^t_{\overleftarrow{e_3}}\right) \, , \\
    &\qquad\\
    &\qquad\quad\vdots \\
    &\qquad \\
     \tilde{\bX}^{t+1}_{\overrightarrow{e_L}} &= \tilde{\mathbf{Q}}_{\overrightarrow{e}_{L}} \tilde{\bM}^t_{\overrightarrow{e_L}} - b^t_{\overrightarrow{e_L}}\tilde{\bM}^{t-1}_{\overleftarrow{e_L}} \, ,  \\
    &\tilde{\bM}^t_{\overrightarrow{e_L}} = \tilde{F}^t_{\overrightarrow{e}_L}\left(\tilde{\bX}^t_{\overrightarrow{e}_{L-1}},\tilde{\bX}^t_{\overleftarrow{e_L}}\right) \, , \\
    \tilde{\bX}^{t+1}_{\overleftarrow{e_L}} &= \tilde{\mathbf{Q}}_{\overrightarrow{e}_{L}}^{\top}\tilde{\bM}^t_{\overleftarrow{e_L}} - b^t_{\overleftarrow{e_L}}\tilde{\bM}^{t-1}_{\overrightarrow{e_L}}\, ,  \\
    &\tilde{\bM}^t_{\overleftarrow{e_L}} = \tilde{F}^t_{\overleftarrow{e_L}}\left(\tilde{\mathbf{Q}}_{\overrightarrow{e}_{L}}\mathbf{W}_{\overrightarrow{e}_{L}},\tilde{\bX}^t_{\overrightarrow{e_L}}\right) \,
    \end{split}
\end{align}
where each $\mathbf{W}_{\overrightarrow{e}_{l}}$ contains $k_{\overrightarrow{e}_{l}}$ copies of the initial $\mathbf{w}_{\overrightarrow{e}_{l}}$ reorganised into matrices as described above. The dimensions of the variables are  Note that
at this point we have almost reached an iteration verifying the structure of that appearing in Theorem \ref{thm:graph-AMP}, except the Onsager term isn't, a priori, the correct one. Consider the following iteration, where we replaced the original, scalar Onsager terms with the correct, matrix-valued ones:
\begin{align}
\label{eq:mlamp3}
\begin{split}
    \tilde{\bX}^{t+1}_{\overrightarrow{e_1}} &= \tilde{\mathbf{Q}}_{\overrightarrow{e}_{1}} \tilde{\bM}^t_{\overrightarrow{e_1}} -  \tilde{\bM}^{t-1}_{\overleftarrow{e_1}}\left(\tilde{\mathbf{b}}^t_{\overrightarrow{e_1}}\right)^{\top} \, ,  \\
    &\tilde{\bM}^t_{\overrightarrow{e_1}} = \tilde{F}^t_{\overrightarrow{e}_1}(\tilde{\bX}^t_{\overleftarrow{e_1}})\, , \\
        \tilde{\bX}^{t+1}_{\overleftarrow{e_1}} &= \tilde{\mathbf{Q}}_{\overrightarrow{e}_{1}}^{\top}\tilde{\bM}^t_{\overleftarrow{e_1}} - \tilde{\bM}^{t-1}_{\overrightarrow{e_1}}\left(\tilde{\mathbf{b}}^t_{\overleftarrow{e_1}}\right)^{\top} \, ,  \\
    &\tilde{\bM}^t_{\overleftarrow{e_1}} = \tilde{F}^t_{\overleftarrow{e_1}}\left(\tilde{\mathbf{Q}}_{\overrightarrow{e}_{1}}\mathbf{W}_{\overrightarrow{e}_{1}},\tilde{\bX}^t_{\overrightarrow{e_1}},\tilde{\bX}^t_{\overleftarrow{e_2}}\right) \, , \\
    &\qquad \\
        \tilde{\bX}^{t+1}_{\overrightarrow{e_2}} &= \tilde{\mathbf{Q}}_{\overrightarrow{e}_{2}}\tilde{\bM}^t_{\overrightarrow{e_2}} - \tilde{\bM}^{t-1}_{\overleftarrow{e_2}}\left(\tilde{\mathbf{b}}^t_{\overrightarrow{e_2}}\right)^{\top} \, ,  \\
    &\tilde{\bM}^t_{\overrightarrow{e_2}} = \tilde{F}^t_{\overrightarrow{e}_2}\left(\tilde{\bX}^t_{\overrightarrow{e_1}},\tilde{\bX}^t_{\overleftarrow{e_2}}\right) \, , \\
        \tilde{\bX}^{t+1}_{\overleftarrow{e_2}} &= \tilde{\mathbf{Q}}_{\overrightarrow{e}}^{\top}\tilde{\bM}^t_{\overleftarrow{e_2}} - \tilde{\bM}^{t-1}_{\overrightarrow{e_2}}\left(\tilde{\mathbf{b}}^t_{\overleftarrow{e_2}}\right)^{\top} \, ,  \\
    &\tilde{\bM}^t_{\overleftarrow{e_2}} = \tilde{F}^t_{\overleftarrow{e_2}}\left(\tilde{\mathbf{Q}}_{\overrightarrow{e}_{2}}\mathbf{W}_{\overrightarrow{e}_{2}},\tilde{\bX}^t_{\overrightarrow{e_2}},\tilde{\bX}^t_{\overleftarrow{e_3}}\right) 
    \end{split}
\end{align}
\begin{align}
\begin{split}
    &\qquad\\
    &\qquad\quad\vdots \\
    &\qquad \\
     \tilde{\bX}^{t+1}_{\overrightarrow{e_L}} &= \tilde{\mathbf{Q}}_{\overrightarrow{e}_{L}} \tilde{\bM}^t_{\overrightarrow{e_L}} - \tilde{\bM}^{t-1}_{\overleftarrow{e_L}}\left(\tilde{\mathbf{b}}^t_{\overrightarrow{e_L}}\right)^{\top} \, ,  \\
    &\tilde{\bM}^t_{\overrightarrow{e_L}} = \tilde{F}^t_{\overrightarrow{e}_L}\left(\tilde{\bX}^t_{\overrightarrow{e}_{L-1}},\tilde{\bX}^t_{\overleftarrow{e_L}}\right) \, , \\
    \tilde{\bX}^{t+1}_{\overleftarrow{e_L}} &= \tilde{\mathbf{Q}}_{\overrightarrow{e}_{L}}^{\top}\tilde{\bM}^t_{\overleftarrow{e_L}} - \tilde{\bM}^{t-1}_{\overrightarrow{e_L}}\left(\tilde{\mathbf{b}}^t_{\overleftarrow{e_L}}\right)^{\top}\, ,  \\
    &\tilde{\bM}^t_{\overleftarrow{e_L}} = \tilde{F}^t_{\overleftarrow{e_L}}\left(\tilde{\mathbf{Q}}_{\overrightarrow{e}_{L}}\mathbf{W}_{\overrightarrow{e}_{L}},\tilde{\bX}^t_{\overrightarrow{e_L}}\right) \,
    \end{split}
\end{align}
where, for any $\overrightarrow{e} \in \overrightarrow{E}$ and any $t \in \mathbb{N}$
for the right oriented edges 
    \begin{equation*} 
    \mathbf{b}^t_{\overrightarrow{e}_{l}} = \frac{1}{N} \sum_{i=1}^{n_{l-1}} \frac{\partial \tilde{F}^t_{\overrightarrow{e}_{l},i}}{\partial \mathbf{X}_{\overleftarrow{e}_{l},i}} \left(\left(\mathbf{X}^t_{\overrightarrow{e}_{l}'}\right)_{\overrightarrow{e}_{l}':\overrightarrow{e}_{l}' \to \overrightarrow{e}_{l}}\right) \qquad \in \mathbb{R}^{q_{\overrightarrow{e}_{l}} \times q_{\overrightarrow{e}_{l}}} \, .
\end{equation*}
and left oriented edges
    \begin{equation*} 
    \mathbf{b}^t_{\overleftarrow{e}_{l}} = \frac{1}{N} \sum_{i=1}^{n_{l}} \frac{\partial \tilde{F}^t_{\overleftarrow{e}_{l},i}}{\partial \mathbf{X}_{\overrightarrow{e}_{l},i}} \left(\tilde{\mathbf{Q}}_{\overrightarrow{e}_{l}}\mathbf{W}_{\overrightarrow{e}_{l}},\left(\mathbf{X}^t_{\overleftarrow{e}_{l}'}\right)_{\overleftarrow{e}_{l}':\overleftarrow{e}_{l}' \to \overleftarrow{e}_{l}}\right) \qquad \in \mathbb{R}^{q_{\overleftarrow{e}_{l}} \times q_{\overleftarrow{e}_{l}}} \, .
\end{equation*}
Using the separability assumption, we can simplify this expression.
To take a concrete example, consider $\tilde{F}^t_{\overrightarrow{e}_2}\left(\tilde{\bX}^t_{\overrightarrow{e_1}},\tilde{\bX}^t_{\overleftarrow{e_2}}\right)$. Let's start with the dimensions. Recall
\begin{align}
    &\tilde{f}^{t}_{\overrightarrow{e}_{2}}\left(\tilde{\mathbf{x}}^{t}_{\overrightarrow{e}_{1}},\tilde{\mathbf{x}}^{t}_{\overleftarrow{e}_{2}}\right) \in \mathbb{R}^{P_{\overrightarrow{e}_{2}}q_{\overrightarrow{e}_{2}}} = \mathbf{V}_{\overrightarrow{e}_{2}}^{\top}f^t_{\overrightarrow{e}_2}\left(\mathbf{U}_{\overrightarrow{e}_{1}}\tilde{\bx}^t_{\overrightarrow{e_1}},\mathbf{V}_{\overrightarrow{e}_{2}}\tilde{\bx}^t_{\overleftarrow{e_2}}\right) \\
    &\mbox{where} \quad \tilde{\mathbf{x}}^{t}_{\overrightarrow{e}_{1}} \in \mathbb{R}^{D_{\overrightarrow{e}_{1}}q_{\overrightarrow{e}_{1}}}=\mathbb{R}^{P_{\overrightarrow{e}_{2}}q_{\overrightarrow{e}_{2}}} \thickspace \mbox{and} \thickspace \tilde{\mathbf{x}}^{t}_{\overleftarrow{e}_{2}} \in \mathbb{R}^{P_{\overrightarrow{e}_{2}}q_{\overrightarrow{e}_{2}}} 
\end{align}
using the separability assumption, we may write 
\begin{align}
    &\forall \thickspace 1\leqslant i \leqslant P_{\overrightarrow{e}_{2}}q_{\overrightarrow{e}_{2}} \\
    &\left(f^t_{\overrightarrow{e}_2}\left(\mathbf{U}_{\overrightarrow{e}_{1}}\tilde{\bx}^t_{\overrightarrow{e_1}},\mathbf{V}_{\overrightarrow{e}_{2}}\tilde{\bx}^t_{\overleftarrow{e_2}}\right)\right)_{i} = \sigma^{t}_{\overrightarrow{e}_{2}}\left(\left(\mathbf{U}_{\overrightarrow{e}_{1}}\tilde{\bx}^t_{\overrightarrow{e_1}}\right)_{i},\left(\mathbf{V}_{\overrightarrow{e}_{2}}\tilde{\bx}^t_{\overleftarrow{e_2}}\right)_{i}\right)
\end{align}
And 
\begin{align}
    &\tilde{F}^{t}_{\overrightarrow{e}_{2}}\left(\tilde{\mathbf{X}}^{t}_{\overrightarrow{e}_{1}},\tilde{\mathbf{X}}^{t}_{\overleftarrow{e}_{2}}\right) \in \mathbb{R}^{P_{\overrightarrow{e}_{2}}q_{\overrightarrow{e}_{2}}\times q_{\overrightarrow{e}_{2}}}\\
    &\mbox{where} \quad \tilde{\mathbf{X}}^{t}_{\overrightarrow{e}_{1}}\mathbb{R}^{P_{\overrightarrow{e}_{2}}q_{\overrightarrow{e}_{2}}\times q_{\overrightarrow{e}_{2}}} \thickspace \mbox{and} \thickspace \tilde{\mathbf{X}}^{t}_{\overleftarrow{e}_{2}} \in \mathbb{R}^{P_{\overrightarrow{e}_{2}}q_{\overrightarrow{e}_{2}}\times q_{\overrightarrow{e}_{2}}} \\
    &\tilde{F}^{t}_{\overrightarrow{e}_{2}}\left(\tilde{\mathbf{X}}^{t}_{\overrightarrow{e}_{1}},\tilde{\mathbf{X}}^{t}_{\overleftarrow{e}_{2}}\right) = \begin{bmatrix}\left[ (\tilde{f}_{\overrightarrow{e}_{l}}^{t})^{(1)}(\tilde{\mathbf{x}}^{t,(1)}_{\overrightarrow{e}_{1}},\tilde{\mathbf{x}}^{t,(1)}_{\overleftarrow{e}_{2}}) \ldots (\tilde{f}_{\overrightarrow{e}_{l}}^{t})^{(q_{\overrightarrow{e}_{l}})}(\tilde{\mathbf{x}}^{t,(q_{\overrightarrow{e}_{l}})}_{\overrightarrow{e}_{1}},\tilde{\mathbf{x}}^{t,(q_{\overrightarrow{e}_{l}})}_{\overleftarrow{e}_{2}}) \right]_{i=1}^{k_{\overrightarrow{e}_{2}}} \\
    0_{P_{\overrightarrow{e}_{2}}(q_{\overrightarrow{e}_{2}}-k_{\overrightarrow{e}_{2}})\times q_{\overrightarrow{e}_{2}}}\end{bmatrix} \\
    &=\left[ (\tilde{g}_{\overrightarrow{e}_{l}}^{t})^{(1)}(\tilde{\mathbf{x}}^{t,(1)}_{\overrightarrow{e}_{1}},\tilde{\mathbf{x}}^{t,(1)}_{\overleftarrow{e}_{2}}) \ldots (\tilde{g}_{\overrightarrow{e}_{l}}^{t})^{(q_{\overrightarrow{e}_{l}})}(\tilde{\mathbf{x}}^{t,(q_{\overrightarrow{e}_{l}})}_{\overrightarrow{e}_{1}},\tilde{\mathbf{x}}^{t,(q_{\overrightarrow{e}_{l}})}_{\overleftarrow{e}_{2}}) \right]_{i=1}^{q_{\overrightarrow{e}_{2}}}
\end{align}
where each $\tilde{\mathbf{x}}^{t,(i)}_{\overleftarrow{e}_{2}} \in \mathbb{R}^{P_{\overrightarrow{e}_{2}}q_{\overrightarrow{e}_{2}}}$.
Recall that, for any $1\leqslant i \leqslant Pk$, $\tilde{F}^{t}_{\overrightarrow{e}_{2},i}:\mathbb{R}^{q_{\overrightarrow{e}_{2}}} \to \mathbb{R}^{q_{\overrightarrow{e}_{2}}}$.
Then, for any $1 \leqslant k,l \leqslant q_{\overrightarrow{e}_{2}}$
\begin{align}
    \left(\tilde{\mathbf{b}}^{t}_{\overrightarrow{e}_{2}}\right)_{k,l} &= \frac{1}{N}\sum_{i=1}^{P_{\overrightarrow{e}_{2}}q_{\overrightarrow{e}_{2}}}\frac{\partial \tilde{F}^{t}_{\overrightarrow{e}_{2},i,k}}{\partial \mathbf{X}_{\overleftarrow{e}_{2},i,l}}\left(\tilde{\mathbf{X}}^{t}_{\overrightarrow{e}_{1}},\tilde{\mathbf{X}}^{t}_{\overleftarrow{e}_{2}}\right) \\
    &=\frac{1}{N}\sum_{i=1}^{P_{\overrightarrow{e}_{2}}q_{\overrightarrow{e}_{2}}}\frac{\partial (\tilde{g}^{t}_{\overrightarrow{e}_{2},i})^{(k)}}{\partial \tilde{\mathbf{x}}^{(l)}_{\overleftarrow{e}_{2},i}}(\tilde{\mathbf{x}}^{t,(k)}_{\overrightarrow{e}_{1}},\tilde{\mathbf{x}}^{t,(k)}_{\overleftarrow{e}_{2}}) \\
    &=\frac{1}{N}\sum_{i=1}^{P_{\overrightarrow{e}_{2}}q_{\overrightarrow{e}_{2}}}\frac{\partial}{\partial \tilde{\mathbf{x}}^{t,(l)}_{\overleftarrow{e}_{2}}}\mathbf{V}_{\overrightarrow{e}_{2}}^{\top}(g^t_{\overrightarrow{e}_2})^{(k)}\left(\mathbf{U}_{\overrightarrow{e}_{1}}\tilde{\bx}^{t,(l)}_{\overrightarrow{e_1}},\mathbf{V}_{\overrightarrow{e}_{2}}\tilde{\bx}^{t,(l)}_{\overleftarrow{e_2}}\right) \\
    &= \frac{1}{N}\mbox{Tr}\left(\mathbf{V}^{\top}_{\overrightarrow{e}_{2}}\mathcal{J}_{(g^t_{\overrightarrow{e}_2})^{(k)}}\left(\mathbf{U}_{\overrightarrow{e}_{1}}\tilde{\bx}^{t,(l)}_{\overrightarrow{e_1}},\mathbf{V}_{\overrightarrow{e}_{2}}\tilde{\bx}^{t,(l)}_{\overleftarrow{e_2}}\right)\mathbf{V}_{\overrightarrow{e}_{2}}\right)\delta_{k,l} \\
    & =\frac{1}{N}\mbox{Tr}\left(\mathcal{J}_{(g^t_{\overrightarrow{e}_2})}\left(\mathbf{U}_{\overrightarrow{e}_{1}}\tilde{\bx}^{t}_{\overrightarrow{e_1}},\mathbf{V}_{\overrightarrow{e}_{2}}\tilde{\bx}^{t}_{\overleftarrow{e_2}}\right)\right)\delta_{k,l} \\
    &= \frac{1}{N}\sum_{i=1}^{P_{\overrightarrow{e}_{2}}q_{\overrightarrow{e}_{2}}}(\sigma^{t})'_{\overrightarrow{e}_{2}}\left(\left(\mathbf{U}_{\overrightarrow{e}_{1}}\tilde{\bx}^t_{\overrightarrow{e_1}}\right)_{i},\left(\mathbf{V}_{\overrightarrow{e}_{2}}\tilde{\bx}^t_{\overleftarrow{e_2}}\right)_{i}\right)\delta_{k,l}
\end{align}
where we wrote $\mathcal{J}_{(g^{t}_{\overrightarrow{e}_{2}})^{(k)}}$ the $N \times N$ Jacobian matrix of the function $(g^{t}_{\overrightarrow{e}_{2}})^{(k)} : \mathbb{R}^{N} \to \mathbb{R}^{N}$.
Using \cite{berthier2020state} corollary 2, the Onsager term can be replaced by any estimator based on the asymptotically Gaussian iterates converging, in the high-dimensional limit, to the correct expectation. The adaptation to the graph framework of \cite{gerbelot2021graph} is immediate (see the proof in \cite{berthier2020state} and corresponding comment in \cite{gerbelot2021graph}). Using the permutation invariance of the Gaussian distribution, we can therefore replace each element of the matrix the Onsager term with 
\begin{equation}
    \frac{1}{P_{\overrightarrow{e}_{2}}q_{\overrightarrow{e}_{2}}}\sum_{i=1}^{P_{\overrightarrow{e}_{2}}q_{\overrightarrow{e}_{2}}}(\sigma^{t})'_{\overrightarrow{e}_{2}}\left(\left(\tilde{\bx}^t_{\overrightarrow{e_1}}\right)_{i},\left(\tilde{\bx}^t_{\overleftarrow{e_2}}\right)_{i}\right)\delta_{k,l}
\end{equation}
which amounts to 
\begin{equation}
    \tilde{\mathbf{b}}^{t}_{\overrightarrow{e}_{2}} = b^{t}_{\overrightarrow{e}_{2}}\mathbf{I}_{q_{\overrightarrow{e}_{2}}\times q_{\overrightarrow{e}_{2}}}
\end{equation}
We therefore obtain an exact reformulation of the initial MLAMP iteration with convolutional matrices in terms of a subset (first line of size $P_{\overrightarrow{e}_{l}}\times q_{\overrightarrow{e}_{l}}$ for right oriented edges and $D_{\overrightarrow{e}_{l}}\times q_{\overrightarrow{e}_{l}}$ for left-oriented variables) of the variables of a matrix-valued iteration with dense Gaussian matrices verifying the SE equations. Isolating the aforementioned first lines, recalling that the SE equations prescribes i.i.d. lines in the asymptotically Gaussian fields, we recover that, for any $1 \leqslant l \leqslant L$, the variable $\mathbf{x}_{\overrightarrow{e}_{l}} \in \mathbb{R}^{P_{\overrightarrow{e}_{l}}q_{\overrightarrow{e}_{l}}}$ is composed of $q_{\overrightarrow{e}_{l}}$ copies of block of size $P_{\overrightarrow{e}_{l}}$ with i.i.d. Gaussian elements distributed according to the SE equations \eqref{eq:scalar_se}. The distribution of the variables associated to left-oriented edges is obtained similarly. Note that, from a finite size point of view, the effect of $D_{\overrightarrow{e}_{l}}, P_{\overrightarrow{e}_{l}}$ is different from that of $q_{\overrightarrow{e}_{l}}$ : the former results in subGaussian concentration i.e. exponential in the dimension, while the latter only represents copies (and not i.i.d. samples), and thus only has an averaging effect. This is observed in simulations.

\end{proof}
\subsection{Bayes-optimal MLAMP with random convolutional matrices}
In this section, we specialize the equations obtained in the previous section to the Bayes-optimal MLAMP iteration of the main body of the paper. Several functions are reminded for convenience.
Consider the MLAMP iteration outlined in section \ref{sec:MLAMP}. The scalar updates described in Eq.\eqref{eqn:ml-amp-iterate} can be rewritten as vector-valued updates as follows, for any $t \in \mathbb{N}$, and any $0\leqslant l \leqslant L$:
\begin{align}
    \boldsymbol{\omega}^{(l)}(t) &= \mathbf{W}^{(l)}\hat{\mathbf{h}}^{(l)}(t)-V^{(l)}(t)\mathbf{g}^{(l)}(t-1) \\
    \mathbf{B}^{(l)}(t) &= \left(\mathbf{W}^{(l)}\right)^{\top}\mathbf{g}^{(l)}(t)-\hat{V}^{(l)}(t)\hat{\mathbf{h}}(t).
\end{align}

To define the update functions and terms $V^{(l)},\hat{V}^{(l)}$, the following partition functions were introduced.
\begin{itemize}
\item for $l=1$ \begin{align}
    \mathcal{Z}^{(1)}\left(y,V^{(1)},\omega^{(1)}\right) &= \frac{1}{\sqrt{2\pi V^{(1)}}}\int dzP_{out}^{(1)}(y\vert z)e^{-\frac{(z-\omega^{(1)})^{2}}{2V^{(1)}}}
\end{align}
\item for any  $2\leqslant l \leqslant L-1$ : 
\begin{align}
     \hspace{-3cm}\mathcal{Z}^{(l)}\left(A^{(l-1)},B^{(l-1)},V^{(l)},\omega^{(l)}\right) =  \notag \\
     \frac{1}{\sqrt{2\pi V^{(l)}}}\int dhdzP_{out}^{(l)}(h\vert z)e^{-\frac{1}{2}A^{(l-1)}h^{2}+B^{(l-1)}h}e^{-\frac{(z-\omega^{(l)})^{2}}{2V^{(l)}}} 
\end{align}
\item for $l=L $\begin{align}
     \mathcal{Z}^{(L)}(A^{(L)},B^{(L)}) &= \int dhP_{X}(h)e^{-\frac{1}{2}A^{(L)}h^{2}+B^{(L)}h}
\end{align}
\end{itemize}
We then define the layer-dependent, time-dependent, scalar update functions $f^{(l),t},\tilde{f}^{(l),t}$
\begin{align}
&\forall \thickspace (B,\omega) \in \mathbb{R}^{2} \notag \\
f^{(1),t}(\omega) &= \partial_{\omega}\mbox{log}\mathcal{Z}^{(1)}\left(y,V^{(1)}(t),\omega\right) \\
    f^{(l),t}(B,\omega) &= \partial_{\omega}\mbox{log}\mathcal{Z}^{(l)}\left(A^{(l-1)}(t),B,V^{(l)}(t),\omega\right) \thickspace 2\leqslant l\leqslant L \\
    \tilde{f}^{(l),t}(B,\omega) &= \partial_{B}\mbox{log}\mathcal{Z}^{(l+1)}\left(A^{(l)}(t-1),B,V^{(l+1)}(t-1),\omega\right) \thickspace 1\leqslant l\leqslant L-1 \\
    \tilde{f}^{(L,t)}(B) &= \partial_{ B}\mbox{log}\mathcal{Z}^{(L+1)}\left(A^{(L)}(t-1),B\right),
\end{align}
and their corresponding separable, vector valued counterparts $\mathbf{f}^{(l)},\tilde{\mathbf{f}}^{(l)}$, which leads to the following iteration
\begin{align}
\label{eq:MLAMP_vec}
    \boldsymbol{\omega}^{(l)}(t) &= \mathbf{W}^{(l)}\tilde{\mathbf{f}}^{(l),t}(\mathbf{B}^{(l),t-1},\boldsymbol{\omega}^{(l+1),t-1})-V^{(l)}(t)\mathbf{f}^{(l),t-1}(\mathbf{B}^{(l-1),t-1},\boldsymbol{\omega}^{(l),t-1}) \\
    \mathbf{B}^{(l)}(t) &= \left(\mathbf{W}^{(l)}\right)^{\top}\mathbf{f}^{(l),t}(\mathbf{B}^{(l-1),t},\boldsymbol{\omega}^{(l),t})-\hat{V}^{(l)}(t)\tilde{\mathbf{f}}^{(l),t}(\mathbf{B}^{(l),t-1},\boldsymbol{\omega}^{(l+1),t-1}),
\end{align}
where the Onsager terms $V^{(l),t}$ and $\hat{V}^{(l),t}$ reduce to, using the separability of the update functions,
\begin{align}
    V^{(l),t} = \frac{1}{n_{l}}\sum_{i=1}^{n_{l-1}} \partial_{B}\tilde{f}^{(l),t}(B^{(l),t-1}_{i},\omega^{(l+1),t-1}_{i}) \\
    \hat{V}^{(l),t} = \frac{1}{n_{l}}\sum_{j=1}^{n_{l}}\partial_{\omega}f^{(l),t}(B_{j}^{(l-1),t},\omega_{j}^{(l),t}) = -A^{(l),t}
\end{align}

We now show that the update functions defined above are Lipschitz continuous and increasing, thus ensuring that the integrals are well defined through positivity of the parameters $V,\hat{V}$.
\begin{lemma}
For any $1\leqslant l \leqslant L$, and any $t \in \mathbb{N}$, the functions $f^{(l),t}, \tilde{f}^{(l),t}$ are Lipschitz continuous in $B,\omega$. Furthermore, the functions $f^{(l),t}, \tilde{f}^{(l),t}$ are respectively decreasing in $\omega$ and increasing in $B$. As a consequence, the variance terms $A^{(l),t}$ and $V^{(l),t}$ are strictly positive.
\begin{proof}
Recall the partition function, omitting the layer index since all regularity assumptions are the same for all layers and time indices, 
\begin{align}
 \mathcal{Z}(A, B, V, \omega) & := \frac{1}{\sqrt{2 \pi V}}\int P(h \mid z) \exp\left( B h - \frac{1}{2} A h^2 - \frac{(z-\omega)^2}{2 V } \right) \, dh \, dz
\end{align}
recalling $p(h\vert z) = \int p(\xi)\delta(h-f_{\xi}(z))d\xi$, integrating in $h$ yields
\begin{align}
    \mathcal{Z}(A, B, V, \omega) & := \frac{1}{\sqrt{2 \pi V}}\int P(\xi)\exp\left( B f_{\xi}(z) - \frac{1}{2} A f_{\xi}(z)^2 - \frac{(z-\omega)^2}{2 V } \right) \, d\xi \, dz
\end{align}
Starting with $\tilde{f}$, we can straightforwardly verify the conditions to apply the dominated convergence theorem and differentiate under the integral to obtain
\begin{align}
    &\partial_{B}\tilde{f}(B,\omega) = \partial^{2}_{B} \log \left(\mathcal{Z}(A, B, V, \omega)\right) \notag\\
     &= \frac{1}{(\sqrt{2\pi V}\mathcal{Z}(A, B, V, \omega))^{2}}\bigg(\int P(\xi)f_{\xi}^{2}(z)\exp\left( B f_{\xi}(z) - \frac{1}{2} A f_{\xi}(z)^2 - \frac{(z-\omega)^2}{2 V } \right) \, d\xi \, dz \times \notag\\
     &\int P(\xi)\exp\left( B f_{\xi}(z) - \frac{1}{2} A f_{\xi}(z)^2 - \frac{(z-\omega)^2}{2 V } \right) \, d\xi \, dz - \notag\\
     &\hspace{3cm}\left(\int P(\xi)f_{\xi}(z)\exp\left( B f_{\xi}(z) - \frac{1}{2} A f_{\xi}(z)^2 - \frac{(z-\omega)^2}{2 V } \right) \, d\xi \, dz\right)^{2}\bigg) \geqslant 0
\end{align}
where the positivity comes from the Cauchy-Schwarz inequality and positivity of the term $P(\xi)\exp\left( B f_{\xi}(z) - \frac{1}{2} A f_{\xi}(z)^2 - \frac{(z-\omega)^2}{2 V } \right)$.
Turning to $f$, we complete the square in the variable $h$ to obtain
\begin{align}
    \mathcal{Z}(A, B, V, \omega) & := \frac{\exp\left(\frac{B^{2}}{2A}\right)}{\sqrt{2 \pi V}}\int P(\xi)\exp\left(-\frac{A}{2}\left(f_{\xi}(z)-\frac{B}{A}\right)^{2}\right) \exp\left(- \frac{(z-\omega)^2}{2 V } \right) \, d\xi \, dz
\end{align}
and differentiating under the integral yields
\begin{align}
    &f(B,\omega) = \partial_{\omega} \log \left(\mathcal{Z}(A, B, V, \omega)\right) \\
    &=\frac{1}{V}\left(\frac{\int P(\xi)z\exp\left(-\frac{A}{2}\left(f_{\xi}(z)-\frac{B}{A}\right)^{2}\right) \exp\left(- \frac{(z-\omega)^2}{2 V } \right) \, d\xi \, dz}{\left(\int P(\xi)\exp\left(-\frac{A}{2}\left(f_{\xi}(z)-\frac{B}{A}\right)^{2}\right) \exp\left(- \frac{(z-\omega)^2}{2 V } \right) \, d\xi \, dz \right)}-\omega\right)
\end{align}
where the term $\frac{\int P(\xi)z\exp\left(-\frac{A}{2}\left(f_{\xi}(z)-\frac{B}{A}\right)^{2}\right) \exp\left(- \frac{(z-\omega)^2}{2 V } \right) \, d\xi \, dz}{\left(\int P(\xi)\exp\left(-\frac{A}{2}\left(f_{\xi}(z)-\frac{B}{A}\right)^{2}\right) \exp\left(- \frac{(z-\omega)^2}{2 V } \right) \, d\xi \, dz \right)}$ is the conditional mean of the distribution with density $\frac{ \int P(\xi)\exp\left(-\frac{A}{2}\left(f_{\xi}(z)-\frac{B}{A}\right)^{2}\right) \exp\left(- \frac{(z-\omega)^2}{2 V } \right)d \xi}{\left(\int P(\xi)\exp\left(-\frac{A}{2}\left(f_{\xi}(z)-\frac{B}{A}\right)^{2}\right) \exp\left(- \frac{(z-\omega)^2}{2 V } \right) \, d\xi \, dz \right)}$.
The Lipschitz property is straightforward to verify using the polynomial bound assumption on the activation functions and the inverse exponential factors.
\end{proof}
In the Bayes-optimal MLAMP, see \cite{manoel2017multi}, the planted vectors $\mathbf{w}_{\overrightarrow{e}_{l}}$ are chosen as independently distributed as the asymptotic SE representation of the output of the previous layer, and are therefore Lipschitz transforms of subGaussian random variables, and thus are also subgaussian. Using the permuation invariance of the Gaussian distribution, the quantities $\mathbf{z}_{\overrightarrow{e}_{l}} = \hat{\mathbf{A}}_{\overrightarrow{e}_{l}}$ remain Gaussian. We can therefore apply the result of Lemma \ref{lemma:conv_SE_scalar} to this iteration and obtain that iterates of Eq.\eqref{eqn:ml-amp-iterate} verify the SE equations from Lemma \ref{lemma:conv_SE_scalar} with the corresponding update functions. Furthermore, in the Bayes optimal case, the Nishimori conditions, see e.g. \cite{krzakala2012statistical}, allow to only keep the parameters $\nu_{\overrightarrow{e}_{l}}, \hat{\nu}_{\overleftarrow{e}_{l}}$ to describe the distribution of of the iterates, recovering the equations of Theorem \ref{th:main}. Finally, the rescaling of the variances to go from the factors $\delta_{l}$ to the $\beta_{l}$ of the main can be done by rescaling each non-linearity $f^{t}_{\overrightarrow{e}_{l}}$ by $\sqrt{N/n_{l-1}}$ (and similary for the $f^{t}_{\overleftarrow{e}_{l}}$ with $\sqrt{N/n_{l}}$) as done in \cite{javanmard2013state,berthier2020state}.
\end{lemma}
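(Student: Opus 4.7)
The plan is to reduce every assertion in the lemma to properties of expectations under the tilted Gibbs measure defined by $\mathcal{Z}^{(l)}$. First, using the representation $P^{(l)}(h\mid z)=\int p(\xi)\,\delta(h-f_\xi(z))\,d\xi$ already introduced in the paper, I would rewrite
\[
\mathcal{Z}^{(l)}(A,B,V,\omega)=\frac{1}{\sqrt{2\pi V}}\int p(\xi)\exp\!\left(Bf_\xi(z)-\tfrac12 Af_\xi(z)^2-\tfrac{(z-\omega)^2}{2V}\right)d\xi\,dz,
\]
and check, via the polynomial bound $|f_\xi|\le|b^{(l)}|$ from (A1) together with the Gaussian factor in $z$ and the factor $e^{-Af_\xi(z)^2/2}$ (when $A>0$), that all relevant moments are finite and differentiation under the integral sign is justified.

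Next, for the monotonicity claims I would recognise the relevant second derivatives of $\log\mathcal{Z}$ as covariances under the Gibbs measure $\mu=\mu(A,B,V,\omega)$ induced by the integrand. Setting $h=f_\xi(z)$, one has $\partial_B\tilde f^{(l),t}=\partial_B^2\log\mathcal{Z}^{(l+1)}=\mathrm{Var}_\mu(h)\ge 0$, with strict positivity as soon as the channel is non-degenerate (which is ensured by the noise $\zeta\sim\mathcal{N}(0,1)$ of Definition \ref{dfn:ml-glm}). Dually, a direct calculation expanding the $\omega$-derivative hitting the Gaussian factor yields $\partial_\omega f^{(l),t}=\partial_\omega^2\log\mathcal{Z}^{(l)}=(\mathrm{Var}_\mu(z)-V)/V^2$, so I would reduce the decreasing-in-$\omega$ claim to the variance-contraction inequality $\mathrm{Var}_\mu(z)\le V$.

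For Lipschitz continuity I would bound $\partial_Bf,\partial_\omega f,\partial_B\tilde f,\partial_\omega\tilde f$ uniformly on bounded regions of $(B,\omega)$. Each such derivative is a centred covariance of polynomially bounded observables under $\mu$; the two exponential factors in $\mathcal{Z}$ control all moments uniformly on compacts, continuity in $(B,\omega)$ follows by dominated convergence, and boundedness of the derivatives upgrades to a local Lipschitz constant via the mean-value inequality. Combined with the sign conditions, the Onsager coefficients $V^{(l),t}=(1/n_l)\sum_i\partial_B\tilde f^{(l),t}(B_i,\omega_i)$ and $A^{(l),t}=-(1/n_l)\sum_j\partial_\omega f^{(l),t}(B_j,\omega_j)$ are then strictly positive, closing the proof.

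The main obstacle is the inequality $\mathrm{Var}_\mu(z)\le V$. After completing the square in $h$ and integrating out $\xi$, the marginal density of $z$ takes the form $g(z)\,e^{-(z-\omega)^2/(2V)}$; when $g$ is log-concave the bound is the Brascamp--Lieb inequality, but a general channel function $\phi^{(l)}$ (e.g.\ ReLU, as used in the experiments) need not yield a log-concave $g$. I would circumvent this either by conditioning on $h$ first, so that $z\mid h$ is a mixture of truncated Gaussians of variance at most $V$ whose inter-slice spread is controlled by the Gibbs weighting in $(A,B)$, or, in the Bayes-optimal regime used later in the paper, by invoking the Nishimori identities of \cite{krzakala2012statistical}, which automatically enforce the correct sign of the linear-response coefficients and hence positivity of $A^{(l),t}$ and $V^{(l),t}$.
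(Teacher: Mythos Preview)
Your approach coincides with the paper's: rewrite $\mathcal{Z}$ via $p(h\mid z)=\int p(\xi)\delta(h-f_\xi(z))\,d\xi$, interpret the second $B$-derivative as a variance (the paper phrases this as Cauchy--Schwarz, you as $\mathrm{Var}_\mu(h)$), and handle Lipschitz continuity by bounding moments using the polynomial growth assumption (A1) against the Gaussian and $e^{-Af_\xi^2/2}$ factors. On these points there is no real difference.

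Where you go further than the paper is the $\omega$-monotonicity of $f$. The paper only derives the representation $f=(E_\mu[z]-\omega)/V$ and notes that $E_\mu[z]$ is a conditional mean; it never computes $\partial_\omega f$ nor argues for its sign, so the claim ``$f$ decreasing in $\omega$'' is in fact left unproved in the paper's text. Your identification $\partial_\omega f=(\mathrm{Var}_\mu(z)-V)/V^2$ is correct, and the inequality $\mathrm{Var}_\mu(z)\le V$ is indeed the crux. Your honest flagging of this as the main obstacle is appropriate: it does \emph{not} follow from Cauchy--Schwarz alone, and for a generic multiplicative weight $g(z)$ against a Gaussian it can fail. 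Of your three proposed routes, Brascamp--Lieb is clean when $g$ is log-concave but does not cover channels like ReLU; the ``conditioning on $h$'' sketch is not obviously sufficient (a mixture of laws each with variance $\le V$ can have total variance $>V$); the Nishimori route is the one the paper itself leans on in the surrounding Bayes-optimal discussion, and is the safest way to close the argument for the purposes of the main theorem. In short, your plan matches the paper and is more scrupulous on exactly the point the paper glosses over.
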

\section{Fast MCC-vector Products} \label{sec:fast-mcc-vec}
Here is a simple sketch of an algorithm for multiplying $M \sim \text{MCC}(D, P, q, k)$ with a vector $v \in \mathbb{R}^{Pq}$ that runs in time $O(DP q \log q)$. If $k \gg \log q$, this improves on the runtime required by a simple sparse matrix-vector product. We use Matlab index notation for matrix and vector coordinates, for example $M[i:j, k] = [ M_{rk} \, : \, r = i \ldots j]$, and we write shorthand $M_{ij}$ for $M[i, j]$. 

\begin{algorithm}
\caption{$O(DPq\log q)$ time algorithm for MCC-vector products}\label{alg:mcc-vector-product}
\KwData{matrix $M \sim \text{MCC}(D, P, q, k)$, vector $v \in \mathbb{R}^{Pq}$}
Initialize $s \in \mathbb{R}^{Dq}$ the zero vector\;
\For{$i = 1 \ldots D$}{
    \For{$j = 1 \ldots P$}{
        $C_{ij} \gets M[qi : q(i+1) ,\, qj: q(j+1)]$\;
         
        $\omega_{ij} = C_{ij}[0,\, 0:k]$\;
        
        $\hat{\omega}_{ij} = \text{FFT}(w_{ij})$\; 
        
        $\hat{v}_j = \text{FFT}(v[qj,\,  q(j+1)]$\; 
        
        $\hat{s}_i = \hat{\omega}_{ij} \ast \hat{v}_j$\; 
        
        $s[qi:q(i+1)] = \text{IFFT}(\hat{s}_i)$\;
    }
}
\end{algorithm}

\section{Additional Experiments} \label{sec:addl-expts}

\subsection{Sparse Compressive Sensing} \label{sec:q10-sparse-cs}
We observe empirically that in the sparse compressive sensing task of Figure \ref{fig:sparse-cs}, the relative sizes of $(D, P)$ and $q$ have little impact on the performance of the corresponding AMP iteration. In Figure \ref{fig:q10-sparse-cs}, we show a replica of this figure with $q = 10$ and $P = 10000$. Despite a significant difference between the relative sizes of these parameters, the AMP iterations behave largely the same. 

\begin{figure}[t]
    \centering
    \includegraphics[width=0.48\linewidth]{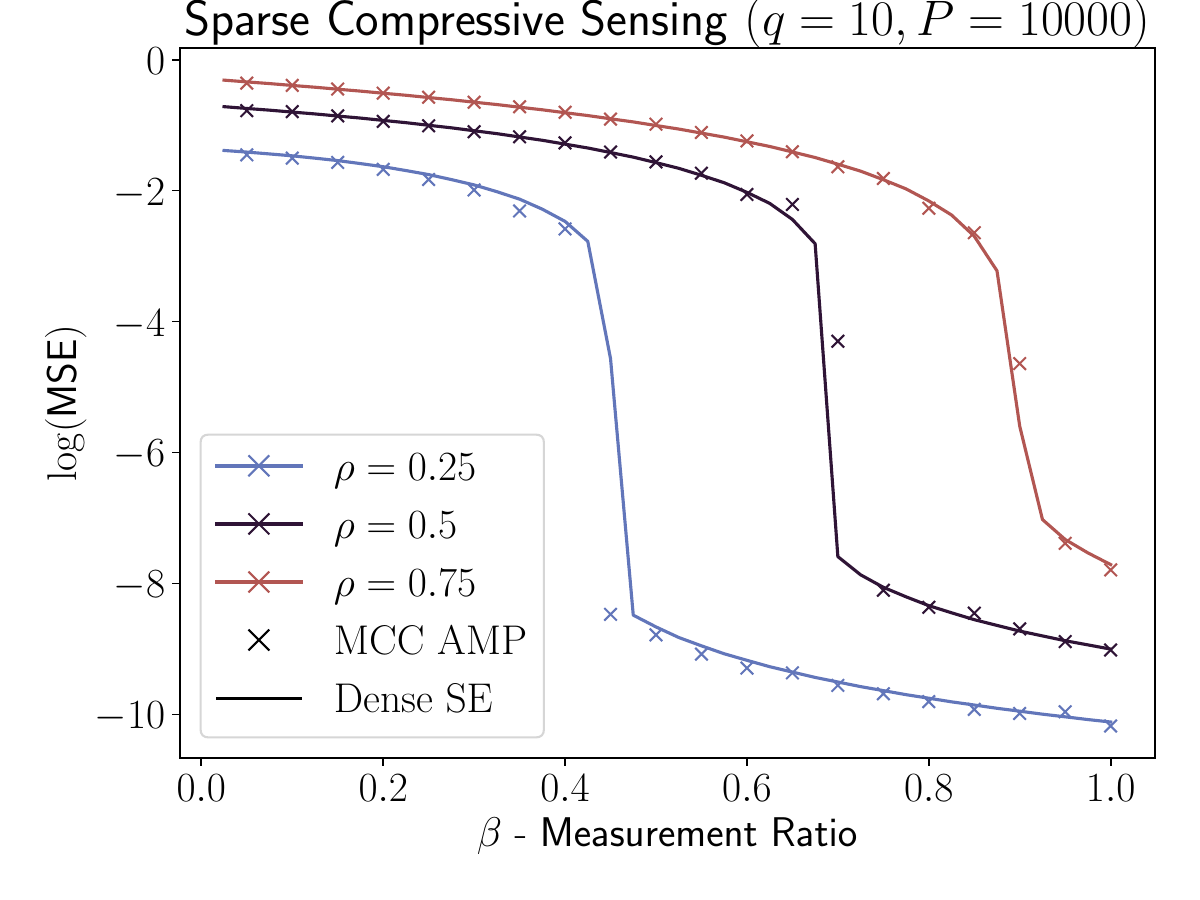}
    \includegraphics[width=0.51\linewidth]{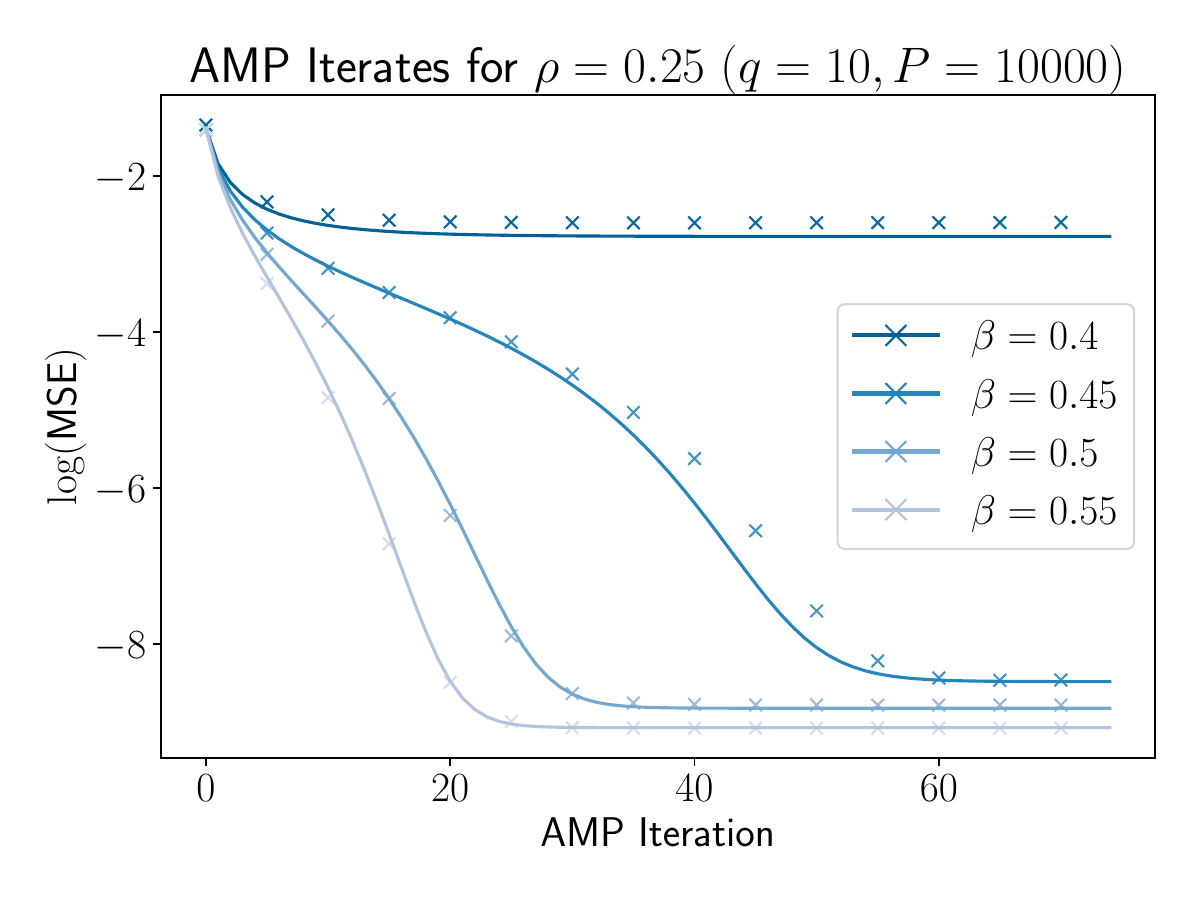}
    \caption{
    Replica of Figure \ref{fig:sparse-cs} for $q=10$ and $P=10000$. 
    (\textbf{left}) Compressive sensing $y_0 = W x_0 + \zeta$ for noise $\zeta_i \sim \mathcal{N}(0, 10^{-4})$ and signal prior $x_0 \sim \rho \mathcal{N}(0, 1) + (1-\rho) \delta(x)$, where $W \in \mathbb{R}^{Dq \times Pq}$ has varying aspect ratio $\beta = D / P$. Crosses correspond to AMP evaluations for $W \sim \text{MCC}(D, P, q, k)$ according to Definition \ref{dfn:mcc}, averaged over 10 independent trials. Lines show the state evolution predictions when $W_{ij} \sim \mathcal{N}(0,1/Pq)$. The system size is $P = 10000$, $q=10$, $k=3$, where $\beta$ and $D = \beta P$ vary. 
    (\textbf{right}) AMP iterates at $\rho = 0.25$ and $\beta$ near the recovery transition. }
    \label{fig:q10-sparse-cs}
\end{figure}

\subsection{Empirical Results for Vector-AMP Algorithms}\label{sec:vamp-results}
We observe that a similar equivalence property as Theorem \ref{th:main} holds for algorithms based on the VAMP framework \cite{schniter2016vector, fletcher2018inference, baker2020tramp}. Previously, state evolution has been proven for such algorithms when their sensing matrices are drawn from a right-orthogonally-invariant ensemble. While the random MCC ensemble does not satisfy this property, we show in Figure \ref{fig:tramp-sparse-cs} a comparison between empirical VAMP performance and the corresponding SE predictions for dense matrices, which are almost identical.

\begin{figure}
 \includegraphics[width=0.48\linewidth]{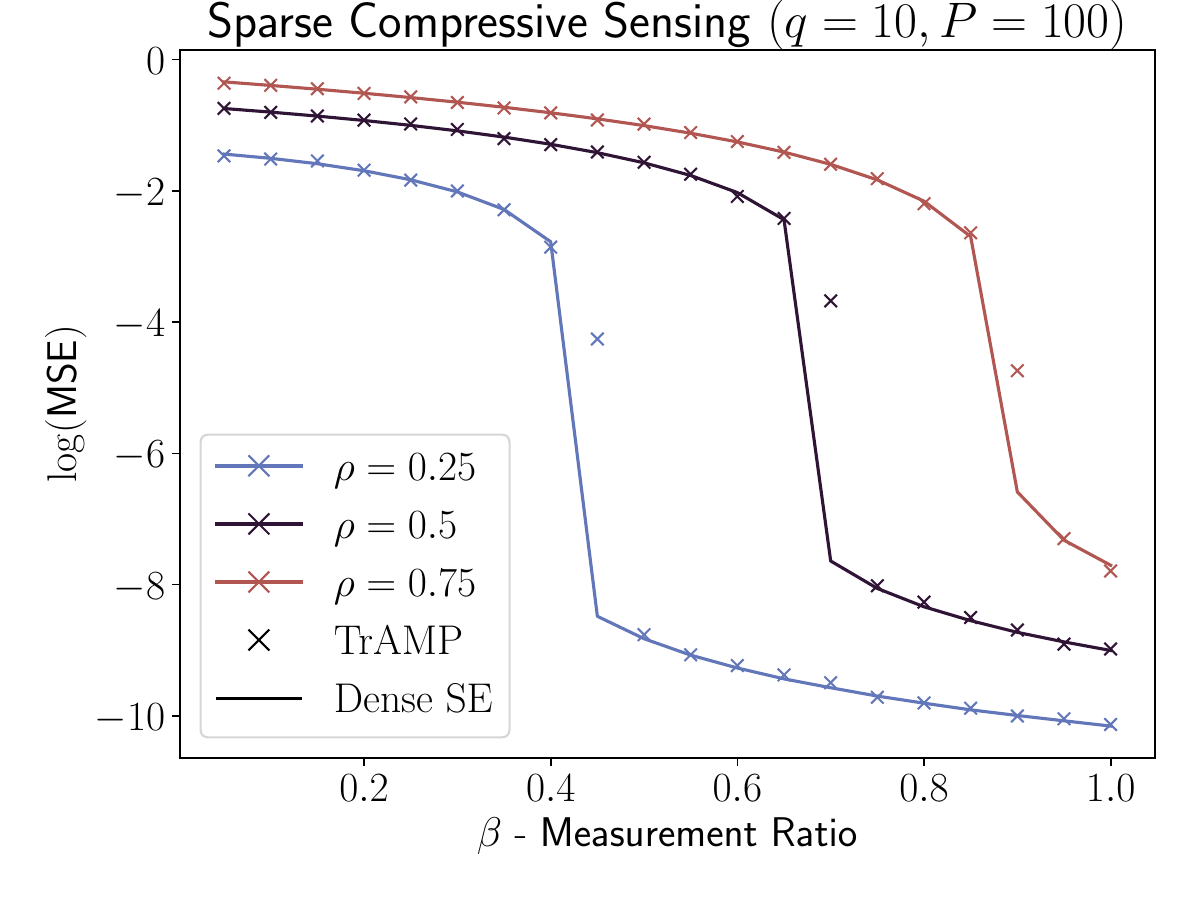}
    \includegraphics[width=0.51\linewidth]{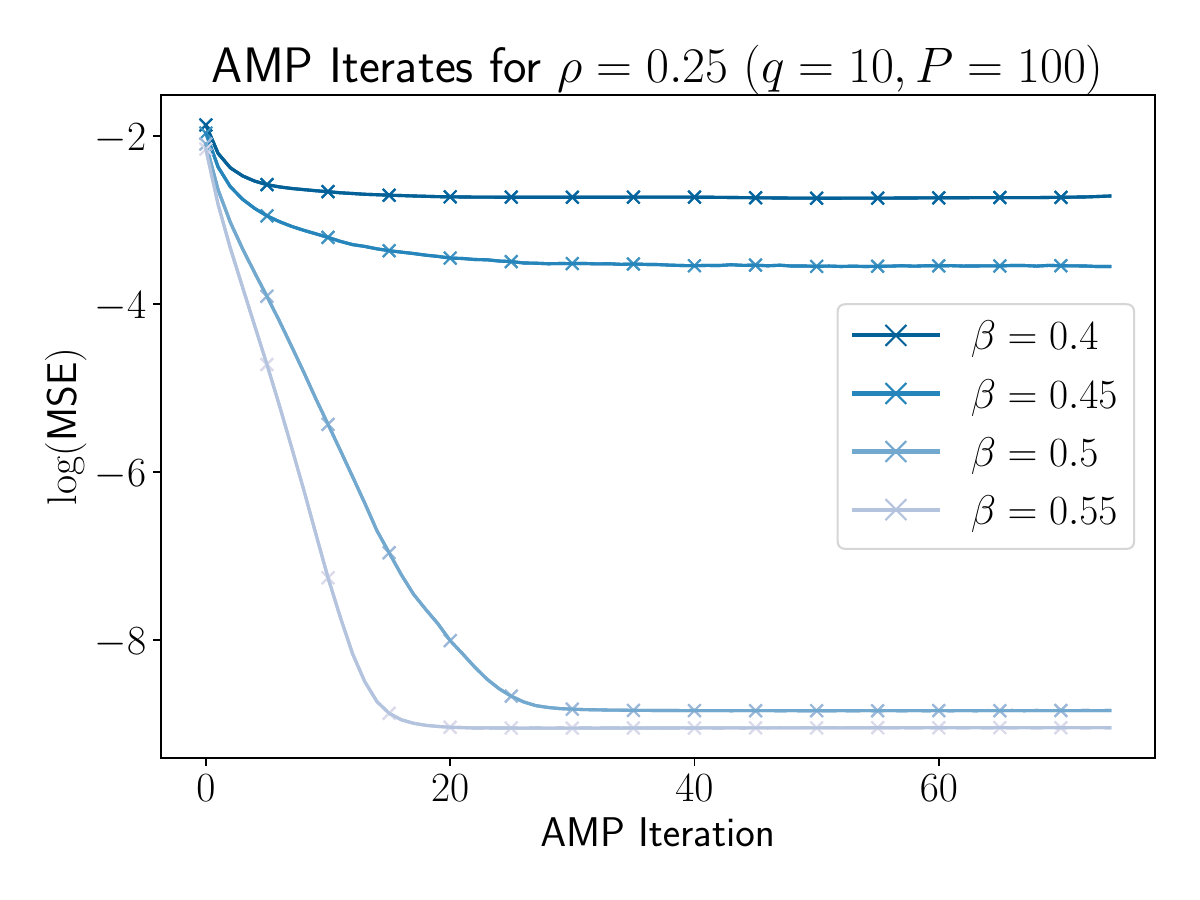}
    \caption{
    Replica of Figure \ref{fig:sparse-cs} using Tree-AMP \cite{baker2020tramp}, a compositional VAMP type algorithm, for $q=10$ and $P=100$. 
    (\textbf{left}) Compressive sensing $y_0 = W x_0 + \zeta$ for noise $\zeta_i \sim \mathcal{N}(0, 10^{-4})$ and signal prior $x_0 \sim \rho \mathcal{N}(0, 1) + (1-\rho) \delta(x)$, where $W \in \mathbb{R}^{Dq \times Pq}$ has varying aspect ratio $\beta = D / P$. Crosses correspond to AMP evaluations for $W \sim \text{MCC}(D, P, q, k)$ according to Definition \ref{dfn:mcc}, averaged over 30 independent trials. Lines show the state evolution predictions when $W_{ij} \sim \mathcal{N}(0,1/Pq)$. The system size is $P = 100$, $q=10$, $k=3$, where $\beta$ and $D = \beta P$ vary. 
    (\textbf{right}) AMP iterates at $\rho = 0.25$ and $\beta$ near the recovery transition. }
    \label{fig:tramp-sparse-cs}
\end{figure}

\section{Structured Convolutions and Non-separable Denoising}\label{app:nonseparable}
Our proof uses a relatively simple version of spatial coupling, leaving avenues for potential generalizations. Spatially coupled sensing matrices typically consist of a block structured matrix whose blocks are i.i.d. Gaussian with different variances, as in (for instance) \cite{krzakala2012statistical,barbier2015approximate}. As a model, consider $\tilde{M}_{\text{sp}}$ of the following form, with variances $\kappa \in \mathbb{R}_+^{q \times q}$,
\begin{align*}
    \tilde{M}_{\text{sp}} = \begin{bmatrix}
    \kappa_{11} A_{11} & \kappa_{12} A_{12} & \hdots & \kappa_{1q} A_{1q}\\
    \kappa_{21} A_{21} & \ddots & & \vdots \\
    \vdots & & & \\
    \kappa_{1q} A_{1q} & \hdots & & \kappa_{qq} A_{qq}
    \end{bmatrix}.
\end{align*}
As a result of Lemma \ref{lem:permutation}, a given MCC matrix $M$ is equivalent to $\tilde{M}$ corresponding to the case where $\kappa$ is a convolutional matrix according to Definition \ref{dfn:c-ensemble}, with filter $\omega = [1\ 1\ \ldots \ 1]\in \mathbb{R}^k$. One avenue to extend our results is to consider general $\tilde{M}$ where $\kappa$ is any circulant matrix. Inverting the permutation lemma, this corresponds to MCC matrices whose convolutional blocks have filters with independent non-isotropic coordinates, as in the following definition, which may be viewed as a simple model for structured convolutional filters. 
\begin{definition}[Independent Gaussian Random Convolutions] \label{dfn:c-diag-ensemble}
Let $\vec{\kappa} = [\kappa_1, \ldots, \kappa_k] \in \mathbb{R}^k_+$ and let $\Sigma = \diag(\vec{\kappa})$. 
Let $q \geq k > 0$ be integers. The Gaussian convolutional ensemble $\mathcal{C}(q, k)$ contains random circulant matrices $C \in \mathbb{R}^{q \times q}$ whose first rows are given by $C_1 = \texttt{Zero-pad}_{q, k}[\omega]$
where $\omega \sim \mathcal{N}(0, \Sigma)$.
\end{definition}
This model is a natural extension of our current setting, which is also amenable to proof techniques designed for spatial coupling. However, because the nonzero coordinates of the sensing matrix are no longer i.i.d., the Bayes-optimal denoising functions corresponding to this problem are non-separable. So, an equivalence theorem analogous to Theorem \ref{th:main} is not expected to hold -- in other words, state evolution in this convolutional model is not expected to reduce to that of a signal model with dense i.i.d. couplings. More generally, multilayer AMP iterations with non-separable non-linearities may be written to compute marginals of posterior distributions involving such functions, and will verify SE equations. However there will be no direct correspondance with the iteration and SE equations of the fully separable case.

\end{document}